\newcommand{\ave}[1]{\left\langle#1\right\rangle}
\newcommand{\ket}[1]{\left|#1\right\rangle}
\newcommand{\bra}[1]{\left\langle#1\right|}
\newcommand{\dketbra}[1]{|#1\rangle\langle#1|}
\newcommand{\be}{\begin{equation}}
\newcommand{\abs}[1]{\left|#1\right|}
\newcommand{\norm}[1]{\left|\left|#1\right|\right|}
\newcommand{\cS}{{\cal S}}
\newcommand{\cF}{{\cal F}}
\newcommand{\cN}{{\cal N}}
\newcommand{\cI}{{\cal I}}
\newcommand{\cL}{{\cal L}}
\newcommand{\cW}{{\cal W}}
\newcommand{\cC}{{\cal C}}
\newcommand{\cM}{{\cal M}}
\newcommand{\cG}{{\cal G}}
\newcommand{\cX}{{\cal X}}
\newcommand{\cY}{{\cal Y}}
\newcommand{\cZ}{{\cal Z}}
\newcommand{\cE}{{\cal E}}
\newcommand{\E}[2]{\mathbb{E}_{#1}#2}
\newcommand{\R}{\mathbb R}
\newcommand{\C}{\mathbb C}
\newcommand{\N}{\mathbb N}
\newcommand{\bx}{{\bf x}}
\newcommand{\by}{{\bf y}}
\newcommand{\bk}{{\bf k}}
\newcommand{\br}{{\bf r}}
\newcommand{\bs}{{\bf s}}
\newcommand{\bd}{{\bf d}}
\newcommand{\bfm}{{\bf m}}
\newcommand{\bxi}{{\boldsymbol \xi}}
\newcommand{\bgamma}{{\boldsymbol \gamma}}
\newcommand{\balpha}{{\boldsymbol \alpha}}
\renewcommand{\det}[1]{{\rm Det}\left(#1\right)}
\newcommand{\tr}[1]{{\rm Tr}\left[#1\right]}
\newcommand{\pdim}[1]{{\rm Pdim}(#1)}
\newcommand{\fdim}[2]{{\rm fat}_{#1}\left(#2\right)}
\renewcommand{\i}{{\rm i}}
\newcommand{\1}{\mathbbm{1}}
\newtheorem{lemma}{Lemma}
\newtheorem{theorem}{Theorem}
\newtheorem{definition}{Definition}
\newtheorem{corollary}{Corollary}
\begin{document}

\title{A learning theory for quantum photonic processors and beyond}

\author{Matteo Rosati}
\address{Dipartimento di Ingegneria Civile, Informatica e delle Tecnologie Aeronautiche, Università Roma Tre, Via Vito Volterra 62, Rome, I-00146, Italy}

\begin{abstract}
We consider the tasks of learning quantum states, measurements and channels generated by continuous-variable (CV) quantum circuits. This family of circuits is suited to describe optical quantum technologies and in particular it includes state-of-the-art photonic processors capable of showing quantum advantage. We define classes of functions that map classical variables, encoded into the CV circuit parameters, to outcome probabilities evaluated on those circuits. We then establish efficient learnability guarantees for such classes, by computing bounds on their pseudo-dimension or covering numbers, showing that CV quantum circuits can be learned with a sample complexity that scales polynomially with the circuit's size, i.e., the number of modes. Our results show that CV circuits can be trained efficiently using a number of training samples that, unlike their finite-dimensional counterpart, does not scale with the circuit depth.
\end{abstract}

\maketitle
\tableofcontents

\section{Introduction}
The last years have seen an incredible advancement in hardware solutions for quantum technologies. In particular, the recent demonstration of a quantum computational advantage via photonic circuits~\cite{Zhong2020,Madsen2022} finally paves the way for the realization of full-fledged quantum information processing with light, a solution that bears intrinsic advantages with respect to other platforms, in terms of scalability, robustness and deployability~\cite{Hoch2021,EliBourassa2021,Wang2020}. At the same time, the increased control of infinite-dimensional quantum states in several other platforms, such as cavity~\cite{Krastanov2015,Eickbusch2021} or mechanical resonators~\cite{Arrangoiz-Arriola2019}, is pushing the boundaries of continuous-variable (CV) quantum information processing beyond photonics. Finally, the increased interplay between qubit and CV platforms~\cite{Andersen2014a,Hastrup2019} spurs the interest into the development of quantum error correction codes~\cite{Michael2016,Noh2020,Tosta2022a} and provides an alternative to more standard approaches for quantum technologies. A combination of the aforementioned events thus marks a renewed surge of interest into CV information processing. 

From a theoretical perspective, the characterization of the information-processing capabilities of quantum devices has been recently subject to a paradigm shift, thanks to the introduction of statistical learning techniques~\cite{Aaronson2007,Cheng2015,Sweke2020a,Caro2020a,Caro2021b}, which underly the success of classical machine learning~\cite{Vapnik1999,Vapnik2000,Anthony1999}. In this approach, one recognizes that a successful use of quantum devices often requires two ingredients: (i) the estimation of quantities of interest about the quantum states or processes running in the device; (ii) the optimization of the device's parameter setup based on the estimated data, in order to maximize the device's performance in a specific task. 
Therefore, in this setting one can characterize the information-processing capabilities of a quantum device by quantifying its sample complexity (SC), i.e., the number of repetitions of an experiment that are required to estimate specific quantities and/or find a good device setup for carrying out a certain task. 

To the best of our knowledge, research in this direction has focused almost exclusively on finite-dimensional systems so far, showing that, surprisingly, one can learn many properties of a $d$-dimensional quantum state using a number of samples that scales only as $\log d$, i.e.,\emph{ linearly in the system size}. Unfortunately, most of these results cannot make any prediction for the case $d=\infty$, leaving open a quite looming question: can CV quantum systems be learned efficiently?

In particular, Aaronson~\cite{Aaronson2007} was the first to prove that it is possible to \emph{learn efficiently} an approximation of an unknown quantum state by estimating the expected values of binary observables, with a SC linear in the number of qubits; instead, the SC turns out to be exponential in the number of qubits for the dual problem of learning an unknown quantum measurement~\cite{Cheng2015}, and a similar scaling is believed to hold for learning general quantum processes~\cite{Aaronson2007,Caro2020a}. By restricting to the class of polynomial-depth quantum circuits, Caro and Datta~\cite{Caro2020a} and Popescu~\cite{Popescu2021} showed that efficient process learnability can be recovered. More recently, Caro et al.~\cite{Caro2021b} computed the SC of circuits when employed as encoders of classical information into the quantum device. 

The only known estimates for CV systems were introduced by~\cite{Arunachalam2021}, which provide energy-dependent SC bounds for \emph{online learning} of Gaussian states, a task which has in general larger SC than the ones treated here.  {\color{blue} Furthermore, after this article was published on a preprint server, three more works appeared providing methods to perform shadow tomography of CV states~\cite{Gandhari2022,Becker2022}. This is a task related with the one we consider here, though more difficult, in the sense that it requires the estimation of $M$ observables with high precision, whereas here we want to obtain estimates which are precise with high probability, on average with respect to a distribution over observables (see~\cite{Aaronson2018} for a detailed description of the differences between these two settings).
Hence one can expect that the sample complexity obtained in these further works is larger than the one given in the present manuscript. Observe also that each of these works quantifies non-Gaussianity in a different way, suitable in each case to the specific methods employed to bound the sample complexity.
A similar reasoning applies to the CV learning no-free-lunch theorem in~\cite{Volkoff2021b}, where the authors  optimize case-dependent cost functions that can be computed by applying a circuit to the target unknown state or gate. In our case, instead, one is able to find an approximation of the measurement statistics obtained by applying random circuits to the target unknown state or gate, which is a more expensive task in principle. }

In this article, we introduce a statistical learning theory for infinite-dimensional quantum systems, studying the learnability of CV quantum states and processes, e.g., those that can be implemented in a photonic processor. We show that the simple class of Gaussian states and operations can be learned efficiently, with a SC scaling quadratically in the number of bosonic modes $n$ (independent degrees of freedom, analogous to the number of qubits) of the system, i.e., also linearly in the system size and without any dependence on cutoffs. Unfortunately, it is well-known that non-Gaussian operations are required to realize a universal CV quantum computer~\cite{Lloyd1999}. Therefore, we consider next a wider class of operations including Gaussian, photocounting and generalized Gaussian (GG) operations, the latter being a non-Gaussian set introduced by Bourassa et al.~\cite{Bourassa2021}, which can approximate several key non-Gaussian resources, e.g., Gottesman-Kitaev-Preskill (GKP), cat and Fock states.
We show that also this class is learnable with a SC quadratic in the number of modes, hence quadratically in the system size, provided that its parameters satisfy certain constraints, scaling with the systems' energy, thus establishing efficient theoretical learnability for a wide range of non-Gaussian CV quantum devices. Interestingly, at variance with finite-dimensional systems, in the CV case it appears that the SC does not directly depend on the circuit's depth, but rather on the degree of non-Gaussianity of the circuit. 

The article is structured as follows: in Sec.~\ref{sec:probSetting} we describe in detail our problem setting, defining two different learning tasks for which our results apply, namely the standard learning an approximation of an unknown state or process, and the more general learning of a near-optimal circuit for tasks with linear loss function in the outcome probabilities, e.g., state discrimination or synthesis (\ref{sec:learningSetting}). We then do a brief recap of CV circuits, including Gaussian components, photodetection and a recently introduced class of non-Gaussian components (\ref{sec:cv}). Finally, we provide a compact presentation of our main results, stating SC bounds for both learning tasks with several possible CV circuit classes that arise from combining states, measurements and channels of different kinds (\ref{sec:summary}). In Sec.~\ref{sec:methods} we present our methods in detail and prove our results: we start with a statement of key theorems in classical statistical learning of probability-valued functions (\ref{sec:classStatLearn}), then proceed by computing complexity measures for three CV circuit classes (\ref{sec:compBounds}) and finally apply these results to the computation of SC bounds for both tasks previously introduced (\ref{sec:learnProofs}, \ref{sec:predProofs}). In Sec.~\ref{sec:conclusions} we discuss the significance of our results and elaborate on future research directions.

\section{Problem setting and results summary}\label{sec:probSetting}

\subsection{Learning quantum circuits from measurement outcomes}\label{sec:learningSetting}
\subsubsection{Learning quantum states, measurements and channels}\label{sec:stateLearn}
Consider a quantum circuit comprising preparation, evolution and measurement. The circuit is described by an input quantum state $\rho$, with $\tr\rho=1$ and $\rho\geq0$, a noisy quantum channel $\Phi$, i.e., a completely positive and trace-preserving map, and a measurement operator $\1\geq M\geq0$~\footnote{Note that, as typical in the learning setting that we consider, one is guaranteed to find an algorithm that approximates the expected value of $M$, i.e., the statistics of a binary-outcome measurement $\{M,\1-M\}$, \emph{on average} over $M$ sampled in a certain sample class $\cS$. Similarly, the statistics of multi-outcome measurements $\{M(m)\}_{m\in\cM}$ can be approximated using the same algorithm and by including all measurements operators $M(m)$ in the sample class $\cS$. If one however is interested in obtaining good approximations of all $P(M(m)|\Phi,\rho)$ at once, then the SC will scale at least linearly in $|\cM|$, as pointed out in~\cite{Aaronson2007} via a union-bound argument. An improvement over this scaling is conceivable only by allowing the algorithm to perform joint measurements directly on the quantum samples or by changing the requirements, e.g., as in shadow tomography~\cite{Aaronson2018}.}. When running the circuit with a given setup $(\rho,\Phi,M)$, the outcome is a bit $b\in\{0,1\}$, {\color{blue} taking the value $b=1$ when $M$ is ``accepted'', which happens with probability $P( M|\Phi,\rho)=\tr{ M\cdot\Phi( \rho)}$, as prescribed by the Born rule. }

Suppose now that $ \rho$ is unknown and we want to approximate it with a state $ \sigma\in\cC$, where $\cC$ is a class of hypotheses. We measure the quality of our approximation by how well the hypothesis $\sigma$ is able to reproduce the statistics obtained from $ \rho$ by applying channels and measurements from a class $\cS$, sampled according to a probability distribution $Q(\Phi, M)$, via the true loss function
\begin{equation}
    L_{\sigma,\rho}(\Phi,M) := \abs{P( M|\Phi, \sigma)-P( M|\Phi, \rho)}.
\end{equation}
{\color{red} Naturally, the quality of the hypotheses will depend on the employed hypothesis class: in statistical learning theory, one typically distinguishes between the \emph{realizable} setting, where the unknown state $\rho\in\cC$, and the more general \emph{agnostic} setting, where $\rho\notin\cC$.  Our results are applicable to both these learning settings; for example, we show that it is possible to find a Gaussian approximation to a non-Gaussian state efficiently. Other constraints on the hypothesis class arise naturally from the theory of CV systems, see below Sec.~\ref{sec:cv} and \ref{sec:summary}.}

\begin{figure}[ht!]
\centering
\includegraphics[width=.8\textwidth,trim={0 2cm 0 4cm},clip]{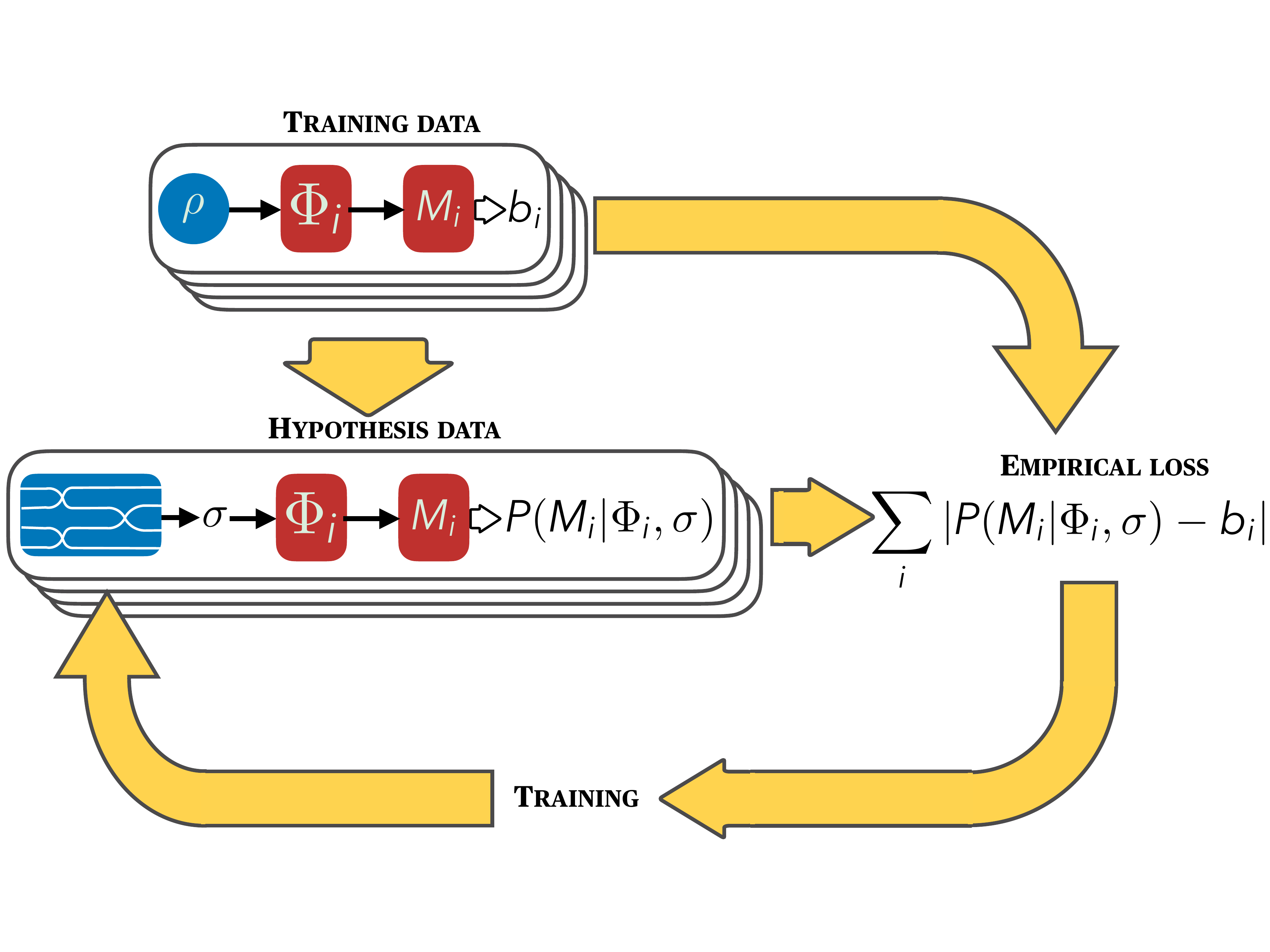}
\caption{Depiction of the simplest learning problem studied for CV architectures. We aim to reproduce the statistics of an unknown state $\rho$ {\color{blue} with respect to random couples of CV channels and measurements $(\Phi_i, M_i)$. For each $i$, the state is passed through channel $\Phi_i$ and measured with a binary measurement $M_i$ obtaining outcome $b_{i}$.} Then the learner produces a hypothesis $\sigma$ by tuning the parameters of a CV circuit, e.g., a photonic chip, and obtains its output statistics on the  channels and measurements of the training set. A suitable empirical loss evaluates how well $\sigma$ approximates $\rho$ on the training set. This can be used to optimize $\sigma$ iteratively.}\label{fig:1}
\end{figure}

In practice, as shown in Fig.~\ref{fig:1}, we will rely on a finite training set of binary outcomes $b_{i}\in\{0,1\}$ obtained by running the circuit several times, each with a different sample channel $\Phi_{i}$ and measurement operator $ M_{i}$, for $i=1,\cdots,T$. {\color{blue} Therefore, the best we can do is trying to find a candidate $\sigma\in\cC$ with small loss on each training instance $(\Phi_i,M_i,b_i)$, as measured by the empirical loss function
\begin{equation}
    \hat L_{\sigma,\rho}(\Phi_i,M_i,b_i) := \abs{P( M_i|\Phi_i, \sigma)-b_i}.
\end{equation}
A learning theorem then guarantees that the chosen hypothesis $\sigma$ can approximate the statistics of the unknown $\rho$ in terms of the true loss function, even on unseen instances, a property called generalization. 
The minimum number of samples $T_0$ that guarantees generalization will depend on the specific precision and error thresholds requested and on the chosen hypothesis class. We thus arrive at the following operational learnability statement:
\begin{definition}\label{def:stateLearn} (State learnability)
Let $\rho$ be an unknown quantum state, $\cC$ a set of hypothesis quantum states (not necessarily containing $\rho$) and $\cS$ a set of channel-measurement couples. Suppose there is an algorithm that takes as input $T\geq T_{0}$ training samples $\{(\Phi_{t}, M_{t},b_{t})\}_{t=1}^{T}$ according to the unknown distribution $Q(\Phi, M)\cdot P( M|\Phi, \rho)$, and outputs a hypothesis $ \sigma\in\cC$ that is $\eta$-good on each training instance, i.e.,
\begin{equation}
    \hat L_{\sigma,\rho}(\Phi_i,M_i,b_i) < \eta\; \forall i=1,\cdots,T.
\end{equation}
We say that an approximation to $\rho$ in $\cC$ with respect to sample channels and measurements in $\cS$ can be learned if,  with probability at least $1-\delta$ with respect to the training sample, i.e., $\{(\Phi_{t}, M_{t},b_{t})\}_{t=1}^{T}\sim Q^T$, it holds 
\begin{equation}\label{eq:learnabilityStates}
    \Pr_{(\Phi,M)\sim Q}(L_{\sigma, \rho}(\Phi,M) < \eta + \gamma)\geq 1-\epsilon,
\end{equation}
for all $\eta, \gamma,\epsilon,\delta>0$ and distributions $Q$ on $\cS$. 
The sample complexity (SC) is the minimum number of samples $T_{0}(\delta,\epsilon,\gamma, \cC,\cS)$ sufficient for learnability \eqref{eq:learnabilityStates}.
\end{definition}
This statement is a generalization of~\cite{Aaronson2007} to the agnostic case, where $\rho\in\cC$ and $\eta$ cannot be decreased arbitrarily. In the simpler realizable case, one can bring $\eta$ as close to zero as desired, thus converging to the true quantum state $\rho\in\cC$.}

Note that the only role of the quantum circuit in this setting is to provide binary outcomes $b_{i}$ according to the Born rule. Hence, one can easily interchange the role of states, channels and measurements as hypotheses and samples, obtaining similar definitions of
\begin{enumerate}[(i)]
\item \emph{Measurement learnability}~\cite{Cheng2015}, where the objective is to approximate an unknown measurement, $\cC=\{ M\}$ is a class of measurement operators and $\cS=\{( \rho,\Phi)\}$ is a class of states and channels;
\item \emph{Channel learnability}~\cite{Aaronson2007,Caro2020a}, where the objective is to approximate an unknown channel, $\cC=\{\Phi\}$ is a class of channels and $\cS=\{( \rho, M)\}$ is a class of states and measurement operators. 
\end{enumerate}
In Sec.~\ref{sec:methods} we provide SC bounds for these learning problems in the case where $\cC$ and $\cS$ are classes of CV states, channels and measurements. 

\subsubsection{Learning circuits for discrimination and synthesis}\label{sec:circuitTrain}
So far we have considered the problem of finding a quantum state (or, more generally, circuit) that approximates the measurement statistics of an unknown target state (resp. circuit), based on training samples received from the latter. 
{\color{blue} On the other hand, many problems in quantum information theory have the objective of finding a family of quantum circuits that carries out a certain task with optimal performance, on average with respect to different possible realizations, for example:
\begin{enumerate}[(i)]
\item \emph{Discrimination}: there is a set of states $\{\rho_{x}\}_{x\in\cX}$ and a probability distribution $Q(x)$ over $\cX$. We receive an unknown sample $\rho_{x}$, with $x\sim Q$, and the task is to find a multi-outcome measurement $\{ M_{x}\}_{x\in \cX}$, with operators in a desired measurement class $\cC$, that guesses $x$ with maximum average success probability, i.e., 
\begin{equation}
    P_{\rm succ}(\cC):=\max_{\{M_x\}_{x\in\cX}:M_x\in\cC \forall x}\E{x\sim Q}{\tr{M_{x}\rho_{x}}}. 
\end{equation}
In particular, within the CV setting considered in this paper, the discrimination of Gaussian states with non-Gaussian resources plays a major role in quantum optical communication~\cite{Rosati2017,Fanizza2020b,Bilkis2021a}.
Here we will focus on a variational approach to the problem, where the measurement class can be equivalently described as an information-processing channel, represented by a noisy quantum circuit $\Phi$ in a given channel class $\cC$, followed by a rank-$1$ projective measurement in a fixed basis $\{\dketbra{x}\}_{x\in X}$, whose outcome $x$ corresponds to guessing for the state $\rho_x$. The optimization then yields an optimal circuit within the chosen class, with average success probability
\begin{equation}\label{eq:discrimination_as_channel_optimization}
    P_{\rm succ}(\cC):=\max_{\Phi\in\cC}\E{x\sim Q}{\bra{x}\Phi(\rho_{x})\ket{x}}. 
\end{equation}
Note that, by a Naimark-dilation argument, this setting can also include POVM measurements realizable by embedding the states in a larger Hilbert space. Furthermore, one can also consider higher-rank projections or coarse-grainings of the measurement outcomes by convex combination of the rank-$1$ measurement operators, i.e., $M_x = \sum_{y\in\cX} p(y|x) \dketbra{y}$ with fixed combination coefficients.

This approach is extremely flexible, e.g., it can accomodate also channel discrimination problems. In this case, the randomness can be attributed to the channel itself, i.e., the training set is constructed by sampling from a set of channels $\{\Phi_x\}_{x\in \cX}$ that we want to discriminate, labelled by $x\sim Q$. We can optimize the input state and output measurements in a certain class $\cC=\{(\rho,\{M_x\}_{x\in\cX})\}$, with the aim of maximizing the average success probability:
\begin{equation}
    P_{\rm succ}(\cC) = \max_{(\rho,\{M_x\})\in\cC} \E{x\sim Q}{\tr{M_{x}\Phi_x(\rho)}}.
\end{equation}
\item \emph{Pure-state synthesis and embedding problems}: starting from a noisy input state sampled from a set $\{\rho_x\}_{x\in X}$ via a classical random variable $x\sim Q$, we want to produce a corresponding pure target state $\{\dketbra{\psi_{x}}\}_{x\in \cX}$ for each $x$. We can do so via a variational noisy quantum circuit, represented by a quantum channel $\Phi$ from a desired class $\cC$. The circuit is chosen to maximize the average fidelity within the class, i.e., 
\begin{equation}\label{eq:ave_fidelity_synthesis}
    F(\cC):=\max_{\Phi\in\cC}\E{x\sim Q}{\bra{\psi_{x}}\Phi(\rho_{x})\ket{\psi_{x}}}.
\end{equation}
Note that, by properly choosing the input and target states, as well as the variational circuit class, one can adapt this setting to several quantum information processing tasks, e.g., distillation and error correction problems. For example, in a distillation problem one starts with a mixed state $\rho_x=\sigma_x^{\otimes k}$, where $\sigma_x$ is not maximally entangled of which $k$ copies are available, and wants to convert it, via an LOCC channel $\Phi$, into a pure maximally entangled target state $\ket{\psi_x}$ (similarly, one can think of distillation in other resource theories, e.g., that of coherence~\cite{Diaz2020}). The random variable $x$ represents different preparations of the mixed state and different target states. The figure of merit \eqref{eq:ave_fidelity_synthesis} quantifies the average distillation fidelity. Instead, in error correction, the states $\rho_x$ might represent the result of a noisy quantum computation, depending on a classical input $x$, and the objective is to find a channel $\Phi$ that can correct the errors, mapping the states to the ideal output of the computation $\ket \psi_x$. The average error-correction fidelity is then \eqref{eq:ave_fidelity_synthesis} too.
Finally, by setting the input state $\rho_x=\dketbra{x}$ as a computational-basis encoding of the random variable $x$, we can optimize $\Phi$ as an embedding map that tries to encodes the classical variable $x$ into a target quantum $\ket{\psi_x}$  via a complex circuit, with average encoding fidelity given again by \eqref{eq:ave_fidelity_synthesis}.
\end{enumerate}

\begin{figure}[ht!]
\centering
\includegraphics[width=.9\textwidth,trim={0 2cm 0 4cm},clip]{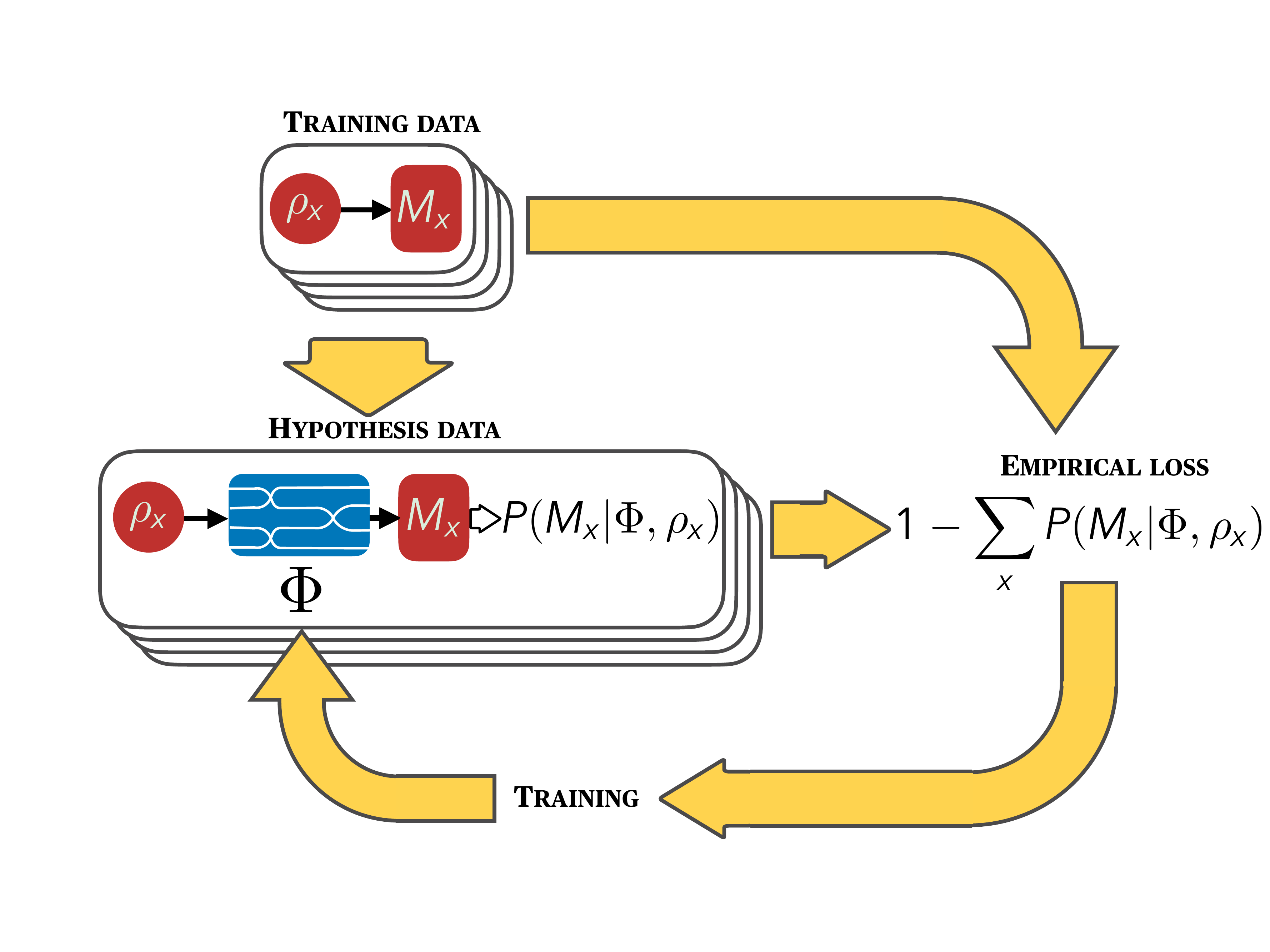}
\caption{{\color{blue}Depiction of the discrimination task learning problem studied for CV architectures. We receive random samples $(\rho_x,M_x,b_x)$ from a class of input states $\{\rho_x\}$ to be discriminated with a fixed measurement device $\{M_x\}$. Then the learner produces a hypothesis channel $\Phi$ that processes the input $\rho_x$ before measurement, and obtains its output statistics. A suitable empirical loss evaluates how well $\Phi$ is able to discriminate the states in the training set. This can be used to optimize $\Phi$ iteratively.}}\label{fig:2}
\end{figure}

Traditional methods in the literature tackle these problems from a standard optimization perspective, both analytical and numerical~\cite{Nakahira2021a,Nakahira2021,Sidhu2021a,Becerra2013a,DiMario2022,Rosati17c,Assalini2011}. We show instead that it is possible to formulate a learning problem where a quantum circuit is trained to maximize the average acceptance probability based on samples $x\in\cX$, as shown in Fig.~\ref{fig:2}. The learner receives training samples of the form $(\rho_x,M_x)$, where $x\sim Q$, and tries to classify each input by assigning a bit value $b=1$ with probability $P(M_x|\Phi,\rho_x)$, where $\Phi$ represents a noisy quantum circuit in a given class $\cC$, or $b=0$ otherwise. The target decision rule is trivial, in the sense that each couple $(\rho_x,M_x)$ is always accepted, and the loss is measured by 
\begin{equation}
    L_{\Phi}(\rho,M) := (1-P(M|\Phi,\rho)).
\end{equation}
A learning guarantee then provides conditions for the identification of a hypothesis $\Phi$ whose average loss is close to the minimum in the class, given by
\begin{equation}
    \min_{\Phi\in\cC}\E{x\sim Q}{L_\Phi(\rho_x,M_x)}=1-\max_{\Phi\in\cC}\E{x\sim Q}{P(M_x|\Phi,\rho_x)},
\end{equation}
which is equivalent to finding a hypothesis with close-to-optimal success probability or fidelity in the above-described discrimination and synthesis problems. Clearly, in most applications the learning problem will be intrinsically agnostic, since an ideal circuit that classifies each state correctly, i.e., such that $P(M_x|\Phi,\rho_x)=1$ for all $x\in \cX$ might not exist at all. This brings us to the following operational task learnability statement:

\begin{definition}\label{def:circuitTrain} (Optimal task learnability)
Let $\cS=\{(\rho_x,M_x)\}_{x\in \cX}$ be a set of (not necessarily known) state-measurement couples associated with a quantum-information-processing task, $Q$ a distribution on $\cX$ and $\cC$ a set of  hypothesis quantum channels. Suppose there is an algorithm that takes as input a training sample $\{(\rho_{x_t}, M_{x_t})\}_{t=1}^{T}$ of size $T\geq T_{0}$ according to the unknown distribution $Q(x)$, and outputs a hypothesis $ \Phi\in\cC$ that minimizes the cumulative empirical loss on the sample, i.e.,
\begin{equation}
    \sum_{t=1}^T L_{\Phi}(\rho_{x_t},M_{x_t}).
\end{equation}
We say that a near-optimal circuit in $\cC$ for the task represented by  $\{(\rho_x,M_x)\}_{x\in X}$ can be learned  if,  with probability at least $1-\delta$ with respect to the training sample, i.e., $\{(\rho_{x_t}, M_{x_t})\}_{t=1}^{T}\sim Q^T$, it holds 
\begin{equation}\label{eq:learnabilityTask}
    \E{x\sim Q}{L_\Phi(\rho_x,M_x)} - \inf_{\Gamma\in\cC}\E{x\sim Q}{ L_{\Gamma}(\rho_x,M_x)} \leq \epsilon
\end{equation}
for all $\epsilon,\delta>0$ and distributions $Q$ on $\cX$. 
The sample complexity (SC) is the minimum number of samples $T_{0}(\delta,\epsilon, \cC,\cS)$ sufficient for learnability \eqref{eq:learnabilityTask}.
\end{definition}
Thanks to the similarity of the induced hypothesis classes employed in Definitions~\ref{def:stateLearn} and \ref{def:circuitTrain}, in Sec.~\ref{sec:methods} we are able to derive SC bounds also for the latter learning task, in the case of CV quantum circuits. Previous combinatorial-dimension bounds can be combined with our results to obtain SC bounds for information-processing tasks also in finite-dimensions~\cite{Aaronson2007,Cheng2015,Caro2020a}.
}

\subsection{Continuous-variable quantum information}\label{sec:cv}
A CV bosonic quantum system can be described as a quantum harmonic oscillator with position and momentum operators $\xi_{1}, \xi_{2}$ (also called quadratures), satisfying the canonical commutation relations $[\xi_{1}, \xi_{2}]=i\hbar$, and Hamiltonian $ H := \frac{h\nu}{2} (\xi_{1}^{2} +  \xi_{2}^{2})$, where $\nu$ is the oscillator's natural frequency, $h$ is Planck's constant and $\hbar=\frac{h}{2\pi}$ (for simplicity, we henceforth set $h\nu=1=\hbar$~\footnote{This amounts to measuring energy in units of $h\nu$ and time in units of $1/(2\pi\nu)$.}).  The oscillator can model, for example, the state of a mechanical resonator or a travelling electromagnetic pulse with fixed polarization, wavevector and frequency. 
A single CV quantum system is called mode and it plays an analogous role to that of a qubit for two-dimensional quantum systems. A system of $n$ modes has a vector of quadrature operators $\bxi:=( \xi_{1},  \xi_{2}, \cdots,  \xi_{2n-1},  \xi_{2n})^{T}$, were $T$ is the transpose~\footnote{\color{red} Given the limited use of quadrature operators in the article, we use greek letters to represent operator-valued quantities and latin letters to represent complex-valued quantities.}. 

A compact way to describe CV systems is via their Wigner function, a quasi-probability distribution defined for every operator $ O$ of the Hilbert space as
\be
W_{ O}(\br) := \int_{\R^{2n}} \frac{d^{2n}\bs}{(2\pi)^{n}}\, e^{-\i\bs^{T}\Omega\br}\,\tr{ O \cdot  D(\bs)^{\dag}},
\end{equation}
where $ D(\bs) := \exp(\i\bs^{T}\Omega\bxi)$ is the displacement operator and $\Omega:=\bigoplus_{i=1}^{n}\left(\begin{array}{cc}0&1\\-1&0\end{array}\right)$ is the symplectic form. Note that, if $O$ is Hermitian, its Wigner function is real-valued; if additionally $ O = {\rho}$ is a normalized quantum state, then its Wigner function is normalized as well, i.e., $\int_{\R^{2n}} d^{2n}\br W_{\rho}(\br) = 1$. 

As observed in Sec.~\ref{sec:learningSetting}, our central object of interest will be the outcome probability of a CV quantum circuit, which can be easily determined via integration of the Wigner functions of the final circuit state $\rho$ and the measurement operator $M$:
\be
P(M|\rho)=\tr{M\rho}=\int_{\R^{2n}} d^{2n}\br W_{M}(\br) W_{\rho}(\br).
\end{equation}

Finally, key quantities in the analysis of CV systems are the first and second moments of the quadratures, i.e., the mean-values vector $\bfm \in\R^{2n}$ and covariance matrix $V\in{\rm Sym}_{2n}(\R)$\footnote{In the article we use the following notation for sets of square matrices of order $2n$ with real coefficients (and similarly for complex coefficients): ${\rm Mat}_{2n}(\R)$ is the set of all such matrices, ${\rm Sym}_{2n}(\R)$ is the subset of symmetric matrices, ${\rm Sp}_{2n}(\R)$ is the subset of symplectic matrices.} with components
\be
m_{i}:=\tr{ \xi_{i} \rho},\quad V_{i,j}:=\frac12\tr{( \xi_{i}  \xi_{j} +  \xi_{j}  \xi_{i})\rho}.
\end{equation}
Note that the covariance matrix of a quantum state must respect the uncertainty principle $V+\i\Omega\geq0$, implying $V\geq0$.

\subsubsection{Gaussian circuits}
The simplest class of CV circuits comprises only Gaussian states, channels and measurements. Indeed, any element of this class can be experimentally realized using lasers, optical instrumentation and photodetectors; furthermore, a programmable integrated-photonic chip including all such elements is well within reach of current technology~\cite{Madsen2022,Wang2020,EliBourassa2021}. 
Gaussian states $\rho_{\bfm,V}$ on $n$ modes are completely determined by their mean $\bfm\in\R^{2n}$ and covariance matrix $V\in{\rm Sp}_{2n}(\R)$: their Wigner function is Gaussian with the same mean and covariance matrix $V$, i.e.,
\be
W_{ \rho_{\bfm,V}}(\br)=\cG_{\bfm,V}(\br):=\frac{e^{-\frac12(\br-\bfm)^{T}V^{-1}(\br-\bfm)}}{\sqrt{\det{2\pi V}}}.
\end{equation}
Similarly, general-dyne Gaussian measurement operators $ M_{\bfm,V}$ have Wigner function $W_{ M_{\bfm,V}}(\br)=\cG_{\bfm,V}(\br)$, the most common ones being homo- and hetero-dyne measurements (the latter corresponding to $V=\frac{\1}{2}$ and arbitrary $\bfm$). 
Finally, a noisy Gaussian channel is determined by a displacement vector $\bd\in\R^{2n}$ and two matrices $X\in{\rm Mat}_{2n}(\R)$, $Y\in{\rm Sym}_{2n}(\R)$ satisfying the constraint $Y+\i\Omega\geq X\Omega X^{T}$, which amounts to preserving the uncertainty principle. The action of a Gaussian channel $\Phi_{\bd,X,Y}$ preserves Gaussian states, transforming their mean and covariance as follows:
\begin{equation}\label{eq:output_Gaussian_channel}
\bfm \mapsto \bfm_{\rm out}:=X\bfm +\bd,\quad V \mapsto V_{\rm out}:=XVX^{T}+Y.
\end{equation}

Therefore, the outcome probability of a Gaussian circuit is
\begin{equation}\label{eq:g}
P_{\rm g}(M_{\bfm',V'}|\Phi_{\bd,X,Y},\rho_{\bfm,V})=\cG_{\bfm_{\rm out},V_{\rm out}+V'}(\bfm').
\end{equation}

\subsubsection{Photodetection}\label{subsubsec:gpDef}
Non-Gaussian resources are needed to realize universal quantum computation with CV systems. One such resource is photodetection, whose single-mode measurement operators $M_{k}:=\dketbra{k}$ correspond to projections onto Fock states $\ket k$, i.e., states with a fixed number $k$ of photons (or excitations) such that $ H\ket k = k +\frac12$. If we allow the detection of up to $K$ photons in a single mode, an $n$-mode photodetection measurement is hence described by a vector of outcomes $\bk\in\{0,\cdots,K\}^{n}$ as $M_{\bk}:=\dketbra{\bk}=\bigotimes_{i=1}^{n}\dketbra{k_{i}}$. 

The outcome probability of a photodetection measurement on a Gaussian circuit is
\begin{equation}\label{eq:f}
P_{\rm gp}(M_{\bk}|\Phi_{\bd,X,Y},\rho_{\bfm,V})=(2\pi)^{n}\cG_{\bfm_{\rm out},V_{\rm out}+\frac\12}(0) H_{\bk}^{(\tilde V)}(\tilde\bfm),
\end{equation}
with $H_{\bk}$ a multivariate Hermite polynomial~\cite{Rahman2017} defined as 
\be
H_{\bk}^{(V)}(\bfm):=\cG_{0,V}(\bfm)^{-1}\prod_{i=1}^{n}\left(-\frac{\partial}{\partial \bm_{2i-1}}\right)^{\bk_{i}}\left(-\frac{\partial}{\partial \bm_{2i}}\right)^{\bk_{i}}\cG_{0,V}(\bfm),
    \end{equation}
while $\tilde V\in{\rm Mat}_{2n}(\R)$ and $\tilde\bfm\in\R^{2n}$ are quadratic functions of $V_{\rm out}$ and $\bfm_{\rm out}$~\cite{Dodonov1994}. In the following, we refer to Gaussian plus photodetection (GP) circuits as those obtained by combining Gaussian channels with photodetection measurements, possibly coarse-grained. The latter term refers to a post-processing of outcomes that lumps together different photocounting events $\bk$ with probabilities $q_{\bk}\geq0$ such that $\sum_{\bk}q_{\bk}\leq1$, resulting in a measurement operator $M_{q}:=\sum_{\bk}q_{\bk}\dketbra{\bk}$ and clearly
\begin{equation}
P_{\rm gp}(M_{q}|\Phi_{\bd,X,Y},\rho_{\bfm,V})=\sum_{\bk}q_{\bk}P_{\rm gp}(M_{\bk}|\Phi_{\bd,X,Y},\rho_{\bfm,V}).
\end{equation}

\subsubsection{Generalized Gaussian circuits}
More generally, Bourassa et al. \cite{Bourassa2021} introduced for simulation purposes a class of non-Gaussian circuits that we call generalized Gaussian (GG). An arbitrary GG state is defined by a possibly infinite set of complex-valued coefficients, means and covariances $\{c_{i}\in\C,\bfm_{i}\in\C^{2n},V_{i}\in{\rm Mat}_{2n}(\C)\}_{i\in\cI}$. The Wigner function of a GG state $ \rho_{\{c,\bfm,V\}_{\cI}}$ is then given by a complex combination of complex-valued Gaussians:
\begin{equation}\label{eq:GG}
W_{ \rho_{\{c,\bfm,V\}_{\cI}}}(\br):=\sum_{i\in\cI} c_{i} \cG_{\bfm_{i},V_{i}}(\br),
\end{equation}
where we omitted the explicit dependence on $i$ of all the parameters on the left-hand side, for simplicity.
Notice that, while the total Wigner function must be real, since $\rho$ is Hermitian, its components instead are in general complex-valued and do not correspond to physically valid quantum states. Nevertheless we require that ${\rm Re} V_{i}\geq0$ for all $i\in\cI$, so that all Gaussian components are bounded, and $\sum_{i\in\cI}V_{i}+\i\Omega\geq0$, so that the full covariance matrix respects the uncertainty principle. Furthermore, the overall Wigner function must be normalized, implying $\sum_{i\in\cI}c_{i}=1$. 
{\color{blue} Furthermore, we stress that this class is far from a simple mathematical abstraction. For example, it includes typical non-Gaussian states, e.g., cat states, or arbitrary-precision approximations theoreof, e.g., Fock or GKP states, which can all be written as complex superpositions of pure Gaussian states. We refer the reader to~\cite{Bourassa2021} for a detailed discussion.

Similarly, GG measurement operators $M_{\{c,\bfm,V\}_{\cI}}$ have a Wigner function like \eqref{eq:GG}, with the only difference that in general it is sub-normalized, i.e., $\sum_{i\in\cI}c_{i}\leq1$.
As for the case of GG states, this class of measurements includes projections on the most relevant non-Gaussian states or approximations thereof.} 

Finally, we can perform a GG channel on an arbitrary input state $\rho$ by introducing ancillary modes in a non-Gaussian state, performing a Gaussian channel on the input plus ancillary modes, and finally performing a non-Gaussian measurement on the ancillary modes, possibly followed by another Gaussian channel conditioned on the measurement outcome. The overall action of a GG channel on a Gaussian state is determined by a set of complex-valued coefficients, displacement vectors and matrices $\{c_{i}\in\C,\bd_{i}\in\C^{2n},X_{i},Y_{i}\in{\rm Mat}_{2n}(\C)\}_{i\in\cI}$ as follows:
\be
\Phi_{\{c,\bd,X,Y\}_{\cI}}: \rho_{\bfm,V} \mapsto \rho_{\{c,\bfm_{\rm out},V_{\rm out}\}_{\cI}},
\end{equation}
with $\sum_{i}c_{i}=1$ in order to guarantee the trace-preserving property.

Therefore, the outcome probability of a GG circuit is
{\begin{align}\label{eq:gg}
&P_{\rm gg}(M_{\{c',\bfm',V'\}_{\cI'}}|\Phi_{\{c'',\bd,X,Y\}_{\cI''}},\rho_{\{c,\bfm,V\}_{\cI}})\\
&=\sum_{i\in\cI,j\in\cI',k\in\cI''} c_{i} c_{j}' c_{k}''\, \cG_{\bfm_{\rm out}^{(ijk)},V_{\rm out}^{(ik)}+V_{j}'}(\bfm_{j}'),\nonumber
\end{align}}
where $\bfm_{\rm out}^{(ijk)}:=X_{k}\bfm_{i}+\bd_{k}$, $V_{\rm out}^{(ik)}:=X_{k}V_{i}X_{k}^{T}+Y_{k}$, and the sum runs over all indices defining the GG state, channel and measurement, which will be left implicit in the rest of the article.

\subsection{Results summary}\label{sec:summary}
We are now ready to state our main result: we establish that quantum states generated by $n$-mode CV circuits can be learned efficiently in $n$, for all the practically relevant circuit classes that can be constructed with components from Sec.~\ref{sec:cv}:
\begin{theorem}\label{thm:statelearnedot}
Let $\rho$ be an unknown quantum state, $\cC_{\rm g, gg}$ the hypothesis classes of Gaussian and GG states, and $\cS_{\rm g, gp, gg}$ the hypothesis classes of Gaussian, GP and GG measurements and channels on an $n$-mode CV architecture.
Then it is possible to learn an $\epsilon$-approximation of $\rho$ inside $\cC$ using samples from $\cS$ (in the sense of Def.~\ref{def:stateLearn}) with SC polynomial in $n$ and $\epsilon^{-1}$. 
Explicit SC upper bounds for each of the above classes are given in Table~\ref{tab:states}, hiding the dependence on $\nu, \delta$ and up to sub-leading $\log$ factors.
\end{theorem}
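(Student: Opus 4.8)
The plan is to recognize that the learnability guarantee of Def.~\ref{def:stateLearn} is, at its core, a uniform-convergence statement over the class of probability-valued functions
\[
  \cF_{\cC,\cS} := \bigl\{\, (\Phi,M)\mapsto P(M|\Phi,\sigma)\;:\;\sigma\in\cC \,\bigr\},
\]
indexed by the hypothesis $\sigma\in\cC$ and evaluated at the sample points $(\Phi,M)\in\cS$. Both the true loss $L_{\sigma,\rho}$ and the empirical loss $\hat L_{\sigma,\rho}$ are $1$-Lipschitz functions of the single value $P(M|\Phi,\sigma)$ (for fixed $\rho$, resp.\ fixed outcome $b$), so the induced loss class inherits the combinatorial complexity of $\cF_{\cC,\cS}$. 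I would therefore first invoke the classical learning theorem of Sec.~\ref{sec:classStatLearn}, which turns a bound on the pseudo-dimension $\pdim{\cF_{\cC,\cS}}$ or on the covering numbers of $\cF_{\cC,\cS}$ into an SC bound scaling like $\gamma^{-2}\bigl(\pdim{\cF_{\cC,\cS}}+\log\delta^{-1}\bigr)$, up to logarithmic factors and the dependence on $\epsilon$. This reduces Theorem~\ref{thm:statelearnedot} to the purely structural task of bounding these complexity measures for each circuit class in Table~\ref{tab:states}, which is exactly the content of Sec.~\ref{sec:compBounds}.

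For the Gaussian class $\cC_{\rm g}$ the computation should be cleanest. By \eqref{eq:g}, every $f_\sigma\in\cF_{\cC,\cS}$ equals a single Gaussian $\cG_{\bfm_{\rm out},V_{\rm out}+V'}(\bfm')$, i.e.\ the exponential of a quadratic form built from the $O(n^2)$ real parameters $(\bfm,V)$ defining $\sigma$, normalized by a determinant. I would bound $\pdim{\cF_{\cC_{\rm g},\cS}}$ by appealing to parametric complexity bounds for function families computed from their parameters with a bounded number of arithmetic operations and exponentials---in the spirit of Goldberg--Jerrum for the algebraic part and Karpinski--Macintyre for the exponential. Since the parameter count is $O(n^2)$ and the defining formula has size $\mathrm{poly}(n)$, this yields $\pdim{\cF_{\cC_{\rm g},\cS}}=\mathrm{poly}(n)$ and hence SC polynomial in $n$ and $\gamma^{-1}$, with no dependence on any cutoff.

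For the GP and GG classes I expect covering-number bounds to be the more practical route, since the probabilities \eqref{eq:f} and \eqref{eq:gg} involve multivariate Hermite polynomials and finite-or-infinite complex superpositions of Gaussians, which escape the algebraic templates used above. The strategy here is metric: bound the sensitivity of $P(M|\Phi,\sigma)$ to its defining parameters in a suitable norm, restrict those parameters to the physically admissible and energy-bounded domain imposed by the constraints $\mathrm{Re}\,V_i\geq0$, $\sum_{i}V_i+\i\Omega\geq0$ and $\sum_i c_i=1$, and then lift a covering of that bounded domain to a covering of $\cF_{\cC,\cS}$ through the Lipschitz estimate. The number of Gaussian components $|\cI|$---the degree of non-Gaussianity---enters the covering number at this stage, whereas the circuit depth does not appear at all, which is the qualitative phenomenon emphasized in the introduction.

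The main obstacle I anticipate is exactly this GP/GG step: making the magnitude and Lipschitz estimates uniform and, crucially, \emph{polynomial in $n$} rather than exponential. The Hermite factors in \eqref{eq:f} and the cross-terms $\cG_{\bfm_{\rm out}^{(ijk)},V_{\rm out}^{(ik)}+V_j'}(\bfm_j')$ in \eqref{eq:gg} can grow sharply with the photon cutoff, the mode number, and the spread of the covariance matrices, so without the energy-type constraints collected in Table~\ref{tab:states} the covering numbers would diverge. Demonstrating that these constraints are precisely what tames the growth---thereby decoupling the SC from the circuit depth while exposing its dependence on the non-Gaussianity degree---is the technical heart of the argument, and the place where the careful analytic estimates of Sec.~\ref{sec:compBounds} will do the real work.
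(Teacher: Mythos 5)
Your overall architecture is the right one and matches the paper's: reduce Definition~\ref{def:stateLearn} to a uniform-convergence statement for the induced class $\cF_{\cC,\cS}$ of probability-valued functions, invoke the agnostic p-concept theorems of Sec.~\ref{sec:classStatLearn}, and push the whole burden onto pseudo-dimension or covering-number bounds for each circuit class. For the Gaussian class your route differs from the paper's: you invoke Goldberg--Jerrum/Karpinski--Macintyre-type parametric bounds to handle the exponential, whereas the paper avoids the exponential entirely by taking logarithms, splitting $\log\cF_{\rm g}$ into a quadratic-form class plus a log-determinant class (Lemmas~\ref{lemma:monot} and \ref{lemma:sumPdim}), and applying the polynomial-parametrization bound of Lemma~\ref{lemma:poly}. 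Both work, but the paper's decomposition yields the sharper $O(n^2\log n)$; the Karpinski--Macintyre route typically costs extra powers of the parameter count. For GP your instinct to switch to a metric/covering argument is misdirected: the Hermite factors $\sum_{\bk}q_{\bk}H^{(\tilde V)}_{\bk}(\tilde\bfm)$ are themselves polynomials of degree $O(nK)$ in the $O(n^2)$ hypothesis parameters, so Lemma~\ref{lemma:poly} applies directly and gives $O(n^2\log(nK))$ with no energy constraint at all --- exactly why Table~\ref{tab:states} has no constant $B$ in that cell.

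The genuine gap is in your GG step. Covering the physically admissible parameter domain and lifting through a Lipschitz estimate makes the covering number scale with the number of Gaussian components $|\cI|$ (you say as much: ``the number of Gaussian components enters the covering number at this stage''), but $|\cI|$ is allowed to be countably infinite in the GG class, so that bound is vacuous precisely where the theorem is interesting. The paper's key move is structural, not metric: rewrite $P_{\rm gg}$ as $B_2$ times a proper convex combination of products of a bounded exponential class and a trigonometric class, then apply the convex-hull covering lemma (Lemma~\ref{lemma:convHull}) and the product lemma (Lemma~\ref{lemma:covProd}), so that the final bound depends only on $B_{1,2,3}$ and on the pseudo-dimension of a \emph{single-component} class, never on $|\cI|$. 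You also miss two further necessary ingredients: (i) the channel and measurement coefficients $\{c'_j\},\{c''_k\}$ must be held fixed, because the convex-combination weights $p_{ijk}$ may depend only on the hypothesis parameters and not on the inputs; and (ii) since Theorem~\ref{thm:agnostic_pfunction_learning} is stated in terms of fat-shattering dimension, a covering-number bound cannot simply be ``plugged in'' --- one has to reopen the proof of the underlying uniform-convergence result and substitute the covering number at the point where Eq.~\eqref{eq:bartLearn} appears, which is what produces the $\epsilon^{-4}B^2$ scaling in Table~\ref{tab:states}.
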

\begin{table}[!ht]
\centering
\begin{tabular}{|c|c|c|c|}
\cline{3-4}
 \multicolumn{2}{c|}{} & \multicolumn{2}{|c|}{hypothesis states} \\
\cline{3-4}
 \multicolumn{2}{c|}{} & $\cC_{\rm g}$ & $\cC_{\rm gg}$ \\
\hline
\multirow{3}{6em}{\centering Sample channels and measurements} & $\cS_{\rm g}$ & $O(n^{2} \epsilon^{-2}\log n)$ & $O(n^{2} \epsilon^{-4}B^{2})$\\
\cline{2-4}
 & $\cS_{\rm gp}$ & $O(n^{2} \epsilon^{-2} \log(n K))$ & unknown \\
\cline{2-4}
 & $\cS_{\rm gg}$ & $O(n^{2} \epsilon^{-4}B^{2})$ & $O(n^{2} \epsilon^{-4}B^{2})$\\
\hline
\end{tabular}\caption{Sample complexity of learning Gaussian or GG states with Gaussian, GP or GG channels and measurements. Here $n$ is the number of modes of the CV circuit, $\epsilon$ the error, $K$ the photon-number cutoff and $B$ a constant restricting the $GG$ class.}\label{tab:states}
\end{table}

Here $K$ is the maximum photon-number cutoff for photodetection measurements, while $B$ is a constant restricting the GG class (see~\ref{sec:ggCovNum}). 
Note that the same SC bounds of Table~\ref{tab:states} hold for channel and measurement learning, i.e., swapping the roles of $\cS$ and $\cC$, except for one notable difference: photodetection measurement learning with Gaussian states and channels has SC scaling exponentially with the number of modes, $O(K^{n})$ (see \ref{sec:gpPdim}).

Similarly, we establish learnability of CV circuits for discrimination and synthesis tasks with polynomial encoding functions:
\begin{theorem}\label{thm:circuitTrainTot}
Let $\cC$ be a class of $n$-mode CV quantum channels and $(\rho_{\bgamma(x)},M_{\bgamma(x)})_{x\in\cX}$ the states and measurements describing a discrimination or synthesis task, where $\bgamma(x)$ is a vector of encoding functions that are polynomial in $x$ of order at most $\ell$. In particular, let $\cC_{\rm g,\rm gp, \rm gg}$ be the classes of Gaussian, GP and GG circuit families.

Then it is possible to learn an $\epsilon$-optimal channel in $\cC$ for the task described by $(\rho_{\bgamma(x)},M_{\bgamma(x)})_{x\in\cX}$ (in the sense of Def.~\ref{def:circuitTrain}) with SC polynomial in $n$, $\epsilon^{-1}$, $\ell$. Explicit SC upper bounds are given in Table~\ref{tab:tasks}.
\end{theorem}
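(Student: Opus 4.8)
The plan is to reduce the optimal-task learning problem of Definition~\ref{def:circuitTrain} to a uniform-convergence statement over the channel class $\cC$ and then to control that convergence with the same combinatorial-dimension machinery that underlies Theorem~\ref{thm:statelearnedot}. Since the loss $L_\Phi(\rho_x,M_x)=1-P(M_x|\Phi,\rho_x)$ is affine in the outcome probability and bounded in $[0,1]$, any empirical-risk minimizer over $\cC$ achieves true loss within $\epsilon$ of $\inf_{\Gamma\in\cC}\E{x\sim Q}{L_\Gamma(\rho_x,M_x)}$ as soon as the empirical and true expected losses agree to within $\epsilon/2$ uniformly over $\Phi\in\cC$. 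The classical learning theorem of Sec.~\ref{sec:classStatLearn}, applied to the probability-valued function class
\begin{equation}
\cF_\cC:=\left\{x\mapsto P(M_{\bgamma(x)}|\Phi,\rho_{\bgamma(x)}):\Phi\in\cC\right\},
\end{equation}
then turns this requirement into a sample-complexity bound governed by the pseudo-dimension (for $\cC_{\rm g},\cC_{\rm gp}$) or the covering number (for $\cC_{\rm gg}$) of $\cF_\cC$.

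The central task is therefore to bound the complexity of $\cF_\cC$, and this is where the order $\ell$ of the encoding enters. First I would write the outcome probability explicitly as a function of the classical label $x$ and of the channel parameters, substituting the polynomial encodings $\bgamma(x)$ into Eqs.~\eqref{eq:g}, \eqref{eq:f} and \eqref{eq:gg}; e.g. in the Gaussian case $P=\cG_{\bfm_{\rm out}(x),V_{\rm out}(x)+V'(x)}(\bfm'(x))$ with $\bfm(x),V(x),\bfm'(x),V'(x)$ polynomial of degree at most $\ell$ in $x$ and $\bfm_{\rm out},V_{\rm out}$ affine in the channel parameters. For $\cC_{\rm g}$ and $\cC_{\rm gp}$ I would then invoke the parametric pseudo-dimension bounds for families defined by a bounded number of arithmetic operations, exponentials, and comparisons (of Goldberg--Jerrum / Karpinski--Macintyre type), noting that the $O(n^2)$ channel parameters together with the substitution of a degree-$\ell$ encoding raise the effective degree of the defining (Pfaffian) expressions only polynomially in $n$ and $\ell$. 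For $\cC_{\rm gg}$ I would instead reuse the covering-number estimate controlled by the constant $B$ from Sec.~\ref{sec:ggCovNum}, composing the Lipschitz dependence of the Gaussian terms on $\bgamma(x)$ with the degree-$\ell$ encoding to keep the covering number polynomial in the relevant quantities.

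Inserting these complexity estimates into the classical learning bound yields $T_0=\text{poly}(n,\epsilon^{-1},\ell)$, reproducing the entries of Table~\ref{tab:tasks}; the explicit exponents follow the same pattern as Theorem~\ref{thm:statelearnedot}, with the pseudo-dimension classes contributing an $\epsilon^{-2}$ rate and the covering-number (GG) analysis an $\epsilon^{-4}$ rate, and with the photon cutoff $K$ and the constant $B$ surfacing for the GP and GG classes respectively.

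The step I expect to be the main obstacle is exactly this complexity bound for $\cF_\cC$: the outcome probabilities are not polynomials but involve matrix inverses, determinants and a Gaussian exponential, so Warren-type polynomial counting cannot be applied directly and I must instead track how the degree-$\ell$ encoding propagates through these non-polynomial operations within the Pfaffian and covering-number frameworks, ensuring the final bound is polynomial rather than exponential in both $n$ and $\ell$. The GG case compounds this difficulty, since there the infinite complex superposition of Gaussians must be controlled by $B$ simultaneously with the encoding.
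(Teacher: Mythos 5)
Your reduction to a uniform-convergence statement and then to a combinatorial-dimension bound is the right skeleton (it is what Theorem~\ref{thm:genPred} packages), but two things go wrong. First, you treat the encoding $\bgamma$ as fixed and attached to the samples, so your hypothesis class $\cF_\cC$ is parametrized only by the $O(n^2)$ channel parameters. In the paper's proof the encoding is itself part of the hypothesis: one learns the coefficients of the degree-$\ell$ polynomials $\gamma_i(x)$ along with the circuit, i.e., the relevant class is $\cF_{\rm g,gp,gg}\circ\cE_\ell$ with $\cE_\ell$ the class of \emph{all} degree-$\ell$ polynomial encodings. This is precisely where the factor $\ell$ in Table~\ref{tab:tasks} comes from: each of the $O(n^2)$ circuit variables becomes a linear function of $\ell$ new learnable coefficients, giving $O(\ell n^2)$ parameters and hence $\pdim{\cF\circ\cE_\ell}=O(\ell n^2\log n)$. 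In your formulation the degree of the encoding only affects the \emph{inputs}, and input degree plays no role in parametric pseudo-dimension bounds (Goldberg--Jerrum included), so your argument cannot reproduce the $\ell$-dependence of the stated bounds; it proves a different, encoding-fixed statement.

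Second, the step you flag as the main obstacle --- that the outcome probabilities involve determinants, matrix inverses and Gaussian exponentials, so polynomial counting does not apply directly --- is exactly the step the paper has already dispatched in Theorems~\ref{thm:pdimMeas} and \ref{thm:photocount}: take logarithms, use invariance of the pseudo-dimension under strictly monotone post-composition (Lemma~\ref{lemma:monot}) and its behaviour under sums of classes (Lemma~\ref{lemma:sumPdim}) to reduce to the polynomially parametrized classes $\cF_{\rm e},\cF_{\rm d},\cF_{\rm p}$, then apply Lemma~\ref{lemma:poly}. The proof of Theorem~\ref{thm:circuitTrainTot} simply reruns that decomposition for $\cF_{\rm e,d,p}\circ\cE_\ell$, tracking that each original variable becomes a degree-one polynomial in $\ell$ fresh parameters. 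No Pfaffian or Karpinski--Macintyre machinery is needed, and leaving that step unresolved leaves the proposal incomplete. Finally, the GG entry of the table requires restricting to encodings that map $x$ to GG parameters with \emph{fixed} combination coefficients (so that Theorem~\ref{theorem:covNumBound} applies), a caveat your covering-number sketch omits.
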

\begin{table}[!ht]
\centering
\begin{tabular}{|c|c|c|}
\cline{3-3}
 \multicolumn{2}{c|}{} & \multicolumn{1}{|c|}{Sample complexity} \\
\hline
\multirow{3}{3em}{\centering Circuit class} & $\cC_{\rm g}$ & $O(\ell n^{2}\epsilon^{-2}\log n)$\\
\cline{2-3}
 &$\cC_{\rm gp}$ & $O(\ell n^{2}\epsilon^{-2}\log(n K))$ \\
\cline{2-3}
 &$\cC_{\rm gg}$ & $O(\ell n^{2}\epsilon^{-4}B^{2})$\\
\hline
\end{tabular}\caption{Sample complexity of learning Gaussian, Gaussian plus photodetection, or GG circuits for discrimination and synthesis tasks. Here $n$ is the number of modes of the CV circuit, $\epsilon$ the error, $K$ the photon-number cutoff, $B$ a constant restricting the $GG$ class and $\ell$ the order of the encoding polynomials.}\label{tab:tasks}
\end{table}

\section{Sample complexity of CV circuits}\label{sec:methods}
\subsection{Bounding the SC of probability-valued functions}\label{sec:classStatLearn}
Our results build on the study of learnability for probability-valued function classes, pioneered in the works \cite{Kearns1994,Alon1997,Bartlett1998} and more recently employed in the quantum setting by \cite{Aaronson2007}. {\color{blue} The following theorems bound the number of samples required to approximate a probabilistic relation between two random variables, in terms of a suitable effective dimension of the function class employed for approximation. Thus, by computing such dimension for the various classes induced by processing and measuring CV circuits, as done in the next sections, we can bound their sample complexity.}

We start by introducing two quantities that measure the effective dimension of a function class:
{\color{blue}
\begin{definition} (Fat-shattering- and pseudo-dimension)
Let $\cF:=\{f:x\in\cX\mapsto f(x)\in\R\}$ be a class of real-valued functions on a set $\cX$, take $\gamma>0$ and a sequence of $k$ inputs $\bx\in\cX^{k}$. We say that $\bx$ is $\gamma$-shattered by $\cF$ if there exists a sequence of thresholds $\balpha\in\R^{k}$ such that, for each sequence of $k$ bits $(b_1,\cdots,b_k)\in\{0,1\}^k$ determining a binary classification pattern of the inputs $\bx$ there exists a concept $f\in\cF$ that reproduces such pattern with thresholds $\balpha$ and safety margin $\gamma$, i.e.,
\begin{equation}\label{eq:gamma_fat_shattering}
f(x_{i})\geq\alpha_{i}+\gamma\;\forall i:b_i=1\text{ and }f(x_{i})\leq\alpha_{i}-\gamma\;\forall i:b_i=0.
\end{equation}

The $\gamma$-fat-shattering dimension, $\fdim{\cF}{\gamma}$, is the largest $k$ such that there exists a sequence $\bx\in\cX^{k}$ that is $\gamma$-shattered by $\cF$.  If there is no such $k$, then $\fdim{\cF}{\gamma}=\infty$.

The pseudo-dimension is 
\begin{equation}
\pdim{\cF}:=\lim_{\gamma'\rightarrow0}\fdim{\cF}{\gamma'}\geq\fdim{\cF}{\gamma}
\end{equation}
 for all $\gamma>0$.
\end{definition}
In order to prove our results we rely on two theorems from the classical statistical learning theory of real-valued functions. In particular, a line of works originating from Kearns and Schapire~\cite{Kearns1994} (see also~\cite{Alon1997,Bartlett1998,Anthony2000}) focused on obtaining learning guarantees for so-called probabilistic concepts (p-concepts); these are functions $f:x\mapsto[0,1]$ that can be assimilated to the probability distribution of a random bit $b$ conditioned on a random input $x$, i.e., $b=1$ with probability $f(x)$ or $b=0$ otherwise. In this setting, Kearns and Schapire identified two distinct learning problems, given a hypothesis class $\cF$ of p-concepts: (i) \emph{learning a model of probability}, where one is interested in finding a hypothesis $h\in\cF$ that is close to the true p-concept $f$ on each input $x$; and (ii) the simpler problem of \emph{learning a decision rule}, where one is interested in finding a hypothesis $h\in\cF$ that minimizes the mis-classification error on average with respect to the bit $b$ and the input $x$.

Aaronson ~\cite{Aaronson2007} discussed both these learning frameworks, linking them with p-concepts originating from quantum states and measurements via the Born rule. Here we observe that problem (i), i.e., learning a model of probability, is a proxy for state learnability (see Definition~\ref{def:stateLearn}), and it requires an extension of Aaronson's results to the agnostic case:
\begin{theorem}(Agnostic p-concept learning) \label{thm:agnostic_pfunction_learning}
Let $\cX$ be a sample space, $Q$ a distribution on $\cX$ and $\cF\subseteq[0,1]^\cX$ a hypothesis class. Consider an unknown p-concept $f:\cX\mapsto[0,1]$ (not necessarily in $\cF$) and a training set $\{(x_t,b_t)\}_{t=1}^T$, where $x_t\sim Q$ and $b_t=1$ with probability $f(x_t)$ or $b_t=0$ otherwise. Suppose that we choose a hypothesis $h\in\cF$ with small quadratic loss on each element of the training set, i.e., $(h(x_t)-b_t)^2<\eta$ for all $t=1,\cdots,T$. Then with probability at least $1-\delta$ over the sample it holds
\begin{equation}\label{eq:agnostic_pfunction_learning_guarantee}
    \Pr_{x\sim Q}((h(x)-f(x))^2<\eta+\gamma)\geq 1-\epsilon
\end{equation}
for all $\eta,\gamma,\epsilon,\delta>0$, provided that 
\begin{equation}\label{eq:sample_complexity_agnostic_pfunction_learning}
    T\geq T_0(\epsilon,\gamma,\delta,\cF) = O\left(\frac1\epsilon\left(d\log^2\frac{d}{\gamma \epsilon} + \log\frac{1}{\delta}\right)\right),
\end{equation}
where $d = {\rm fat}_{\cF}\left(\frac{\gamma}{8}\right)$, 
\end{theorem}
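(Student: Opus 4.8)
The plan is to prove this as a realizable-type generalization bound ``with a margin,'' adapted to the fact that the training labels are noisy bits drawn from the underlying p-concept $f$ rather than clean values. The conceptual bridge is the quadratic-loss identity: since $b\in\{0,1\}$ and $\mathbb{E}_{b\mid x}[b]=f(x)$,
\[
\mathbb{E}_{b\mid x}\big[(h(x)-b)^2\big]=(h(x)-f(x))^2+f(x)(1-f(x))\ \ge\ (h(x)-f(x))^2,
\]
so the observable quadratic loss exceeds the target quantity $(h(x)-f(x))^2$ only by an $h$-independent ``noise floor'' $f(x)(1-f(x))$. In this language the hypothesis $(h(x_t)-b_t)^2<\eta$ on every training point says that $h$ has \emph{zero violations} of the threshold $\eta$ on the sample, while the $\gamma$-gap to the target threshold $\eta+\gamma$ plays the role of a classification margin; it is exactly this gap that will let a finite cover of $\cF$ distinguish the two thresholds and that produces the favourable $1/\epsilon$ (rather than $1/\epsilon^2$) dependence.

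For the uniform-convergence engine I would run the standard two-step reduction. First, symmetrization against a ghost sample of size $T$ reduces the probability of the bad event -- that some $h\in\cF$ has no sample violation at $\eta$ yet assigns mass $>\epsilon$ to $\{x:(h(x)-f(x))^2\ge\eta+\gamma\}$ -- to a combinatorial statement comparing violation counts on a double sample. Second, I would cover $\cF$ restricted to the double sample in the $\ell_\infty$ metric at scale $\gamma/8$; by the covering-number bound of Alon et al.~\cite{Alon1997} this requires $\log N_\infty(\gamma/8,\cF)=O\big(\fdim{\cF}{\gamma/8}\,\log^2(T/\gamma)\big)$ representatives. Since $z\mapsto(z-b)^2$ is Lipschitz on $[0,1]$, a $\gamma/8$-cover of $\cF$ refines to a cover of the loss class fine enough to separate the thresholds $\eta$ and $\eta+\gamma$ with room to spare, which is precisely why the fat-shattering dimension enters at the scale $\gamma/8$, i.e. $d=\fdim{\cF}{\gamma/8}$. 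A union bound over the cover together with a realizable-style counting estimate for each fixed representative then bounds the bad-event probability by $N_\infty\cdot 2^{-\Omega(\epsilon T)}$; setting this below $\delta$ and solving for $T$ gives the claimed $T_0$.

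The step I expect to be the main obstacle is reconciling the noisy bit-labels with the continuous target $f$ while preserving both the margin structure and the $1/\epsilon$ rate. A vanilla large-margin bound assumes a fixed $\{0,1\}$-valued label, whereas here the training bit $b_t\sim\mathrm{Bernoulli}(f(x_t))$ differs from the quantity $(h-f)^2$ against which the guarantee is measured; one must therefore transfer the zero-violation property first to the conditional-mean loss $\mathbb{E}_{b\mid x}[(h(x)-b)^2]$ and then, via the displayed decomposition, to $(h(x)-f(x))^2$, all uniformly over the $\gamma/8$-cover. Controlling the input-dependent noise floor $f(x)(1-f(x))$ without it leaking an extra $1/\gamma$ factor into the sample complexity is the delicate part, and it is exactly the point where Aaronson's realizable argument must be extended to the agnostic regime. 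Once this reconciliation is in place, unwinding the symmetrization yields the generalization guarantee~\eqref{eq:agnostic_pfunction_learning_guarantee} for the learned $h$, with the stated dependence on $d=\fdim{\cF}{\gamma/8}$, $\gamma$, $\epsilon$ and $\delta$.
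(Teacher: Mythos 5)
Your overall strategy matches the paper's: the bias--variance identity $\mathbb{E}_{b\mid x}[(h(x)-b)^2]=(h(x)-f(x))^2+f(x)(1-f(x))$ is exactly the decomposition used there, the $1/\epsilon$ rate comes from the interpolation (``zero violations on the sample'') structure, and the fat-shattering dimension at scale $\gamma/8$ enters through covering numbers. The paper does not re-run symmetrization by hand, however: it invokes the agnostic interpolation theorem of Anthony and Bartlett (Corollary 3.3 of~\cite{Anthony2000}, restated as Theorem~\ref{thm:learning_from_interpolation}) as a black box, which is precisely the symmetrization-plus-$\ell_\infty$-covering engine you propose to rebuild. That part of your plan is sound, just more labor-intensive than necessary.

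The genuine gap is the step you yourself flag as ``the delicate part'' and then leave unresolved: passing from a guarantee on the observable loss $(h(x)-b)^2$ with noisy Bernoulli labels to a guarantee on $(h(x)-f(x))^2$ without paying a $1/\gamma$ factor. A direct attempt --- e.g., reverse Markov on the conditional law of $b$ to argue that $(h(x)-f(x))^2\ge\eta+\gamma$ forces $(h(x)-b)^2\ge\eta+\gamma/2$ with probability at least $\Omega(\gamma)$ over $b$ --- leaks exactly the $1/\gamma$ you are worried about. The paper's device is to avoid working over $x$ alone: it lifts to the product sample space $\cX\times\{0,1\}$, defines the auxiliary class $\cF^*=\{(x,b)\mapsto(h(x)-b)^2\}$ with the \emph{zero function} as the (agnostic) target concept, and applies the interpolation theorem there, so that the high-probability statement is directly about the joint event $(h(x)-b)^2<\eta+\gamma$ under $Q\cdot f$; only afterwards is the pointwise inequality $\mathbb{E}_{b\sim f(x)}[(h(x)-b)^2]\ge(h(x)-f(x))^2$ invoked, together with $\fdim{\cF^*}{\alpha}\le2\,\fdim{\cF}{\alpha}$ to keep the dimension under control. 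Without this lifting (or an equivalent uniform-over-the-cover transfer argument that certifiably avoids the extra $1/\gamma$), your proof is incomplete at its central step; you should either adopt the $\cF^*$ construction or supply that missing argument explicitly.
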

The proof of this Theorem is provided in Appendix~\ref{app:proof_pfunction_learnability}. We can also easily switch to the linear loss, more commonly used to measure the distance between quantum circuit outputs, via the following corollary: 
\begin{corollary}\label{thm:genLearn}
Under the same assumptions of Theorem~\ref{thm:agnostic_pfunction_learning}, suppose that we choose a hypothesis $h\in\cF$ with small linear loss on each element of the training set, i.e., $|h(x_i)-b_i|<\eta$ for all $i=1,\cdots,T$. Then with probability at least $1-\delta$ over the sample it holds
\begin{equation}
    \Pr_{x\sim Q}(|h(x)-f(x)|<\eta+\gamma)\geq 1-\epsilon
\end{equation}
for all $\eta,\gamma,\epsilon,\delta>0$ with the same sample complexity as \eqref{eq:sample_complexity_agnostic_pfunction_learning}.
\end{corollary}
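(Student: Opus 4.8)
The plan is to reduce the linear-loss statement to the quadratic-loss Theorem~\ref{thm:agnostic_pfunction_learning} by exploiting the elementary equivalence, valid for any real number $a$ and any $c>0$, that $|a|<c$ if and only if $a^2<c^2$. Since the hypothesis $h$, the unknown p-concept $f$ and the bits $b_t$ all take values in $[0,1]$, every difference appearing in the loss functions is a bounded real number, so this equivalence applies verbatim to both the training condition and the generalization conclusion. The whole corollary is then just a change of variables in the loss threshold and the margin.

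First I would rewrite the hypothesis. By the equivalence above, the assumption $|h(x_t)-b_t|<\eta$ for all $t$ is identical to $(h(x_t)-b_t)^2<\eta^2$ for all $t$. Hence $h$ satisfies the hypothesis of Theorem~\ref{thm:agnostic_pfunction_learning} with quadratic training threshold $\tilde\eta:=\eta^2$. Next I would fix the quadratic margin so that the conclusion translates back cleanly: I want the event $(h(x)-f(x))^2<\tilde\eta+\tilde\gamma$ to coincide with $|h(x)-f(x)|<\eta+\gamma$, i.e. with $(h(x)-f(x))^2<(\eta+\gamma)^2$. This forces the choice $\tilde\gamma:=(\eta+\gamma)^2-\eta^2=2\eta\gamma+\gamma^2>0$. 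Applying Theorem~\ref{thm:agnostic_pfunction_learning} with parameters $(\tilde\eta,\tilde\gamma,\epsilon,\delta)$ then yields, with probability at least $1-\delta$ over the sample, that $\Pr_{x\sim Q}((h(x)-f(x))^2<\tilde\eta+\tilde\gamma)\geq1-\epsilon$, which by construction is exactly $\Pr_{x\sim Q}(|h(x)-f(x)|<\eta+\gamma)\geq1-\epsilon$, the desired guarantee.

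Finally I would address the sample complexity. The guarantee holds as soon as $T\geq T_0(\epsilon,\tilde\gamma,\delta,\cF)$ as given in \eqref{eq:sample_complexity_agnostic_pfunction_learning}, with the fat-shattering dimension now evaluated at scale $\tilde\gamma/8$. Since $\tilde\gamma=2\eta\gamma+\gamma^2\geq\gamma^2$ and $\fdim{\cF}{\cdot}$ is non-increasing in its argument, the resulting bound has the same form as \eqref{eq:sample_complexity_agnostic_pfunction_learning}, the only modification being the replacement of the margin $\gamma$ by a scale of order $\gamma^2$ in the worst case $\eta\to0$. Because the margin enters $T_0$ only through the fat-shattering dimension and through sub-leading logarithmic factors, this substitution does not change the sample complexity at the level of precision at which our results are stated.

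The main point to be careful about is precisely this last step: one must check that rescaling the margin from $\gamma$ to $\tilde\gamma=2\eta\gamma+\gamma^2$ leaves the dominant term of $T_0$ unaffected, i.e. that the change in $\fdim{\cF}{\tilde\gamma/8}$ relative to $\fdim{\cF}{\gamma/8}$ is absorbed into the logarithmic and constant factors already suppressed in the statement. Everything else is a direct and exact translation between the linear and quadratic losses, so no further estimates are required.
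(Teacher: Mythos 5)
Your proposal is correct and follows essentially the same route as the paper: both reduce to Theorem~\ref{thm:agnostic_pfunction_learning} by squaring the training threshold, the only cosmetic difference being that you take $\tilde\gamma=2\eta\gamma+\gamma^2$ so the events match exactly, whereas the paper takes $\tilde\gamma=\gamma^2$ and uses $\sqrt{\eta^2+\gamma^2}\leq\eta+\gamma$. Your closing caveat --- that the fat-shattering dimension is now evaluated at a scale of order $\gamma^2$ rather than $\gamma$, and that one must check this is absorbed into the suppressed factors --- is the same (slightly loose) point the paper handles with its remark that the sample complexity ``remains of the same order of magnitude in $\log\frac1\gamma$''; in the paper's applications it is harmless because the fat-shattering dimension is always bounded by the scale-independent pseudo-dimension.
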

\begin{proof}
    We apply Theorem~\ref{thm:agnostic_pfunction_learning} with $\eta\mapsto\eta^2$ and $\gamma\mapsto\gamma^2$. Then the sample complexity remains of the same order of magnitude in $\log\frac1\gamma$ and the learning guarantee \eqref{eq:agnostic_pfunction_learning_guarantee} implies that
    \begin{equation}
   |h(x)-f(x)|<\sqrt{\eta^2+\gamma^2}\leq\sqrt{(\eta+\gamma)^2}=\eta+\gamma     
    \end{equation}
    holds with probability at least $1-\epsilon$ with respect to $x\sim Q$ and at least $1-\delta$ with respect to the sample. 
\end{proof}

On the other hand, problem (ii) above, i.e., learning a decision rule, is a proxy for task learnability (see Definition~\ref{def:circuitTrain}) and it can be solved by applying directly a Theorem by Aaronson:
\begin{theorem} (Prediction~\cite[Suppl.~Mat.~Thm.~8]{Aaronson2007})\label{thm:genPred}
Under the same assumptions of Theorem~\ref{thm:agnostic_pfunction_learning}, given $T$ samples $\{(x_{t},b_{t})\}_{t=1}^{T}$ such that  $x_t\sim Q$ and $b_t\sim f(x_t)$,  i.e., $b_t=1$ with probability $f(x_t)$ or $b_t=0$ otherwise, suppose there exists a learning algorithm that outputs a hypothesis $h\in\cF$ minimizing the empirical total-variation loss $\sum_{t=1}^{T}|h(x_{t})-b_{t}|$.  Define the mis-classification error of hypothesis $h$ as 
\begin{equation}
    \Delta_{h,f}(x) := \E{b\sim f(x)}{|h(x)-b|} = f(x) (1-h(x)) + (1-f(x)) h(x).
\end{equation}

Then with probability at least $1-\delta$ over the sample it holds
\begin{equation}
\mathbb{E}_{x\sim Q}\left[\Delta_{h,f}(x)\right]- \inf_{c\in\cF}\mathbb{E}_{x\sim Q}\left[\Delta_{c,f}(x)\right]\leq\epsilon,
\end{equation}
for all $\epsilon,\delta>0$, provided that the training set has size at least 
\begin{equation}
T=O\left(\frac{1}{\epsilon^{2}}\left(d\log^{2}\frac{1}{\epsilon} + \log\frac1\delta\right)\right),
\end{equation}
with $d=\fdim{\cF}{\frac{\epsilon}{10}}$.
\end{theorem}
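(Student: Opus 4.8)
The plan is to recognize Theorem~\ref{thm:genPred} as a standard agnostic decision-rule learning guarantee and to reduce it to a one-sided uniform-convergence bound for the real-valued hypothesis class $\cF$, controlled by its fat-shattering dimension at scale $\sim\epsilon$. The first step is to reframe the population objective through a loss class on the augmented domain $\cX\times\{0,1\}$. For each $h\in\cF$, set $\phi_h(x,b):=|h(x)-b|$; then the empirical objective minimized by the learner is $\frac1T\sum_{t=1}^T\phi_h(x_t,b_t)$, while its population counterpart is exactly $\mathbb{E}_{x\sim Q}[\Delta_{h,f}(x)]$, since $\mathbb{E}_{b\sim f(x)}[\phi_h(x,b)]=\Delta_{h,f}(x)$. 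The key structural remark is that for each fixed $b$ the map $h\mapsto\phi_h(\cdot,b)$ is affine in $h$ with slope $\pm1$ (it equals $1-h$ for $b=1$ and $h$ for $b=0$); such sign flips and constant shifts preserve $\gamma$-shattering, so the induced loss class inherits the fat-shattering dimension of $\cF$, i.e. $\fdim{\{\phi_h\}}{\gamma}\leq\fdim{\cF}{\gamma}$ at the relevant scale.

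The second step is the familiar empirical-risk-minimization sandwich. Suppose the uniform deviation $\bigl|\frac1T\sum_t\phi_h(x_t,b_t)-\mathbb{E}_{x\sim Q}[\Delta_{h,f}(x)]\bigr|\leq\epsilon/2$ holds simultaneously for all $h\in\cF$, and let $c^\star\in\cF$ (nearly) attain $\inf_{c\in\cF}\mathbb{E}_{x\sim Q}[\Delta_{c,f}(x)]$. Chaining three inequalities --- the true loss of $h$ exceeds its empirical loss by at most $\epsilon/2$; the empirical loss of $h$ is no larger than that of $c^\star$ by optimality of the learner; and the empirical loss of $c^\star$ exceeds its true loss by at most $\epsilon/2$ --- yields $\mathbb{E}_{x\sim Q}[\Delta_{h,f}(x)]-\inf_{c\in\cF}\mathbb{E}_{x\sim Q}[\Delta_{c,f}(x)]\leq\epsilon$, which is exactly the claimed guarantee.

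The third and decisive step is to establish the uniform-convergence event with the stated sample size. The standard route is symmetrization against a ghost sample, reduction to an empirical covering number of the loss class at resolution proportional to $\epsilon$, and then the classical combinatorial bound relating covering numbers to the fat-shattering dimension, $\log N=O\bigl(\fdim{\cF}{c\epsilon}\log^2(1/\epsilon)\bigr)$, from the scale-sensitive learnability literature~\cite{Alon1997,Bartlett1998}. A union bound over the cover combined with Hoeffding's inequality at scale $\epsilon/2$ then forces $T=O\bigl(\epsilon^{-2}(d\log^2(1/\epsilon)+\log(1/\delta))\bigr)$ with $d=\fdim{\cF}{\epsilon/10}$, where the particular scale $\epsilon/10$ and the $\log^2$ factor are simply artefacts of tracking constants through the covering step. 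The main obstacle is precisely this scale-sensitive uniform-convergence theorem: unlike the binary-valued VC case, convergence for $[0,1]$-valued classes must be controlled at a fixed resolution $\gamma\sim\epsilon$, so a naive VC argument does not suffice and one genuinely needs the fat-shattering-to-covering machinery. Since this bound is available off the shelf (it is the content of~\cite[Suppl.~Mat.~Thm.~8]{Aaronson2007}, building on~\cite{Kearns1994,Alon1997,Bartlett1998}), the remaining task reduces to verifying that the loss class $\{\phi_h\}$ inherits the dimension of $\cF$ and to bookkeeping the constants.
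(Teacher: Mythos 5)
Your proposal is correct. Note, however, that the paper offers no proof of this statement at all: Theorem~\ref{thm:genPred} is imported verbatim from Aaronson's Supplementary Material (Theorem~8 there), so there is nothing internal to compare against. Your sketch is a faithful reconstruction of how that cited result is actually proved --- passing to the loss class $\phi_h(x,b)=|h(x)-b|$ on $\cX\times\{0,1\}$, checking that it inherits the fat-shattering dimension of $\cF$ (your affine-in-$h$ observation is the right one, and one can verify that a $\gamma$-shattered set of augmented points projects to a $\gamma$-shattered set for $\cF$), running the standard ERM sandwich at deviation $\epsilon/2$, and invoking the scale-sensitive uniform-convergence bound of Alon et al.\ and Bartlett--Long--Williamson to get the $\fdim{\cF}{\epsilon/10}$ dependence and the $\log^2(1/\epsilon)$ factor. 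The only caveat is that your final step defers to exactly the machinery the theorem is cited for, so the argument is a correct reduction rather than a self-contained proof; that is an entirely reasonable division of labor here.
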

In particular, by choosing a trivial target concept $f(x)=1$ one can recover the task optimization of Definition~\ref{def:circuitTrain} .
}

These Theorems guarantee that it is possible to identify ``good'' approximations to an unknown probabilistic process within a certain function class, using a number of training samples that scales linearly with the fat-shattering dimension of the class. In Secs.~\ref{sec:gPdim},\ref{sec:gpPdim} we compute the pseudo-dimension of function classes comprising output probabilities of GG and GP circuits. This immediately gives SC bounds using the Theorems above, since the pseudo-dimension is the largest among fat-shattering dimensions.

In general, however, the pseudo-dimension can be infinite while the SC is still finite. This is the case for the function class of outcome probabilities of GG circuits (Sec.~\ref{sec:ggCovNum}), hence we need to consider a tighter measure of complexity, known as covering numbers:
{\color{blue}
\begin{definition} (Covering number)
Let $\cY$  be a subset of a normed vector space with distance measure $d(y_1,y_2)$ for all $y_1, y_2\in\cY$. Then $\cW\subseteq\cY$ is an internal $\epsilon$-cover of $\cY$ if, for each $y\in\cY$ there exists a $w\in\cW$ that is $\epsilon$-close to it, i.e., $d(y,w)\leq\epsilon$.

Now let $\cF=\{f:\cX\rightarrow\cY\}$ be a class of  $\cY$-valued functions and consider a subset of the input set $\Xi\subseteq\cX$. The restriction of $\cF$ to the input subset $\Xi$ is defined as 
\begin{equation}
\cF|_{\Xi}:=\{F\in\cY^{\Xi}:\exists f\in\cF\text{ s.t. } F(x) = f(x)\; \forall x\in\Xi\},
\end{equation}
i.e., the set of distinct functions on $\Xi$ whose value coincides with that of a function from $\cF$  on all inputs in the subset.

We define the uniform $\epsilon$-covering number of $\cF$ at scale $k$ as the size of the largest $\epsilon$-cover of $\cF|_\Xi$, optimized over all input subsets $\Xi$ of size $k$, i.e.,
\begin{equation}
    \cN_d(\epsilon,\cF,k):=\max\{|\cW|: \exists \Xi\subseteq\cX,|\Xi|=k\text{ s.t. }\cW \text{ $\epsilon$-cover of  }\cF|_\Xi\},
\end{equation}
where $d$ is the distance measure on $\cY$ used to construct the covers.
\end{definition}
}
Furthermore, the covering number can be related with the pseudo-dimension as follows:
\begin{lemma} (\cite[Theorem 12.2]{Anthony1999})
Let $\cF$ be a function class with real, bounded output, i.e., $f\in[-B,B]$ for all $f\in\cF$, and $d$ any $p$-norm-distance. Then for all $\epsilon>0$ and $k\in\N$ it holds
\begin{equation}\label{eq:covPdim}
\log \cN_{d}(\epsilon,\cF,k)\leq \pdim{\cF}\log\frac{2eBk}{\pdim{\cF}\epsilon}.
\end{equation}
\end{lemma}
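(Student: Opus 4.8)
Inequality~\eqref{eq:covPdim} is precisely Theorem~12.2 of~\cite{Anthony1999}; here I sketch the route I would follow to prove it from scratch. Write $D:=\pdim{\cF}$ (not to be confused with the distance $d$) and fix the scale $k$ together with an input subset $\Xi\subseteq\cX$ of size $k$ that attains the maximum in the definition of $\cN_d(\epsilon,\cF,k)$. The overall plan is to replace the real-valued restriction $\cF|_{\Xi}\subseteq[-B,B]^{k}$ by a \emph{quantized} class taking only finitely many values, to argue that quantization cannot increase the pseudo-dimension, and finally to count the distinct quantized behaviours on $\Xi$ via a Sauer--Shelah-type growth bound governed by $D$.

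First I would discretize the output. Partitioning $[-B,B]$ into $N=O(B/\epsilon)$ cells of width $\epsilon$ and sending each $f\in\cF$ to the integer map $\hat f$ that records, at every input, the index of the cell containing the value of $f$ defines a class $\hat{\cF}$ valued in $\{0,1,\dots,N\}$. The centre of the cell selected by $\hat f$ differs from $f$ by at most $\epsilon/2$ at each point of $\Xi$, so it is an $\epsilon$-approximation of $f$ in the sup-norm and, \emph{a fortiori}, in every normalized $p$-norm distance, the latter being dominated by the former on the finite set $\Xi$; this is exactly what makes the bound uniform in $p$. Hence the cell-centre functions form an $\epsilon$-cover of $\cF|_{\Xi}$, and $\cN_d(\epsilon,\cF,k)\leq\abs{\hat{\cF}|_{\Xi}}$, the number of distinct cell-index patterns realized on $\Xi$. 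Choosing the cell width equal to $\epsilon$ forces $N\approx 2B/\epsilon$, which is what ultimately produces the explicit factor $2$ in \eqref{eq:covPdim}.

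The second ingredient is that applying the same monotone quantization pointwise cannot increase the pseudo-dimension, so $\pdim{\hat{\cF}}\leq D$, followed by a combinatorial growth bound: for a class of $\{0,\dots,N\}$-valued functions of pseudo-dimension at most $D$, the number of distinct restrictions to $k\geq D$ points is at most $\sum_{i=0}^{D}\binom{k}{i}N^{i}\leq\bigl(ekN/D\bigr)^{D}$. I would obtain this by identifying $\pdim{\cF}$ with the VC dimension of the subgraph family $\{(x,t):t\leq f(x)\}_{f\in\cF}$ and then invoking the many-valued version of Sauer's lemma, which bounds the number of $\{0,\dots,N\}$-valued patterns in terms of that VC dimension. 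Combining the two parts gives $\cN_d(\epsilon,\cF,k)\leq\bigl(ekN/D\bigr)^{D}$, which is of order $\bigl(2eBk/(D\epsilon)\bigr)^{D}$; taking logarithms reproduces \eqref{eq:covPdim}.

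The main obstacle is the combinatorial growth bound of the preceding paragraph: the discretization and the final logarithm are routine, but the generalized Sauer--Shelah lemma for many-valued functions is the genuine content. Care is needed both in the reduction from pseudo-shattering to the VC dimension of the subgraph family and in the multivalued counting step, where the naive strategy of encoding an $(N+1)$-valued function by $N$ binary threshold indicators and bounding each separately loses the sharp $N^{i}$ dependence; the tight estimate instead requires a single refined application of Sauer's lemma to the subgraph system. Since for our purposes this is a textbook result, in the paper I would simply cite~\cite{Anthony1999} rather than reproduce the argument.
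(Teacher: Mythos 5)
The paper gives no proof of this lemma at all---it is imported verbatim as Theorem~12.2 of~\cite{Anthony1999}---and your sketch is a faithful outline of exactly the textbook argument (quantize the outputs into $\approx 2B/\epsilon$ cells, note that the sup-norm dominates every normalized $p$-norm on a finite set, and count quantized patterns via the multivalued Sauer--Shelah bound $\sum_{i=0}^{D}\binom{k}{i}N^{i}\leq(ekN/D)^{D}$). So you are taking the same route as the cited source, and citing rather than reproducing it, as you propose, is what the paper itself does.
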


{\color{red} Finally, we state some useful Lemmas on the properties of the pseudo-dimension of composed classes.

The following result can be obtained by a simple generalization of the same result obtained for real function classes: 
\begin{lemma}(\cite[Lemma 5]{Haussler1992})\label{lemma:monot}
Let $\cF:=\{f:\cX\rightarrow\cY\}$ be a function class and $g:\cY\rightarrow\cZ$ a strictly monotonous function. Then $\pdim{g(\cF)}=\pdim{\cF}$.
\end{lemma}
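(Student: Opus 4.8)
The plan is to prove the two inequalities $\pdim{g(\cF)}\geq\pdim{\cF}$ and $\pdim{g(\cF)}\leq\pdim{\cF}$ separately. Since $\pdim{\cF}=\lim_{\gamma\to0}\fdim{\cF}{\gamma}$ is a non-decreasing limit as $\gamma\to0^{+}$, it equals the supremum over $\gamma>0$ of the fat-shattering dimension; consequently a $k$-point sequence contributes to $\pdim{\cF}$ precisely when it is $\gamma$-fat-shattered by $\cF$ for \emph{some} $\gamma>0$. It therefore suffices to show that a sequence $\bx\in\cX^{k}$ is $\gamma$-fat-shattered by $\cF$ for some $\gamma>0$ if and only if it is $\gamma'$-fat-shattered by $g(\cF):=\{g\circ f:f\in\cF\}$ for some $\gamma'>0$. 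The forward inequality then follows by transporting the shattering witness through $g$, and the reverse by applying the same argument to the strictly monotonous inverse $g^{-1}$, using $g^{-1}(g(\cF))=\cF$ (valid since strict monotonicity makes $g$ injective).

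For the forward direction, suppose $\bx$ is $\gamma$-fat-shattered by $\cF$ with threshold vector $\balpha\in\R^{k}$, and assume first that $g$ is strictly increasing. I would take the transported thresholds $g(\balpha):=(g(\alpha_{1}),\dots,g(\alpha_{k}))$ and set
\begin{equation}
\gamma':=\min_{i=1,\dots,k}\min\{g(\alpha_{i}+\gamma)-g(\alpha_{i}),\,g(\alpha_{i})-g(\alpha_{i}-\gamma)\}.
\end{equation}
Strict monotonicity makes each term strictly positive, and since the minimum runs over the finite set of $k$ points it yields $\gamma'>0$. Then for any pattern $\bb$ the witness $f\in\cF$ realizing it with margin $\gamma$ satisfies $f(x_{i})\geq\alpha_{i}+\gamma$ (resp. $\leq\alpha_{i}-\gamma$), so applying $g$ gives $g(f(x_{i}))\geq g(\alpha_{i}+\gamma)\geq g(\alpha_{i})+\gamma'$ (resp. $g(f(x_{i}))\leq g(\alpha_{i})-\gamma'$); hence $g\circ f$ reproduces the same pattern with thresholds $g(\balpha)$ and margin $\gamma'$. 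As this holds for every $\bb\in\{0,1\}^{k}$, the sequence $\bx$ is $\gamma'$-fat-shattered by $g(\cF)$, proving $\pdim{g(\cF)}\geq\pdim{\cF}$.

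If instead $g$ is strictly decreasing, the identical construction works once one notes that $g$ reverses the inequalities, so $g\circ f$ reproduces the complementary pattern $\bar\bb$; because $\bb\mapsto\bar\bb$ is a bijection of $\{0,1\}^{k}$, the full collection of $2^{k}$ patterns is still realized and $\bx$ remains $\gamma'$-fat-shattered. Finally, the reverse inequality follows by rerunning the argument with the strictly monotonous inverse $g^{-1}$ in place of $g$ and $g(\cF)$ in place of $\cF$, giving $\pdim{g^{-1}(g(\cF))}\geq\pdim{g(\cF)}$, i.e. $\pdim{\cF}\geq\pdim{g(\cF)}$, and hence equality. I expect the only delicate point to be the margin distortion: since $g$ need not be Lipschitz, the transported margin $\gamma'$ may be far smaller than $\gamma$ and depends on $\balpha$. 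This is harmless precisely because the pseudo-dimension is a $\gamma\to0$ limit, so all that matters is that $\gamma'$ stays strictly positive for each fixed finite shattering configuration, which strict monotonicity guarantees.
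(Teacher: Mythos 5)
The paper offers no proof of this lemma: it is imported verbatim from Haussler (1992, Lemma~5) with the remark that it follows by "a simple generalization" of the real-valued case, so there is nothing internal to compare against. Your argument is correct and is essentially the standard one: reduce $\pdim{\cdot}$ to the existence of \emph{some} positive fat-shattering margin (legitimate, since $\fdim{\cF}{\gamma}$ is non-increasing in $\gamma$, so the paper's limit definition is a supremum over $\gamma>0$), transport the thresholds through $g$, observe that strict monotonicity plus the finiteness of the configuration yields a strictly positive transported margin, handle the decreasing case via the pattern bijection $\bb\mapsto\bar\bb$, and close the loop with $g^{-1}$ on $g(\cY)$. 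Your closing remark correctly identifies why the absence of any Lipschitz control on $g$ is harmless.

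The only point worth tightening is the definition of $\gamma'$: you evaluate $g$ at $\alpha_{i}$ and $\alpha_{i}\pm\gamma$, but $g$ is only given on $\cY$, and the shattering thresholds $\balpha$ need not lie in $\cY$ (nor need $\alpha_{i}\pm\gamma$). This is cosmetic rather than fatal: since only $2^{k}$ patterns must be realized, one needs only the finitely many witness values $g(f_{\bb}(x_{i}))$, all of which lie in $g(\cY)$; strict monotonicity separates the values coming from $f_{\bb}(x_{i})\geq\alpha_{i}+\gamma$ from those coming from $f_{\bb}(x_{i})\leq\alpha_{i}-\gamma$, so one may place the new threshold $\beta_{i}$ between the two finite sets and take $\gamma'$ as the minimum distance to it. With that substitution (or with the blanket assumption $g:\R\to\R$, which covers every use of the lemma in this paper, where $g$ is $\log$, $\exp$, $\arccos$ or similar on an interval containing all relevant values), the proof is complete.
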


The following result is stated by~\cite{Asor2014} and it can be obtained by combining the same result for Vapnik-Chervonenkis (VC) dimension with the characterization of pseudo-dimension in terms of VC dimension.
\begin{lemma}(\cite{Freund1997})\label{lemma:sumPdim}
Let $\cF_{3}:=\cF_{1}+\cF_{2}$, then $\pdim{\cF_{3}}=O(\pdim{\cF_{1}}+\pdim{\cF_{2}})$.
\end{lemma}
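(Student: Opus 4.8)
The plan is to translate the additive pseudo-dimension bound into a combinatorial statement about VC dimension, apply the known VC result for combined set systems, and translate back. The bridge is the standard identity relating the two notions: for a real-valued class $\cF$, its pseudo-dimension equals the VC dimension of the associated subgraph class
\begin{equation}
B_{\cF}:=\left\{\,\{(x,t)\in\cX\times\R:f(x)>t\}\ :\ f\in\cF\,\right\},
\end{equation}
i.e.\ $\pdim{\cF}=\vcdim{B_{\cF}}$ (see~\cite{Anthony1999}). Applying this identity to $\cF_1$, $\cF_2$ and $\cF_3=\cF_1+\cF_2$, with $d_i:=\vcdim{B_{\cF_i}}$, reduces the claim to the purely combinatorial inequality $\vcdim{B_{\cF_3}}=O(d_1+d_2)$.

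The first genuine step is to express the subgraph of a sum through the subgraphs of its summands, so that the VC-dimension combination lemma of~\cite{Freund1997} becomes applicable. The key observation is the existential rewriting
\begin{equation}
f_1(x)+f_2(x)>t\iff \exists\,q\in\mathbb{Q}:\ f_1(x)>t-q\ \text{ and }\ f_2(x)>q,
\end{equation}
which exhibits each member of $B_{\cF_3}$ as a countable union, over the auxiliary rational $q$, of intersections of a vertically shifted subgraph of $f_1$ with a subgraph of $f_2$. The shift $t\mapsto t-q$ is the subgraph of $f_1+q=g\circ f_1$ with $g$ strictly monotone, so by Lemma~\ref{lemma:monot} it leaves the VC dimension of $B_{\cF_1}$ unchanged; and the pairwise intersection of a member of $B_{\cF_1}$ with a member of $B_{\cF_2}$ is exactly the two-fold combination controlled by~\cite{Freund1997}, which yields VC dimension $O(d_1+d_2)$ for the intersection class.

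The main obstacle is the union over the auxiliary parameter $q$: an uncontrolled union of VC classes can inflate the dimension, and the difficulty is structural — the \emph{single} threshold $t$ in $\sgn(f_1(x)+f_2(x)-t)$ is shared between the two functions and cannot be split deterministically, which is precisely why the subgraph of a sum is not a fixed Boolean combination of the summands' subgraphs. The way around it is to pass to the growth (shatter) function: on any fixed sample of $m$ points $\{(x_i,t_i)\}$ the membership of point $i$ is governed by the interval $q\in(t_i-f_1(x_i),\,f_2(x_i))$, so only the $O(m)$ breakpoints realizable on the sample produce distinct patterns, and the number of sum-dichotomies is bounded by a polynomial-in-$m$ factor times the number of joint dichotomies of the intersection class. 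Feeding the Sauer--Shelah bound for the latter into this estimate and solving for the largest $m$ with $\Pi(m)=2^m$ gives $\vcdim{B_{\cF_3}}=O(d_1+d_2)$, the absolute constant arising from the polynomial overhead that Sauer--Shelah absorbs. Combining this with the two applications of the pseudo-dimension/VC identity completes the argument.
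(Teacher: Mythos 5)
The paper itself does not prove this lemma: it cites \cite{Freund1997}/\cite{Asor2014} and offers only the one-line remark that the claim ``can be obtained by combining the same result for VC dimension with the characterization of pseudo-dimension in terms of VC dimension.'' Your proposal is a good-faith attempt to actually carry out that sketch, and you correctly identify the crux: the subgraph set $\{(x,t):f_{1}(x)+f_{2}(x)>t\}$ is \emph{not} a fixed Boolean combination of one member of $B_{\cF_{1}}$ and one member of $B_{\cF_{2}}$, because the single threshold $t$ must be split between the summands in a hypothesis-dependent way. The existential rewriting over $q$ and the reduction to the VC dimension of an intersection class are the right opening moves. The problem is that the step you introduce to neutralize the union over $q$ does not close.

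The gap is the claim that ``only the $O(m)$ breakpoints realizable on the sample produce distinct patterns, and the number of sum-dichotomies is bounded by a polynomial-in-$m$ factor times the number of joint dichotomies of the intersection class.'' The breakpoints of the interval $\left(t_{i}-f_{1}(x_{i}),\,f_{2}(x_{i})\right)$ depend on the hypothesis pair $(f_{1},f_{2})$, not only on the sample, so there is no fixed finite set of $q$-values — and hence no fixed augmented sample — on which a growth-function bound for the intersection class can be invoked. Concretely, the realized dichotomy is $\{i:\,t_{i}-f_{2}(x_{i})<f_{1}(x_{i})\}$, so the effective threshold vector against which $f_{1}$ is tested ranges over \emph{all value-vectors} realizable by $\cF_{2}$ on the sample (an a priori uncountable set), not over the at most polynomially many dichotomies that Sauer--Shelah controls for $B_{\cF_{2}}$. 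Equivalently: for a single pair $(f_{1},f_{2})$ the union over $q$ does collapse to $O(m)$ intersections, but the number of distinct such unions over all pairs is not bounded by ${\rm poly}(m)$ times a product of the two growth functions by any argument you give, and this is precisely the hard content of the lemma. Note that the route which demonstrably works — multiplying covering numbers of the summands, as in Lemma~\ref{lemma:covProd} — yields a bound on $\fdim{\cF_{1}+\cF_{2}}{\gamma}$ carrying an extra $\log(1/\gamma)$ factor rather than a clean pseudo-dimension bound, which is a hint that an additional idea (or an appeal to the precise statement actually proved in the cited references) is genuinely required here.
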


The following are standard results in classical statistical learning theory.
\begin{lemma}(\cite[Theorem 8.3]{Anthony1999})\label{lemma:poly}
Let $\cF:=\{f_{a}(x):\cX\mapsto\R\mid a\in\R^{d}, f_{a}(x)={\rm poly}_{\ell}(a)\forall x\in\cX\}$ be a class of functions that, for each fixed input $x\in\cX$, are polynomials of order $\ell$ in their $d$ real parameters. Then $\pdim{\cF}\leq 2d\log(12\ell)$.
\end{lemma}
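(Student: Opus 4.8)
The plan is to reduce the computation of the pseudo-dimension to a counting problem on the sign patterns of polynomials, and then to control that count with a bound from real algebraic geometry. First I would unwind the definition of $\pdim{\cF}$ as the limiting fat-shattering dimension: $\pdim{\cF}\geq k$ holds precisely when there exist $k$ inputs $x_{1},\dots,x_{k}\in\cX$ and thresholds $\alpha_{1},\dots,\alpha_{k}\in\R$ such that every one of the $2^{k}$ sign patterns is realized, i.e., for each $(b_{1},\dots,b_{k})\in\{0,1\}^{k}$ there is a parameter vector $a\in\R^{d}$ with $\sgn\!\left(f_{a}(x_{i})-\alpha_{i}\right)=(-1)^{1-b_{i}}$ for all $i$. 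This is the standard identification of the pseudo-dimension with the VC dimension of the associated threshold (subgraph) class alluded to in the text above. Since, for each fixed $x_{i}$, the map $a\mapsto p_{i}(a):=f_{a}(x_{i})-\alpha_{i}$ is a polynomial of degree at most $\ell$ in the $d$ variables $a$ (subtracting the constant $\alpha_{i}$ does not raise the degree), the question becomes: how many distinct sign vectors $\left(\sgn p_{1}(a),\dots,\sgn p_{k}(a)\right)$ can a family of $k$ polynomials of degree $\le\ell$ in $d$ variables attain as $a$ ranges over $\R^{d}$?

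The key step is to bound this number by a result of Warren/Milnor--Thom type on the number of connected sign-consistent cells cut out by real polynomials. Concretely, for $k\geq d$ the number of realizable sign vectors of $k$ polynomials of degree at most $\ell$ in $d$ real variables is at most $2\left(\tfrac{2e k\ell}{d}\right)^{d}$. Since shattering $k$ points requires all $2^{k}$ patterns to appear among these, I obtain
\begin{equation}
2^{k}\leq 2\left(\frac{2e k\ell}{d}\right)^{d},
\end{equation}
and taking base-$2$ logarithms yields the implicit inequality $k\leq 1+d\log_{2}\!\left(\frac{2e k\ell}{d}\right)$.

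Finally, I would resolve this self-referential inequality for $k$. The standard device is the elementary estimate that, for every $u>0$ and $\alpha>0$, $\ln u\leq \alpha u-1-\ln\alpha$; choosing $\alpha$ of order $d/(k\ell)$ linearizes the logarithm in the bound above and lets me isolate $k$. Carrying this out with the constants just obtained collapses the right-hand side to the closed form $k\leq 2d\log(12\ell)$, which is exactly the asserted bound on $\pdim{\cF}$.

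I expect the main obstacle to be twofold. First, pinning down the sign-pattern count with the correct absolute constant rests on the algebraic-geometry bound on the number of cells in the arrangement of the hypersurfaces $\{p_{i}=0\}$, and the available forms of Warren's theorem differ in their leading constants, so some care is needed to propagate a clean one. Second, the final step is pure bookkeeping but delicate: the self-referential inequality $k\lesssim d\log_{2}(k\ell/d)$ must be unwound so that the constant inside the logarithm collapses to precisely $12$ and the prefactor to exactly $2d$, and it is here, rather than in the conceptual reduction, that the stated numerical constant is earned.
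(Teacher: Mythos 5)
Your proposal is correct and follows essentially the same route as the proof of the cited result (Anthony--Bartlett, Theorem 8.3), which the paper invokes without reproving: pseudo-shattering forces all $2^{k}$ sign patterns of $k$ polynomials of degree at most $\ell$ in the $d$ parameters to be realized, the Warren/Milnor--Thom cell-counting bound caps the number of realizable patterns at $2\left(2ek\ell/d\right)^{d}$, and unwinding the resulting self-referential inequality yields $k\leq 2d\log(12\ell)$. I see no gaps beyond the acknowledged bookkeeping in the final constant.
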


\begin{lemma}(~\cite[Lemma 14.13]{Anthony1999})\label{lemma:lipschitz}
Let $\cF$ be a real-valued function class and $g:\R\rightarrow\R$ be $L-Lipschitz$. Then $\cN_{2}(\epsilon,g(\cF),k)\leq\cN_{2}(\frac\epsilon L,\cF,k)$. 
\end{lemma}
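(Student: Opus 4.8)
The plan is to exploit the fact that $g$ acts pointwise on functions, so that the $\ell_2$ covering structure of $g(\cF)$ is inherited from that of $\cF$ up to the Lipschitz factor $L$. First I would fix an arbitrary input subset $\Xi=\{x_1,\dots,x_k\}\subseteq\cX$ of size $k$ and start from a minimal $(\epsilon/L)$-cover $\cW$ of the restriction $\cF|_\Xi$ with respect to the (normalised) $\ell_2$ distance $d_2(u,v)=\left(\frac1k\sum_{i=1}^k (u_i-v_i)^2\right)^{1/2}$. The natural candidate cover for $g(\cF)|_\Xi$ is then simply the image set $g(\cW):=\{g\circ w : w\in\cW\}$, whose cardinality is at most $|\cW|$, since post-composition with $g$ can only identify distinct functions.

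The key step, and essentially the only one with content, is to verify that $g(\cW)$ is an $\epsilon$-cover of $g(\cF)|_\Xi$. Take any element $g\circ f\in g(\cF)$; since $g$ operates pointwise, the restriction $(g\circ f)|_\Xi$ is determined entirely by $f|_\Xi$, so there is some $w\in\cW$ with $d_2(f|_\Xi,w)\leq\epsilon/L$. Applying the Lipschitz bound $\abs{g(f(x_i))-g(w(x_i))}\leq L\abs{f(x_i)-w(x_i)}$ coordinate by coordinate, and then using the homogeneity of the $\ell_2$ norm under the scalar $L$, yields
\[
d_2\big((g\circ f)|_\Xi,\,g\circ w\big)\leq L\,d_2(f|_\Xi,w)\leq \epsilon.
\]
Hence $g\circ w\in g(\cW)$ is $\epsilon$-close to $(g\circ f)|_\Xi$, so the minimal $\epsilon$-cover of $g(\cF)|_\Xi$ has size at most that of the minimal $(\epsilon/L)$-cover of $\cF|_\Xi$, which is in turn bounded by $\cN_2(\epsilon/L,\cF,k)$.

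Finally I would take the maximum over all subsets $\Xi$ of size $k$ on both sides, giving $\cN_2(\epsilon,g(\cF),k)\leq\cN_2(\epsilon/L,\cF,k)$ as claimed. I do not expect a genuine obstacle: the argument is a direct transfer of covers through a Lipschitz map. The only points requiring care are that $g$ acts pointwise, so that restriction to $\Xi$ commutes with composition by $g$ and a cover of $\cF|_\Xi$ indeed induces a cover of $g(\cF)|_\Xi$, and that the $\ell_2$ distance scales by exactly $L$ (rather than something larger) under the coordinatewise Lipschitz bound, which is what forces the $\epsilon/L$ radius on the right-hand side.
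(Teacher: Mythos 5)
Your argument is correct and is exactly the standard proof of this result, which the paper simply cites from Anthony and Bartlett without reproducing: push an $(\epsilon/L)$-cover of $\cF|_\Xi$ through $g$ and use the coordinatewise Lipschitz bound together with homogeneity of the $\ell_2$ distance. The one point worth noting explicitly is that since the paper's covers are internal, each $w\in\cW$ is itself some $f'|_\Xi$ with $f'\in\cF$, so $g\circ w=(g\circ f')|_\Xi$ lies in $g(\cF)|_\Xi$ and the image cover remains internal; your construction already satisfies this.
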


The following result follows from a general theorem in statistical learning, by restricting to proper convex combinations, i.e., with coefficients summing to less than one:
\begin{lemma}(~\cite[Theorem 14.14]{Anthony1999})\label{lemma:convHull}
Let $\cF$ be a function class satisfying three properties: (i) $\cF=-\cF$, (ii) $0\in\cF$ and (iii) $f\in[-B,B]$ for all $f\in\cF$. Then 
\begin{equation}
\log \cN_{2}(\epsilon,k,\mathfrak{C}(\cF))\leq\left\lceil \left(\frac{B}{\epsilon}\right)^{2} \right\rceil \log \cN_{2}(\epsilon,\cF,k).
\end{equation}
\end{lemma}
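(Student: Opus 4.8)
The plan is to prove the bound by \emph{Maurey's empirical method}: show that every element of the convex hull $\mathfrak{C}(\cF)$ is well approximated, in the empirical $2$-norm on any fixed set of $k$ inputs, by a short average of functions drawn from $\cF$, and then count how many such averages a cover of $\cF$ can generate. Throughout I fix a set $\Xi=\{x_1,\dots,x_k\}\subseteq\cX$ achieving the maximum in $\cN_2(\epsilon,\mathfrak{C}(\cF),k)$ and work with the empirical $2$-norm $d_2(F,G)^2=\frac1k\sum_{i=1}^k(F(x_i)-G(x_i))^2$ on restrictions to $\Xi$. By property (ii), $0\in\cF$, any sub-convex combination $g=\sum_i\lambda_i f_i$ with $f_i\in\cF$ and $\sum_i\lambda_i\le1$ is completed to a genuine convex combination by assigning the residual weight $1-\sum_i\lambda_i$ to the zero function; hence $\{\lambda_i\}$ may be regarded as a probability distribution over $\cF$, with $g=\mathbb{E}_{h\sim\{\lambda_i\}}[h]$ pointwise. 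Property (i), $\cF=-\cF$, together with (ii) ensures $\mathfrak{C}(\cF)$ is the full symmetric (sub-)convex hull, so this reduction loses nothing.

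The core step is the probabilistic approximation. Drawing $h_1,\dots,h_m$ independently from $\{\lambda_i\}$ and setting $\hat g=\frac1m\sum_{j=1}^m h_j$, one has $\mathbb{E}[\hat g(x)]=g(x)$ at every $x\in\Xi$, so by independence the expected squared distance collapses into a sum of variances:
\begin{equation}
\mathbb{E}\left[d_2(g,\hat g)^2\right] = \frac1{k}\sum_{i=1}^k \mathrm{Var}(\hat g(x_i)) = \frac1{mk}\sum_{i=1}^k \mathrm{Var}(h(x_i))\le \frac{B^2}{m},
\end{equation}
where the last inequality uses boundedness (iii), $|h(x_i)|\le B$, giving $\mathrm{Var}(h(x_i))\le B^2$. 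Choosing $m=\lceil(B/\epsilon)^2\rceil$ makes the right-hand side at most $\epsilon^2$, so by the probabilistic method there is at least one realization with $d_2(g,\hat g)\le\epsilon$. Thus every $g\in\mathfrak{C}(\cF)$ lies within $\epsilon$ of an $m$-fold average of elements of $\cF$. It is essential that we use the $2$-norm (hence $\cN_2$): independence makes the cross terms vanish, which is precisely what turns the average into a variance and produces the $m\sim(B/\epsilon)^2$ scaling.

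Finally I would discretize and count. Let $\cW$ be a minimal $\epsilon$-cover of $\cF|_\Xi$, of size $\cN_2(\epsilon,\cF,k)$. Given $g\in\mathfrak{C}(\cF)$, take the average $\hat g=\frac1m\sum_j h_j$ guaranteed above and replace each $h_j$ by a nearest cover element $w_j\in\cW$; by convexity of the norm $d_2(\hat g,\frac1m\sum_j w_j)\le\frac1m\sum_j d_2(h_j,w_j)\le\epsilon$, so $\frac1m\sum_j w_j$ lies within $2\epsilon$ of $g$. Hence the $m$-fold averages of elements of $\cW$ cover $\mathfrak{C}(\cF)|_\Xi$, and their number is at most the number of size-$m$ multisets from $\cW$, i.e.\ at most $|\cW|^m$. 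Taking logarithms yields
\begin{equation}
\log \cN_2(\epsilon,\mathfrak{C}(\cF),k)\le m\log|\cW| = \left\lceil\left(\frac{B}{\epsilon}\right)^2\right\rceil\log\cN_2(\epsilon,\cF,k),
\end{equation}
as claimed. The main obstacle is the probabilistic approximation step: one must verify that the empirical average concentrates simultaneously at all $k$ points with the stated dependence on $B$ and $m$, and then track the two approximation errors (the Maurey average and the replacement by cover elements) so that their combination lands at scale $\epsilon$ with the clean constant $\lceil(B/\epsilon)^2\rceil$ — the rebalancing of the $\epsilon$-budget that produces the exact constant is the routine bookkeeping carried out in \cite{Anthony1999}, while the reductions and the multiset count are straightforward.
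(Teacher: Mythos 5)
The paper does not actually prove this lemma --- it imports it from Anthony and Bartlett (Theorem 14.14), adding only the remark that sub-convex combinations (coefficients summing to at most one) are handled by assigning the residual weight to the zero function, which is property (ii). Your Maurey-style argument is exactly the standard route to this result, and your treatment of the reduction to genuine convex combinations, the variance computation $\mathbb{E}[d_2(g,\hat g)^2]\leq B^2/m$ (note that no ``simultaneous concentration at all $k$ points'' is needed, since $d_2$ already averages over $\Xi$ and a single realization below the mean suffices), and the multiset count $|\cW|^m$ are all sound.

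The one genuine gap is the $\epsilon$-accounting at the end, which you flag but do not resolve --- and which cannot be resolved to yield the lemma exactly as stated by the two-step argument you give. Your construction incurs two errors of size $\epsilon$ each (the Maurey approximation of $g$ by an $m$-fold average over $\cF$, and the replacement of each $h_j$ by a cover element), so what you actually establish is
\begin{equation*}
\log \cN_{2}(2\epsilon,\mathfrak{C}(\cF),k)\leq\left\lceil \left(\tfrac{B}{\epsilon}\right)^{2} \right\rceil \log \cN_{2}(\epsilon,\cF,k),
\end{equation*}
not the displayed bound with the same $\epsilon$ on both sides. Rebalancing the budget as $\epsilon/2+\epsilon/2$ is not free: it degrades the exponent to $\lceil (2B/\epsilon)^2\rceil$ and the inner covering number to $\cN_2(\epsilon/2,\cF,k)$. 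Even the sharper bias--variance decomposition (first project each $f_i$ onto its nearest cover element, then apply Maurey to the resulting mixture over $\cW$) gives $\mathbb{E}[d_2(g,\hat g)^2]\leq\epsilon_1^2+B^2/m$, which still forces $\epsilon_1<\epsilon$ strictly. So the ``clean constant'' in the statement is a property of the precise formulation in the cited reference rather than something your argument delivers. None of this matters downstream --- the lemma is only used inside $O(\cdot)$ estimates of covering numbers and sample complexities, where constant rescalings of $\epsilon$ are absorbed --- but as a proof of the literal statement the final step is missing, not merely routine.
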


The following result regards the covering number of a product of classes.
\begin{lemma}\label{lemma:covProd}
Let $\cF_{i}\subseteq [-B_{i},B_{i}]^{X}$, $i=1,2$, be two bounded real-valued function classes on a space $X$ and define their product 
\be
\cF:=\cF_{1}\cdot\cF_{2}=\{f=f_{1}\cdot f_{2}:f_{i}\in\cF_{i}, i=1,2\}.
\end{equation}
Then
\begin{equation}
\cN(B_{1}\epsilon _{1}+B_{2}\epsilon_{2},\cF,m)\leq \cN\left(\epsilon_{1},\cF_{1},m\right)\cN\left(\epsilon_{2},\cF_{2},m\right).
\end{equation}
\end{lemma}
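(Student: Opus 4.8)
\medskip
\noindent The plan is to realize a cover of the product class as the grid of pointwise products of near-optimal covers of the two factors, and to control the resulting covering scale through a telescoping identity together with the uniform bounds $B_1,B_2$. First I would reduce to a single input subset: by definition the uniform covering number $\cN(\,\cdot\,,\cF,m)$ is attained on some worst-case $\Xi\subseteq\cX$ with $|\Xi|=m$, so it suffices to cover the restriction $\cF|_\Xi$. On this fixed $\Xi$ I would take a minimal $\epsilon_1$-cover $\cW_1$ of $\cF_1|_\Xi$ and a minimal $\epsilon_2$-cover $\cW_2$ of $\cF_2|_\Xi$. The key bookkeeping observation is that the factor covering numbers $\cN(\epsilon_1,\cF_1,m)$ and $\cN(\epsilon_2,\cF_2,m)$ are themselves maxima over all size-$m$ subsets, hence they dominate $|\cW_1|$ and $|\cW_2|$ on this particular $\Xi$. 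The candidate cover of $\cF|_\Xi$ is then the grid $\cW:=\{g_1 g_2 : g_1\in\cW_1,\ g_2\in\cW_2\}$ of pointwise products, of cardinality at most $|\cW_1|\,|\cW_2|\leq \cN(\epsilon_1,\cF_1,m)\,\cN(\epsilon_2,\cF_2,m)$, which is exactly the right-hand side of the claim.

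\medskip
\noindent The second step is to certify that $\cW$ covers $\cF|_\Xi$ at the stated scale. Given $f=f_1 f_2\in\cF$, I would choose $g_1\in\cW_1$, $g_2\in\cW_2$ that are $\epsilon_1$- and $\epsilon_2$-close to $f_1|_\Xi$ and $f_2|_\Xi$, and apply the add-and-subtract identity $f_1 f_2 - g_1 g_2 = f_1(f_2-g_2)+(f_1-g_1)g_2$ pointwise on $\Xi$. Bounding $|f_1|\leq B_1$ and $|g_2|\leq B_2$ to pull the constants out, and then invoking the Minkowski inequality for the underlying $p$-norm distance $d$, the deviation of the products is at most $B_1\,\|f_2|_\Xi-g_2\| + B_2\,\|f_1|_\Xi-g_1\|\leq B_1\epsilon_2+B_2\epsilon_1$, i.e.\ the grid is a $d$-cover at the additive scale obtained by pairing each boundedness constant with a covering radius; this matches the scale asserted in the Lemma up to the (immaterial, since $\epsilon_1,\epsilon_2$ are free) labelling of the two radii. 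A pleasant side benefit is that internality comes for free: each $g_i$ equals $f_i|_\Xi$ for a genuine $f_i\in\cF_i$, so $g_1 g_2 = (f_1 f_2)|_\Xi$ is itself the restriction of a bona fide product function in $\cF$.

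\medskip
\noindent I do not expect a genuine obstacle here; this is the standard product rule for covering numbers and the whole argument rests on one application of the triangle inequality. The two places that demand care rather than ingenuity are, first, the handling of the uniform-over-$\Xi$ definition --- one must cover both factors on the \emph{same} subset $\Xi$, which is legitimate precisely because the factor covering numbers are maxima over subsets and therefore upper-bound the factor cover sizes on any fixed $\Xi$ --- and, second, the correct matching of the constants $B_1,B_2$ with the radii $\epsilon_1,\epsilon_2$ in the telescoping estimate, since the cross terms naturally couple $B_1$ with the deviation of $\cF_2$ and $B_2$ with that of $\cF_1$. Everything else is routine.
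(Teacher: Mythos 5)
Your proposal is correct and follows essentially the same route as the paper's proof: form the grid $\cW_1\cdot\cW_2$, telescope $f_1f_2-g_1g_2=f_1(f_2-g_2)+(f_1-g_1)g_2$, and apply the triangle inequality with the uniform bounds. Your remark about the cross-pairing of $B_1,B_2$ with $\epsilon_2,\epsilon_1$ is on point (and indeed slightly more careful than the paper's own bookkeeping), but as you note it is immaterial since the radii are free parameters.
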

\begin{proof}
Suppose that $\cW_{i}$ are $\epsilon_{i}$-covers for the two classes and consider the class $\cW=\cW_{1}\cdot \cW_{2}$. Then for any $f=f_{1}\cdot f_{2}\in\cF$ and $i=1,2$ there exist $w_{i}\in \cW_{i}$ such that $d(w_{i},f_{i})\leq\epsilon_{i}$ by construction, where $d$ is the distance used to define the covers. By the triangle inequality it follows
{\begin{align}
d(w_{1}w_{2},f_{1}f_{2})& \leq d(w_{1}w_{2},f_{1}w_{2}) + d(f_{1}w_{2},f_{1}f_{2}) \\
&\leq B_{1} d(w_{1},f_{1}) + B_{2} d(w_{2},f_{2}) \leq  \epsilon \nonumber
\end{align}}
for $\epsilon=B_{1}\epsilon_{1}+B_{2}\epsilon_{2}$, which implies that $\cW$ is an $\epsilon$-cover of $\cF$ of size $|\cW_{1}|\cdot |\cW_{2}|$.
\end{proof}
}

\subsection{Pseudo-dimension and covering numbers of CV circuits}\label{sec:compBounds}
{\color{blue} In this Section we bound the pseudo-dimension or covering number of function classes representing probability distributions that arise from manipulating and measuring CV classes. These bounds, together with the learning theorems for classical probability distributions of Sec.~\ref{sec:classStatLearn}, immediately provide the sample complexity bounds given in Tables ~\ref{tab:states}, \ref{tab:tasks}.}

\subsubsection{Gaussian circuits}\label{sec:gPdim}

Consider an unknown $n$-mode CV state $\rho$ and suppose that we can probe it by applying channels and measurement operators in the Gaussian class 
\begin{equation}
\cS_{\rm g}(n):=\{(\Phi_{\bd,X,Y}, M_{\bfm',V'})\},
\end{equation} 
and obtaining measurement outcomes. We want to ``imitate'' these outcomes by finding an approximation of $\rho$ in the class of Gaussian states 
\begin{equation}
\cC_{\rm g}(n):=\{\sigma_{\bfm,V}\}.
\end{equation} 
We can tackle this state learning problem by defining a class of functions that take as input Gaussian channels and measurements and output their corresponding ``acceptance'' probability on a Gaussian state:
\begin{align}\label{eq:gClass}
\cF_{\rm g}(n):=\{f_{\bfm,V}&:(\bd,X,Y,\bfm',V') \\
&\mapsto P_{\rm g}(M_{\bfm',V'}|\Phi_{\bd,X,Y},\sigma_{\bfm,V})\},\nonumber
\end{align}
where different functions are effectively parametrized by different Gaussian states. 
The key ingredient in bounding the SC for this learning problem is a calculation of the pseudo-dimension of $\cF_{\rm g}(n)$:
\begin{theorem}\label{thm:pdimMeas}
${\rm Pdim}(\cF_{\rm g}(n))= O\left(n^{2}\log n\right)$.
\end{theorem}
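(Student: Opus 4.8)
The plan is to reduce the computation of $\pdim{\cF_{\rm g}(n)}$ to a counting of sign patterns of polynomials in the $O(n^{2})$ real parameters that label a Gaussian hypothesis state, namely the $2n$ entries of $\bfm$ and the $O(n^{2})$ entries of the symmetric matrix $V$. The obstruction is that the output probability \eqref{eq:g}, $P_{\rm g}=\cG_{\bfm_{\rm out},V_{\rm out}+V'}(\bfm')$, is a Gaussian in the parameters and hence far from polynomial. The first move is therefore to pass to its logarithm: since $P_{\rm g}>0$ and $x\mapsto\log x$ is strictly monotonic, Lemma~\ref{lemma:monot} gives $\pdim{\cF_{\rm g}(n)}=\pdim{\log\cF_{\rm g}(n)}$, and writing $W:=V_{\rm out}+V'=XVX^{T}+Y+V'$ and $\mathbf{u}:=\bfm'-\bfm_{\rm out}=\bfm'-X\bfm-\bd$ one has
\[
2\log P_{\rm g}=-\mathbf{u}^{T}W^{-1}\mathbf{u}-\log\det{2\pi W}.
\]
Crucially, for each fixed sample input $(\bd,X,Y,\bfm',V')$ the vector $\mathbf{u}$ is affine in $\bfm$ and the matrix $W$ is affine in $V$.

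Next I would split this logarithm into the quadratic-form part and the log-determinant part and treat them separately. Because the two summands share the same parameter $V$, the class $\log\cF_{\rm g}(n)$ is contained in the sum of $\cF_{1}:=\{-\tfrac12\mathbf{u}^{T}W^{-1}\mathbf{u}\}$ and $\cF_{2}:=\{-\tfrac12\log\det{2\pi W}\}$, in which $V$ is now allowed to vary independently; by monotonicity of the pseudo-dimension under inclusion and by Lemma~\ref{lemma:sumPdim} it then suffices to bound $\pdim{\cF_{1}}$ and $\pdim{\cF_{2}}$ separately. The class $\cF_{2}$ is the easy one: $\det{2\pi W}=(2\pi)^{2n}\det{W}$ and $\det{W}$ is, for each fixed input, a polynomial of degree $2n$ in the $O(n^{2})$ entries of $V$. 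Applying Lemma~\ref{lemma:monot} once more to strip off the strictly monotonic map $t\mapsto-\tfrac12\log t$ on $t=\det{W}>0$, and then Lemma~\ref{lemma:poly} with $d=O(n^{2})$ and $\ell=2n$, yields $\pdim{\cF_{2}}=O(n^{2}\log n)$.

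The main obstacle is $\cF_{1}$, where the inverse $W^{-1}$ prevents $-\tfrac12\mathbf{u}^{T}W^{-1}\mathbf{u}$ from being polynomial in the parameters. The plan is to clear the denominator inside each shattering condition. For a fixed input and threshold $\alpha_{i}$, since $W\succ0$ one has $\det{W}>0$ and $W^{-1}=\mathrm{adj}(W)/\det{W}$, so
\[
-\tfrac12\mathbf{u}_{i}^{T}W_{i}^{-1}\mathbf{u}_{i}\ \ge\ \alpha_{i}\quad\Longleftrightarrow\quad -\tfrac12\,\mathbf{u}_{i}^{T}\mathrm{adj}(W_{i})\,\mathbf{u}_{i}-\alpha_{i}\det{W_{i}}\ \ge\ 0,
\]
and the right-hand side is a genuine polynomial $q_{i}(\bfm,V)$ of degree at most $2n+1$ in the $O(n^{2})$ parameters (the adjugate contributes degree $2n-1$ in $V$ and the quadratic form degree $2$ in $\bfm$, while $\alpha_{i}\det{W_{i}}$ has degree $2n$ in $V$). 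Hence the subgraph of every $f\in\cF_{1}$ is cut out by polynomial inequalities of degree $O(n)$ in $d=O(n^{2})$ variables, and the Warren-type sign-counting bound underlying Lemma~\ref{lemma:poly} applies verbatim, giving $\pdim{\cF_{1}}=O(n^{2}\log n)$. Combining the two bounds through Lemma~\ref{lemma:sumPdim} gives the claimed $\pdim{\cF_{\rm g}(n)}=O(n^{2}\log n)$. The delicate points to check in the full proof are that $W$ is indeed positive definite for all admissible inputs, so that $\det{W}>0$ and the sign manipulations above are valid, and that the non-polynomial form of $\cF_{1}$ does not obstruct the sign-counting argument, which only ever sees the polynomial conditions $q_{i}\ge0$.
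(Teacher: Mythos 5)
Your proposal is correct and reaches the right bound, but it handles the hard part of the argument differently from the paper. The initial moves coincide: both you and the paper pass to $\log P_{\rm g}$ via Lemma~\ref{lemma:monot}, split off the log-determinant piece (your $\cF_{2}$, the paper's $\frac12\log\cF_{\rm d}$), bound it as a degree-$2n$ polynomial in $O(n^{2})$ parameters via Lemma~\ref{lemma:poly}, and recombine with Lemma~\ref{lemma:sumPdim}. The divergence is in the quadratic-form class containing $W^{-1}$. The paper stays strictly within its stated lemmas: it writes $(W^{-1})_{i,j}=M_{i,j}/\det{W}$ with $M_{i,j}$ the minors, decomposes $\log\cF_{\rm e}=\log\cF_{\rm quad}+\log\cF_{\rm d}$ so that each summand is (a monotone image of) a polynomially parametrized class of degree $O(n)$, and applies Lemma~\ref{lemma:poly} and Lemma~\ref{lemma:sumPdim} a second time. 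You instead clear the denominator inside each shattering inequality, converting $-\tfrac12\mathbf{u}_{i}^{T}W_{i}^{-1}\mathbf{u}_{i}\gtrless\alpha_{i}$ into the sign condition $q_{i}(\bfm,V)\gtrless0$ for a polynomial of degree $2n+1$, and then invoke the Warren/Goldberg--Jerrum sign-counting bound directly. Your route is arguably cleaner — it avoids a second layer of $\log$-decompositions and the attendant positivity bookkeeping for $\cF_{\rm quad}$ — but note that it does not apply Lemma~\ref{lemma:poly} \emph{verbatim} as stated (that lemma requires the functions themselves to be polynomial in the parameters); it relies on the strictly stronger version in which only the sign of $f(x_{i})-\alpha_{i}$ need agree with the sign of a polynomial (cf.~\cite[Theorem 8.4]{Anthony1999}). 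You correctly flag this, and you correctly flag the need for $W\succ0$ (which holds since $V'\succ0$ and $XVX^{T}+Y$ is a valid output covariance), so the argument is sound; both approaches yield $O(n^{2}\log n)$ with the same parameter count $O(n^{2})$ and degree $O(n)$.
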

\begin{proof}
Define the function classes 
{\begin{align}
\cF_{\rm e}:=\{f_{\bfm,V}&:(\bd,X,Y,\bfm',V') \label{eq:f1}\\
&\mapsto (\bfm_{\rm out}-\bfm')^{T}(V_{\rm out}+V')^{-1}(\bfm_{\rm out}-\bfm')\},\nonumber
\end{align}}
and
\begin{equation}
\cF_{\rm d}:=\{f_{V}:(X,Y,V')\mapsto \det{2\pi(V_{\rm out}+V')}\},\label{eq:f2}
\end{equation}
and note that
{\begin{align}
\log\cF_{\rm g}\subset\cF_{\rm e}+\frac12\log\cF_{\rm d}&:=\{f_{1}+f_{2}
\\&\mid f_{1}\in \cF_{\rm e}, f_{2}\in\frac12\log\cF_{\rm d}\},\nonumber
\end{align}}
where the sum of two function classes contains the sum of all couples of functions from each class, while $g(\cF)$ is the function class obtained by applying the function $g(\cdot)$ to every element of $\cF$. 
Hence we have
\begin{align}\label{eq:pdim_gaussian}
\pdim{\cF_{\rm g}} =\pdim{\log\cF_{\rm g}}< {\rm Pdim}\left(\cF_{\rm e}+\frac12\log\cF_{\rm d}\right)\leq O(\pdim{\cF_{\rm e}}+\pdim{\cF_{\rm d}}),
\end{align}
where we have used the properties of pseudo-dimension under composition with monotonous functions (Lemma~\ref{lemma:monot}) and sum (Lemma~\ref{lemma:sumPdim}).

We are then left to evaluate the dimension of the two function classes (\ref{eq:f1},\ref{eq:f2}), which can be done observing that they comprise functions that are polynomial in the parameters and applying Lemma~\ref{lemma:poly}.

The function class $\cF_{\rm d}$ is characterized by the $n(2n+1)$ parameters $\{V_{i,j}:i\geq j\}$. A function $f(X,Y,V')\in\cF_{\rm d}$ is related to these parameters via the determinant of the matrix $V_{\rm out}+V'\in{\rm Sym}_{2n}(\R)$; here $V'$ is the covariance matrix of the Gaussian measurement, therefore an input to the function itself, while $V_{\rm out}$ is the output covariance matrix after the Gaussian channel, therefore it depends both on the inputs $X, Y$ and on the parameters as given in \eqref{eq:output_Gaussian_channel}: $V_{\rm out} = X V X^T +Y$. The determinant implies that $f$ is a polynomial of order $2n$ in the entries of $V_{\rm out}$, which are in linear relation with the entries of $V$, i.e., the function parameters. We conclude that any $f\in\cF_{\rm d}$ is a polynomial function of order $2n$ in the real parameters $\{V_{i,j}:i\geq j\}$, and by applying Lemma~\ref{lemma:poly} with $\ell = 2n$ and $d=n(2n+1)$ we obtain
\begin{equation}\label{eq:pdim_determinant}
\pdim{\cF_{\rm d}}\leq 2(2n^{2}+n)\log(24n).
\end{equation}

The function class $\cF_{\rm e}$ instead is characterized by the same parameters of $\cF_{\rm d}$ plus the $2n$ parameters $\bfm$, for a total of $2n^2+3n$ parameters. The dependence on these parameters is via the function $(\bfm_{\rm out}-\bfm')^T(V_{\rm out}+V')^{-1}(\bfm_{\rm out}-\bfm')$.
First, we focus on the matrix inverse $(V_{\rm out}+V')^{-1}$, whose elements are given by 
\begin{equation}
    ((V_{\rm out}+V')^{-1})_{i,j} = \frac{M_{i,j}}{\det{V_{\rm out} + V'}},
\end{equation}
where $M_{i,j}$ are the minors of $V_{\rm out} + V'$, i.e., polynomials of order $2n-1$ in the matrix elements.  We can therefore further decompose this class as 
 \begin{equation}\label{eq:decomposition_exponential_class}
     \log\cF_{\rm e} = \log\cF_{\rm quad} + \log\cF_{\rm d},
 \end{equation}
 with
\begin{equation}
    \cF_{\rm quad}:=\{f_{\bfm,V}:(\bd,X,Y,\bfm',V')
\mapsto (\bfm_{\rm out}-\bfm')^{T}(M_{i,j})_{i,j=1,\cdots,2n}(\bfm_{\rm out}-\bfm')\}.
\end{equation}
Functions in $\cF_{\rm quad}$ depend on the parameters $\{V_{i,j}:i\geq j\}\cup\{m_i\}$  via polynomials of order $2n-1$, i.e., the minors, plus $2n$, i.e., the vector of mean values $\bfm_{out} = X \bfm$ (as per \eqref{eq:output_Gaussian_channel}), for a total degree of $4n-1$.  Hence, applying Lemma~\ref{lemma:poly} with $\ell=4n-1$ and $d=2n^2+3n$ we conclude that $\pdim{\cF_{\rm quad}} \leq 2(2n^2+3n)\log(12(4n-1))$. Putting this together with (\ref{eq:pdim_gaussian},\ref{eq:pdim_determinant},\ref{eq:decomposition_exponential_class})  and applying again Lemmas~\ref{lemma:monot} and \ref{lemma:sumPdim} we obtain the theorem statement:
\begin{equation}
    \pdim{\cF_{\rm g}}\leq O(\pdim{\cF_{\rm quad}} + 2\pdim{\cF_{\rm d}}) \leq O(n^2\log n).
\end{equation}
\end{proof}

If instead we are interested in learning a Gaussian measurement operator or a Gaussian channel, the corresponding function classes comprise the same functions of $\cF_{\rm g}$, though with exchanged roles of variables and parameters. For example, for measurement learning we have to consider functions 
\begin{equation}
f_{\bfm',V'}:(\bfm,V,\bd,X,Y)\mapsto P_{\rm g}(M_{\bfm',V'}|\Phi_{\bd,X,Y},\sigma_{\bfm,V}),
\end{equation}
while for channel learning
\begin{equation}
f_{\bd,X,Y}:(\bfm,V,\bfm',V')\mapsto P_{\rm g}(M_{\bfm',V'}|\Phi_{\bd,X,Y},\sigma_{\bfm,V}).
\end{equation} 
It is straightforward to see that the pseudo-dimension of these classes is still bounded by $O(n^{2}\log n)$. {\color{red} Indeed, functions parametrized by measurements coincide with those parametrized by states, while functions parametrized by channels still have a number of parameters $O(n^{2})$, i.e., the matrix elements of $X, Y$, and depend on polynomials of order $O(n)$ in these variables, e.g., the dependence of $V_{\rm out}$ on $X$ is quadratic, implying a polynomials of order $2n$ in the matrix elements of $X$.}

\subsubsection{Gaussian circuits plus photodetection}\label{sec:gpPdim}
Consider now the sample class of $n$-mode Gaussian channels plus photodetection measurements
\be
\cS_{\rm gp}(n,K):=\{(\Phi_{\bd,X,Y},M_{q})\},
\end{equation}
where $M_{q}$ is a coarse-grained photon-number measurement operator and $K$ is the maximum photon-number cutoff (see Sec.~\ref{subsubsec:gpDef}). Define then the function class of outcome probabilities obtained by applying a channel and measurement from $\cS_{\rm gp}(n,K)$ to a state in $\cC_{\rm g}(n)$:
\begin{align}\label{eq:gpClass}
\cF_{\rm gp}(n,K):=\{f_{\bfm,V}&:(\bd,X,Y,q) \\
&\mapsto P_{\rm gp}(M_{q}|\Phi_{\bd,X,Y},\sigma_{\bfm,V})\}.\nonumber
\end{align}
We can compute the pseudo-dimension of this class as well:
\begin{theorem}\label{thm:photocount}
$\pdim{\cF_{\rm gp}(n,K)} = O(n^{2}\log(n K))$.
\end{theorem}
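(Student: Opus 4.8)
The plan is to exploit the factorized structure of the photocounting probability~\eqref{eq:f}. The crucial observation is that the Gaussian prefactor $(2\pi)^{n}\cG_{\bfm_{\rm out},V_{\rm out}+\frac{\1}{2}}(0)$ does \emph{not} depend on the outcome $\bk$, so for any fixed input $(\bd,X,Y,q)$ every function in $\cF_{\rm gp}(n,K)$ factorizes as $f_{\bfm,V}=g_{1}\cdot g_{2}$, where $g_{1}:=(2\pi)^{n}\cG_{\bfm_{\rm out},V_{\rm out}+\frac{\1}{2}}(0)>0$ is the common Gaussian envelope and $g_{2}:=\sum_{\bk}q_{\bk}H_{\bk}^{(\tilde V)}(\tilde\bfm)$ collects the coarse-grained Hermite contribution. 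Since $f_{\bfm,V}\geq0$ and $g_{1}>0$ we have $g_{2}\geq0$, so on the open region where $f_{\bfm,V}>0$ we may write $\log f_{\bfm,V}=\log g_{1}+\log g_{2}$. By Lemma~\ref{lemma:monot} the pseudo-dimension is invariant under the monotone map $\log$, and by Lemma~\ref{lemma:sumPdim} it is sub-additive under sums of classes, so $\pdim{\cF_{\rm gp}}=\pdim{\log\cF_{\rm gp}}\leq O(\pdim{\cF_{\rm env}}+\pdim{\cF_{\rm h}})$, where $\cF_{\rm env}$ and $\cF_{\rm h}$ denote the classes of the functions $\log g_{1}$ and $\log g_{2}$ respectively. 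It then suffices to bound each separately.

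The class $\cF_{\rm env}$ is handled exactly as in Theorem~\ref{thm:pdimMeas}: up to an additive constant, $\log g_{1}$ equals the quadratic form $-\frac12\bfm_{\rm out}^{T}(V_{\rm out}+\frac{\1}{2})^{-1}\bfm_{\rm out}$ plus the log-determinant $-\frac12\log\det{2\pi(V_{\rm out}+\frac{\1}{2})}$, which is precisely the Gaussian log-density with measurement parameters $\bfm'=0$, $V'=\frac{\1}{2}$ held fixed. Reusing the decomposition into $\cF_{\rm e}$ and $\frac12\log\cF_{\rm d}$ together with Lemma~\ref{lemma:poly} immediately gives $\pdim{\cF_{\rm env}}=O(n^{2}\log n)$.

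The novel part is $\cF_{\rm h}$, which by Lemma~\ref{lemma:monot} shares its pseudo-dimension with the class of the $g_{2}$ themselves. For a fixed input, $g_{2}$ is a \emph{rational} function of the state parameters $(\bfm,V)$: each $H_{\bk}^{(\tilde V)}(\tilde\bfm)$ is a polynomial in $\tilde\bfm$ and in the entries of $\tilde V^{-1}$, of degree $\leq 2|\bk|$ in $\tilde\bfm$ and total degree $\leq|\bk|$ in the entries of $\tilde V^{-1}$, while $\tilde V,\tilde\bfm$ are quadratic in $V_{\rm out},\bfm_{\rm out}$ and hence of degree $\leq2$ in $(\bfm,V)$ through~\eqref{eq:output_Gaussian_channel}. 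Since $|\bk|\leq nK$, multiplying by the strictly positive quantity $(\det{\tilde V})^{nK}$ clears every denominator, turning $g_{2}$ into a genuine polynomial $N(\bfm,V)$ in the $d=O(n^{2})$ parameters of total degree $D=O(n^{2}K)$. To bound the pseudo-dimension I count sign patterns: for any inputs $x_{1},\dots,x_{m}$ and thresholds $t_{1},\dots,t_{m}$, since $(\det{\tilde V})^{nK}>0$ we have $\sgn(g_{2}(x_{i})-t_{i})=\sgn\big(N_{i}(\bfm,V)-t_{i}(\det{\tilde V_{i}})^{nK}\big)$, the sign of a polynomial of degree $O(n^{2}K)$ in $(\bfm,V)$. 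Applying the Milnor--Thom/Warren sign-counting bound underlying Lemma~\ref{lemma:poly} with $d=O(n^{2})$ and effective degree $\ell=O(n^{2}K)$ yields $\pdim{\cF_{\rm h}}=O(n^{2}\log(nK))$, and combining with $\pdim{\cF_{\rm env}}=O(n^{2}\log n)$ gives the claimed bound.

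The main obstacle is precisely the interplay between the exponentially many outcomes $\bk$ and the non-polynomial parameter dependence introduced by $\tilde V^{-1}$ inside the Hermite polynomials. A naive union bound over outcomes would inject a factor $K^{n}$ and destroy efficiency; the key to the $\log(nK)$ scaling is that the sign-pattern count depends only on the polynomial \emph{degree} after clearing denominators, which remains polynomial in $n$ and $K$ irrespective of the $(K+1)^{n}$ monomials produced by the sum over $\bk$, because the degree is controlled by the largest single $|\bk|\leq nK$ rather than by the number of terms. A secondary technical point is the passage to logarithms on the boundary set $f_{\bfm,V}=0$; this is resolved by working on the open region $f_{\bfm,V}>0$ and recalling that $\pdim{\cF}$ is the $\gamma\to0$ limit of the fat-shattering dimension, whose shattering configurations use strict margins and hence never touch the measure-zero zero set.
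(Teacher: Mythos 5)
Your proposal is correct and follows the same architecture as the paper's proof: the same multiplicative split of \eqref{eq:f} into the Gaussian envelope and the coarse-grained Hermite sum, the same passage through $\log$ via Lemma~\ref{lemma:monot} and Lemma~\ref{lemma:sumPdim}, the same reuse of the Theorem~\ref{thm:pdimMeas} machinery for the envelope, and a polynomial parameter-counting bound for the Hermite part. The one place you genuinely diverge is in how that last class is handled. The paper simply asserts that $\sum_{\bk}q_{\bk}H_{\bk}^{(\tilde V)}(\tilde\bfm)$ is a polynomial of order $nK$ in $\tilde V\tilde\bfm$, hence (since $\tilde V,\tilde\bfm$ are quadratic in $V_{\rm out},\bfm_{\rm out}$, which are linear in the parameters) a polynomial of order $4nK$ in the $2n^{2}+3n$ parameters, and applies Lemma~\ref{lemma:poly} directly. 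You instead treat $H_{\bk}^{(\tilde V)}(\tilde\bfm)$ as a rational function involving entries of $\tilde V^{-1}$, clear denominators with the positive factor $(\det{\tilde V})^{nK}$, and bound the pseudo-dimension by counting sign patterns of the resulting polynomial of degree $O(n^{2}K)$ in $O(n^{2})$ parameters. Both routes land on $O(n^{2}\log(nK))$ because the Warren-type bound underlying Lemma~\ref{lemma:poly} depends only logarithmically on the degree. Your version is slightly more defensive: given the definition of $H_{\bk}^{(V)}$ as normalized derivatives of $\cG_{0,V}$, the derivatives naturally produce entries of $V^{-1}$, so the denominator-clearing step makes explicit something the paper leaves to the conventions of the cited reference; it also requires a mild extension of Lemma~\ref{lemma:poly} from polynomial to rational parameter dependence, which your sign-pattern argument (with the threshold absorbed into the polynomial) supplies correctly. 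Your remark on handling the zero set of $f_{\bfm,V}$ when taking logarithms is likewise a point the paper passes over silently.
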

\begin{proof}
Following the proof above, we observe that {\color{red}
\begin{align}\label{eq:pdim_photodetection}
\pdim{\cF_{\rm gp}}&\leq O(\pdim{\cF_{\rm g}}+\pdim{\cF_{\rm p}}),
\end{align}}
with $\cF_{\rm e, d}$ as before and
\begin{equation}
\cF_{\rm p}:=\left\{f_{\bfm,V}:(\bd,X,Y,q)\mapsto \sum_{\bk}q_{\bk}H_{\bk}^{(\tilde V)}(\tilde\bfm)\right\}.
\end{equation} 
{\color{red} Hence we only need to compute the pseudo-dimension of the latter class. Note that $\sum_{\bk}q_{\bk}H_{\bk}^{(\tilde V)}(\tilde\bfm)$ is a polynomial of order $n K$ in the vector $\tilde V\tilde\bfm$, by definition. In turn, both the matrix $\tilde V$ and the vector $\tilde m$ depend quadratically on $V_{\rm out}$ and $m_{\rm out}$ (see  \ref{subsubsec:gpDef} and~\cite{Dodonov1994}), which are linear functions of the parameters $\{V_{i,j}:i\geq j\}\cup\{m_i\}$, as discussed before. Hence $\tilde V\tilde\bfm$ is a polynomial of order $4$ in the parameters.
We conclude that a function $f_{\bfm,V}\in\cF_{\rm p}$ is a polynomial of order $4nK$ in its $2n^{2}+3n$ parameters and we conclude 
\begin{equation}
\pdim{\cF_{\rm p}}\leq 2(2n^{2}+3n)\log(28nK).
\end{equation}
Combining this pseudo-dimension with Theorem~\ref{thm:pdimMeas} and (\ref{eq:pdim_photodetection}) we obtain the theorem statement.}
\end{proof}
If we consider an analogous function class for channel learning, comprising functions 
\begin{equation}
f_{\bd,X,Y}:(\bfm,V,q)\mapsto P_{\rm gp}(M_{q}|\Phi_{\bd,X,Y},\sigma_{\bfm,V}),
\end{equation}
the pseudo-dimension stays the same. For measurement learning instead we have functions
\begin{equation}\label{eq:FockLearn}
f_{q}:(\bfm,V,\bd,X,Y)\mapsto P_{\rm gp}(M_{q}|\Phi_{\bd,X,Y},\sigma_{\bfm,V})
\end{equation}
which are linear in the parameters $q_{\bk}$. However the number of such parameters is $(K+1)^{n}$ and therefore the pseudo-dimension is exponential in the number of modes. 

\subsubsection{Generalized Gaussian circuits}\label{sec:ggCovNum}
Consider now the hypothesis class of $n$-mode GG states
\begin{equation}
\cC_{\rm gg}(n):=\{\rho_{\{c,\bfm,V\}_{\cI}}\}
\end{equation}
and the sample class of $n$-mode GG channels and measurements
\begin{equation}
\cS_{\rm gg}(n):=\{(\Phi_{\{c'',X,Y\}_{\cI''}},M_{\{c',\bfm',V'\}_{\cI'}})\}.
\end{equation}
Define then the function class of outcome probabilities obtained by applying a channel and measurement from $\cS_{\rm gg}(n)$ to a state in $\cC_{\rm gg}(n)$:
\begin{align}
&\cF_{\rm gg}(n):=\Bigg\{f_{\{c,\bfm,V\}_{\cI}}:(\{c'',X,Y\}_{\cI''},\{c',\bfm',V'\}_{\cI'})\nonumber \\
&\mapsto P_{\rm gg}(M_{\{c',\bfm',V'\}_{\cI'}}|\Phi_{\{c'',x,Y\}_{\cI''}},\sigma_{\{c,\bfm,V\}_{\cI}})\Bigg|\; \forall i,j,k\nonumber\\
& B_{1}\geq\abs{(\bfm_{\rm out}^{(ijk)}-\bfm_{j}')^{T}(V_{\rm out}^{(ik)}+V_{j}')^{-1}(\bfm_{\rm out}^{(ijk)}-\bfm_{j}')},\nonumber\\
& B_{2}\geq\sum_{i,j,k}\abs{c_{i} c_{j}' c_{k}''},\, B_{3}\leq \abs{\det{2\pi\left(V_{\rm out}^{(ik)}+V_{j}'\right)}}^{\frac12} \Bigg\},\label{eq:ggs}
\end{align}
where $B_{1,2,3}$ can be interpreted as mild constraints on the absolute values of the complex parameters and variables determining GG states, channels and measurements that concur to the creation of the class. Note that these constraints are connected with the degree of non-Gaussianity of the GG state or measurement, e.g., $B_{2}$ controls the number of components in the sum, while $B_{1,3}$ control the strength of the various terms in the sum; see Appendix~\ref{app:suffBounds} for a more explicit interpretation of the first constraint and Appendix~\ref{app:explicitGG} for an evaluation of these constraints for relevant classes of GG states, i.e., GKP, cat and Fock states. 

{\color{red} It is easy to show that the pseudo-dimension of $\cF_{\rm gg}$ is infinite, due to the potentially infinite number of terms in the complex-combination. Yet, we can show that its covering number is still finite, provided that $B_{1,2,3}$ are finite and that the coefficients of the GG measurements and channels we are learning are fixed. This further constraint is also mild, in the sense that it does not trivialize the learning problem. It amounts to fixing some properties of the setup of the device, i.e., evolution and measurement, that is employed to test the unknown state. 

For example, for the most relevant cases in~\cite{Bourassa2021}, fixing the GG measurement coefficients amounts to \emph{fixing a specific projection state} $\dketbra{\psi}$ in the GG class, e.g., a well-defined GKP, cat or Fock state; an admissible class of measurement operators will thus have the form
\begin{equation}\label{eq:admissible_gg_measurements}
    M(U) = U^\dag \dketbra{\psi} U,
\end{equation}
where $U$ are Gaussian unitaries labelling the measurement outcome, e.g., displacement operators $U=D(\br)$. Generalizing this construction, another example are measurement operators of the form $ M(U) = U^\dag M_0 U$, where now $M_0$ is a fixed GG operator, not necessarily projective nor rank-1.

Instead, fixing GG channel coefficients amounts to consider conditional dynamics based on partial measurements with \emph{fixed non-Gaussian components}, e.g.,
\begin{equation}
    \Phi(\rho) = \sum_i \Phi^{(i)}_{A}({\rm Tr}_{S}[M_{A}^{(i)}V_{SA}(\rho_S \otimes \sigma_A)V_{SA}^\dag]),
\end{equation}
where $\sigma$ and $\{M^{(i)}\}_{i}$ are  respectively a GG state and measurement on an ancilla register $A$, coupled to the input system $S$ via a Gaussian unitary $V$, and $\Phi^{(i)}$ are Gaussian quantum channels acting on the system conditional on the measurement outcome $i$. An admissible class of GG channels of this form is obtained by fixing $\{M^{(i)}\}_i$ and $\sigma$, while varying $V$ and $\{\Phi^{(i)}\}_i$.
This is fine for most practical applications, where non-Gaussian evolution is based on a well-defined measurement and ancilla, e.g., Fock damping, squeezed ancilla-assisted gates and the GKP T-gate mentioned in Ref.~\cite{Bourassa2021}.
}
\begin{theorem}\label{theorem:covNumBound}
Let $\{c_{j}'\}_{j\in\cI'}$, $\{c_{k}''\}_{k\in\cI''}$ be fixed and $\cN_{2}(\epsilon,\cF_{\rm gg},k)$ be the covering number with respect to the Euclidean distance. Then
\begin{align}\label{eq:covGG}
\log \cN_{2}(\epsilon,\cF_{\rm gg},k)\leq \left\lceil\left(\frac{2B}{\epsilon}\right)^{2}\right\rceil d\log\frac{2ek \tilde B}{d\epsilon},
\end{align}
with $d=O(n^{2}\log n)$, $B=\frac{B_{2}}{B_{3}}$ and $\tilde B=O(\log(B_{1}B))$.
\end{theorem}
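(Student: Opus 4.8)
The plan is to bound the covering number of $\cF_{\rm gg}$ by recognizing it as a proper convex combination (with fixed outer coefficients $\{c_j'\},\{c_k''\}$) of a simpler ``base'' class whose covering number we already control. Concretely, since the outcome probability \eqref{eq:gg} is
\begin{equation}
\sum_{i,j,k}c_i c_j' c_k''\,\cG_{\bfm_{\rm out}^{(ijk)},V_{\rm out}^{(ik)}+V_j'}(\bfm_j'),\nonumber
\end{equation}
and the $c_j',c_k''$ are held fixed, I would regard each function in $\cF_{\rm gg}$ as a combination, with coefficients $c_i$ summing to one, of elementary Gaussian-evaluation functions
\begin{equation}
g_{ijk}:(\{X,Y\}_{\cI''},\{\bfm',V'\}_{\cI'})\mapsto \cG_{\bfm_{\rm out}^{(ijk)},V_{\rm out}^{(ik)}+V_j'}(\bfm_j').\nonumber
\end{equation}
The constraint $B_2\geq\sum_{i,j,k}|c_i c_j' c_k''|$ means the combination coefficients have bounded $\ell_1$-norm, so up to the rescaling $B=B_2/B_3$ this is a convex-hull construction to which Lemma~\ref{lemma:convHull} applies, producing the outer factor $\lceil(2B/\epsilon)^2\rceil$ in \eqref{eq:covGG}.

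First I would bound the covering number of the base class of single Gaussian evaluations. Writing $\cG_{\bfm_{\rm out},V_{\rm out}+V'}(\bfm_j')=e^{-\frac12 E}/\sqrt{\det{2\pi(V_{\rm out}+V')}}$, I would treat the exponent $E=(\bfm_{\rm out}^{(ijk)}-\bfm_j')^T(V_{\rm out}^{(ik)}+V_j')^{-1}(\bfm_{\rm out}^{(ijk)}-\bfm_j')$ and the determinant prefactor exactly as in the proof of Theorem~\ref{thm:pdimMeas}: each is a polynomial-type function of $O(n^2)$ parameters, so the pseudo-dimension of the logarithm of the base class is $d=O(n^2\log n)$. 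Then Lemma~\ref{lemma:monot} (monotonicity under $\exp$/$\log$) and the covering–pseudo-dimension bound \eqref{eq:covPdim} give $\log\cN_2(\epsilon',\text{base},k)\le d\log\frac{2eBk}{d\epsilon'}$; the constraints $B_1,B_3$ enter precisely to bound the output range of the base functions, yielding the $\tilde B=O(\log(B_1 B))$ inside the logarithm. I would also verify the three hypotheses of Lemma~\ref{lemma:convHull} ($\cF=-\cF$, $0\in\cF$, boundedness) for the symmetrized base class, which is routine since multiplying coefficients by $\pm1$ and zero is allowed and the bounds $B_{1,3}$ force boundedness.

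The main obstacle, and the step requiring the most care, is matching the $\epsilon$-bookkeeping between the inner cover and the convex-combination cover so that the stated scales align. Lemma~\ref{lemma:convHull} covers $\mathfrak{C}(\cF)$ at scale $\epsilon$ using a cover of $\cF$ at the \emph{same} scale $\epsilon$, whereas the exponential map relating $\log\cF$ to the Gaussian values is only locally Lipschitz, so I would invoke Lemma~\ref{lemma:lipschitz} to transfer a cover of the log-class (controlled by pseudo-dimension) to a cover of the Gaussian-valued base class, picking up a Lipschitz constant governed by $B_1,B_3$ that is absorbed into $\tilde B$. Threading the determinant prefactor through, either via Lemma~\ref{lemma:covProd} on the product of the exponential factor and the prefactor or by folding it into a single polynomial estimate, is where the constants $2B/\epsilon$ and $2ek\tilde B/(d\epsilon)$ in \eqref{eq:covGG} are finally pinned down; the qualitative result is insensitive to these choices, but getting the exact exponents and arguments of the logarithms is the delicate part.
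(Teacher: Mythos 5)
There is a genuine gap: your base class is treated as if its elements were real, positive Gaussians whose logarithm is a polynomial in the parameters, but in the GG class the coefficients $c_i$, the means $\bfm_i$ and the covariances $V_i$ are \emph{complex}, so each elementary term $\cG_{\bfm_{\rm out}^{(ijk)},V_{\rm out}^{(ik)}+V_j'}(\bfm_j')$ is a complex-valued, oscillating function; only the total sum is real. Consequently the step ``the pseudo-dimension of the logarithm of the base class is $d=O(n^2\log n)$ by Lemma~\ref{lemma:monot}'' does not go through: the real part of a complex Gaussian has the form $\frac{e^{-R_{ijk}}}{M_{ijk}}\cos(I_{ijk}+A_{ijk}-\theta_{ijk})$, which changes sign, has no well-defined logarithm, and is not a monotone image of a polynomial class. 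The paper's proof hinges precisely on this point: it splits each term into a positive modulus class $\cF_{\rm exp}$ and a trigonometric class $\cF_{\rm trig}=\cos(\cF_{\rm im}+\cF_{\rm ang}+\cF_{\rm const})$, controls the product via Lemma~\ref{lemma:covProd}, and handles the cosine not through monotonicity but through its $1$-Lipschitz property (Lemma~\ref{lemma:lipschitz}) applied to the sum of the imaginary-exponent, angular-determinant and coefficient-phase classes. This is also where $B_1$ actually enters --- as a bound on the \emph{imaginary} part $I_{ijk}$ of the quadratic form, fixing the range of the argument of the cosine --- not merely as a bound on the output range of a real Gaussian as in your sketch.

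A second, related slip is in the convex-hull step: the inner coefficients $c_i$ are complex with arbitrary phases, so one cannot directly view $\cF_{\rm gg}$ as a convex combination of your $g_{ijk}$ with coefficients ``summing to one.'' The paper first extracts the moduli $p_{ijk}=|c_ic_j'c_k''|/B_2$ (which sum to at most one, justifying $\mathfrak{C}$) and pushes the phases $\theta_{ijk}$ into the constant class $\cF_{\rm const}$ inside the cosine; your remark that symmetrization by $\pm1$ suffices does not cover arbitrary complex phases. Your overall architecture (convex hull via Lemma~\ref{lemma:convHull} for the $\lceil(2B/\epsilon)^2\rceil$ factor, pseudo-dimension-to-covering-number conversion via \eqref{eq:covPdim} for the inner factor) matches the paper, but without the modulus--phase decomposition the inner covering-number bound cannot be established, so the proof as proposed would not close.
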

\begin{proof}
Since the outcome probability \eqref{eq:gg} is a real function we can rewrite it as $P_{\rm gg}={\rm Re}P_{\rm gg}$, obtaining
\begin{align}
P_{\rm gg}&=B_{2}\sum_{ijk}{\rm Re}\left[p_{ijk}\frac{e^{-R_{ijk}}}{M_{ijk}}e^{-\i(I_{ijk}+A_{ijk}-\theta_{ijk})}\right]\\
&=B_{2}\sum_{ijk} p_{ijk}\frac{e^{-R_{ijk}}}{M_{ijk}}\cos(I_{ijk}+A_{ijk}-\theta_{ijk})
\end{align}
where in the first equality we have defined the real and imaginary parts of the Gaussian's exponent,
{\begin{align}
 &R_{ijk} :=\frac12{\rm Re}\left[(\bfm_{\rm out}^{(ijk)}-\bfm_{j}')^{T}(V_{\rm out}^{(ik)}+V_{j}')^{-1}(\bfm_{\rm out}^{(ijk)}-\bfm_{j}')\right],\\
  &I_{ijk} :=\frac12{\rm Im}\left[(\bfm_{\rm out}^{(ijk)}-\bfm_{j}')^{T}(V_{\rm out}^{(ik)}+V_{j}')^{-1}(\bfm_{\rm out}^{(ijk)}-\bfm_{j}')\right],
\end{align}}
the radial and angular parts of the Gaussian's normalization,
\begin{align}
&M_{ijk}:=\left|\det{2\pi\left(V_{\rm out}^{(ik)}+V_{j}'\right)}\right|^{\frac12},\\
&A_{ijk}:=\frac12\arccos{\frac{{\rm Re}\,\det{2\pi\left(V_{\rm out}^{(ik)}+V_{j}'\right)}}{M_{ijk}}},
\end{align}
and the radial and angular parts of the product of the complex combination coefficients,
\begin{equation}
p_{ijk}:=\frac{|c_{i}c_{j}'c_{k}''|}{B_{2}},\quad  \theta_{ijk}:=\arccos\frac{{\rm Re}(c_{i}c_{j}'c_{k}'')}{B_{2}\, p_{ijk}},
\end{equation}
the former rescaled so that $\sum_{i,j,k}p_{ijk}\leq1$. 
Let us then define the function classes
{\begin{align}
&\cF_{\rm exp}:=\left\{f_{(\bfm,V)_{i}}:((X,Y)_{k},(\bfm',V')_{j})\mapsto \frac{e^{-R_{ijk}}}{M_{ijk}}\right\},\\
&\cF_{\rm im}:=\left\{f_{(\bfm,V)_{i}}:((X,Y)_{k},(\bfm',V')_{j})\mapsto I_{ijk}\right\},\\
&\cF_{\rm ang}:=\left\{f_{(\bfm,V)_{i}}:((X,Y)_{k},(\bfm',V')_{j})\mapsto A_{ijk}\right\},\\
&\cF_{\rm const}:=\{\theta_{ijk}\}, \quad \cF_{\rm trig}:=\cos(\cF_{\rm im}+\cF_{\rm ang}+\cF_{\rm const}), 
\end{align}} 
where the latter makes use of the sum and composition operations defined before. In particular, note that each value of $\theta_{ijk}$ determines a different function of $\cF_{\rm trig}$ and that we can assume without loss of generality $\theta_{ijk}\in[0,2\pi)$. 

{\color{red} Putting all these definitions together, we conclude that $\cF_{\rm gg}\subseteq B_{2}\mathfrak{C}(\cF_{\rm exp}\cdot\cF_{\rm trig})$, where $\mathfrak C$ represents the convex-hull operation, and the product of two function classes contains all the products of two functions, one from each class. } We stress that the convex-hull coefficients $p_{ijk}$ can depend exclusively on the parameters of the function class and not on the input variables, which obliges us to fix the channel and measurement coefficients in the assumptions of the Theorem.  Then we can apply Lemmas~\ref{lemma:convHull},\ref{lemma:covProd} to prove that
\begin{align}\label{eq:covN1}
&\log\cN_{2}(\epsilon,\cF_{\rm gg},k)\leq\left\lceil\left(\frac{2B_{2}}{\epsilon B_{3}}\right)^{2}\right\rceil\log\cN_{2}\left(\frac{\epsilon}{2B_{2}},\cF_{\rm exp}\cdot\cF_{\rm trig},k\right)\nonumber\\
&\leq \left\lceil\left(\frac{2B_{2}}{\epsilon B_{3}}\right)^{2}\right\rceil\log(\cN_{2}\left(\tilde\epsilon,\cF_{\rm exp},k\right) \cN_{2}\left(\tilde\epsilon,\cF_{\rm trig},k\right)).
\end{align}
where $\tilde\epsilon=\frac{\epsilon B_{3}}{2 B_{2} (1+B_{3})}$, and we have used the fact that functions in $\cF_{\rm exp}$ are bounded by $1/B_{3}$, while those in $\cF_{\rm trig}$ by $1$.
Note furthermore that, by Lemma~\ref{lemma:lipschitz} and since $\cos(x)$ is $1$-Lipschitz, it holds $\cN_{2}\left(\tilde\epsilon,\cF_{\rm trig},k\right)\leq\, \cN_{2}\left(\tilde\epsilon,\cF_{\rm im}+\cF_{\rm ang}+\cF_{\rm const},k\right)$. We then proceed to evaluate the pseudo-dimensions of $\cF_{\rm exp}$, $\cF_{\rm im}$, $\cF_{\rm ang}$ and $\cF_{\rm const}$, which in turn will allow to bound their covering numbers via Eq.~\eqref{eq:covPdim}.

For the first class we can follow the same steps of Theorem \ref{thm:pdimMeas}, observing that $\log\cF_{\rm exp}\subset {\rm Re} \tilde\cF_{\rm e} +\frac14 \log|\tilde\cF_{\rm d}|^{2}$, with $\tilde\cF_{\rm e, d}$ generalizing the classes of Eqs. ~(\ref{eq:f1},\ref{eq:f2}) to complex-valued parameters and variables, without symmetry constraints. This implies that the functions in $\tilde\cF_{\rm d}$ and $\tilde\cF_{\rm e}$ are polynomials in, respectively, $8n^{2}$ and $2(4n^{2}+2n)$ real parameters, of the same order as $\cF_{\rm d}$ and $\cF_{\rm e}$. Furthermore, note that the degree of a complex polynomial does not change by taking its real part, while it doubles by taking its radial component squared. Therefore we conclude that the functions in ${\rm Re}\tilde\cF_{\rm e}$ are polynomials of degree $4n+1$, implying 
\be
\pdim{{\rm Re}\tilde\cF_{\rm e}}\leq  4(4n^{2}+2n)\log(12(4n+1)),
\end{equation} 
while the functions in $|\tilde\cF_{\rm d}|^{2}$ are polynomials of degree $4n$, implying
\be
\pdim{|\tilde\cF_{\rm d}|^{2}}\leq 16n^{2}\log(48n),
\end{equation} 
so we conclude that $\pdim{\cF_{\exp}}=O(n^{2}\log n)=:d$.
We can now apply Eq.~\eqref{eq:covPdim}, observing that the functions in $\cF_{\exp}$ have range $[0,\frac{1}{B_{3}}]$, obtaining
\begin{equation}\label{eq:covN2}
\log \cN_{2}\left(\tilde\epsilon,\cF_{\rm exp},k\right)\leq d \log\frac{2ek}{d\tilde\epsilon B_{3}}.
\end{equation}

We proceed similarly to bound the covering number of $\cF_{\rm im}+\cF_{\rm ang}+\cF_{\rm const}$. Firstly, it is well-known that $\pdim{\cF_{\rm const}}=1$; secondly, observe that $\cF_{\rm im}=\rm{Im}\tilde\cF_{\rm e}$, hence we conclude immediately $\pdim{\cF_{\rm im}}=\pdim{{\rm Re}\tilde\cF_{\rm e}}$. The class $\cF_{\rm ang}$ instead requires a bit more work: using the fact that $\arccos$ (restricted to the interval $[-1,1]$) and $\log$ are strictly monotonous functions we have, applying again Lemma~\ref{lemma:monot}, that
\begin{align}
\pdim{\cF_{\rm ang}}&\leq\pdim{{\rm Re}\tilde\cF_{\rm d}\cdot|\tilde\cF_{\rm d}|^{-\frac12}}\\
&\leq\pdim{\log{\rm Re}\tilde\cF_{\rm d}-\frac14\log|\tilde\cF_{\rm d}|^{2}}\\
&\leq O(\pdim{{\rm Re}\tilde\cF_{\rm d}}+\pdim{|\tilde\cF_{\rm d}|^{2}}).
\end{align}
It is then easy to show that $\pdim{{\rm  Re}\tilde\cF_{\rm d}}=16n^{2}\log(24n)$ and conclude that $\pdim{\cF_{\rm ang}}=O(n^{2}\log n)$ as well. 
Finally, we apply Eq.~\eqref{eq:covPdim}, observing that for any function $f\in(\cF_{\rm im}+\cF_{\rm ang}+\cF_{\rm const})$ it holds $|f|\leq |I_{ijk}+A_{ijk}+\theta_{ijk}| \leq B_{1} + 3\pi$, so that
\begin{equation}\label{eq:covN3}
\log \cN_{2}\left(\tilde\epsilon,\cF_{\rm trig},k\right)\leq d \log\frac{2e(B_{1}+9)k}{d\tilde\epsilon}.
\end{equation}
By combining Eqs.~(\ref{eq:covN1},\ref{eq:covN2},\ref{eq:covN3}) we obtain the claim.
\end{proof}


\subsection{SC bounds for CV state, measurement and channel learning}\label{sec:learnProofs}
The quantities computed in Sec.~\ref{sec:compBounds} can be straightforwardly employed to obtain SC bounds for the state, measurement and channel learning problems described in Sec.~\ref{sec:stateLearn}. The simplest case is that where only Gaussian or GP circuit components are involved:
\begin{proof}(Theorem~\ref{thm:statelearnedot} for Gaussian or GP circuits)
The function classes $\cF_{\rm g}$, Eq.~\eqref{eq:gClass}, and $\cF_{\rm gp}$, Eq.~\eqref{eq:gpClass}, arise when trying to learn Gaussian states with Gaussian or GP channels and measurements. The SC of these tasks can then be evaluated by applying Theorem~\ref{thm:genLearn} to the classes $\cF_{\rm g, gp}$, concluding that the following number of samples is sufficient for learning
\begin{align}
&T_{\rm g}=O\left(\frac{1}{\nu^{4}\epsilon^{2}}\left(n^{2}\log(n)\log^{2}\frac{1}{\nu\epsilon} + \log\frac1\delta\right)\right),\\
&T_{\rm gp}=O\left(\frac{1}{\nu^{4}\epsilon^{2}}\left(n^{2}\log(nK)\log^{2}\frac{1}{\nu\epsilon} + \log\frac1\delta\right)\right),
\end{align}
where we employed $\fdim{\cF}{\cdot}\leq\pdim{\cF}$ and the pseudo-dimension bounds of Theorems~\ref{thm:pdimMeas},\ref{thm:photocount}.
\end{proof}
An analogous proof technique can be employed for channel or measurement learning. 
Let us note that these scalings are perhaps not particularly surprising, since tomography of Gaussian states and processes also requires a number of measurements scaling polynomially with the number of modes. Indeed, for an $n$-mode Gaussian state $\rho_{\bfm,V}$, the mean $\bfm$ can be estimated with error $\epsilon$ via $O(n \epsilon^{-2})$ single-mode homodyne measurements, while $V$ requires additional $O(n^{2}\epsilon^{-2})$ homodyne measurements. Therefore the SC complexity for learning a Gaussian state using Gaussian or photodetection measurements seems comparable with that of performing full tomography. 

On the other hand, the surprising result is that a similar scaling is obtained when the state is not Gaussian, but rather GG, as proved below:
\begin{proof}(Theorem~\ref{thm:statelearnedot} for GG circuits)
In this case we would need to apply Theorem~\ref{thm:genLearn} to the function class $\cF_{\rm gg}$~\eqref{eq:ggs}. However, we only have a bound on its covering number and not on its fat-shattering or pseudo-dimension. 

We then need to repeat the proof of~\cite[Theorem 12]{Bartlett1998}, upon which the SC of Theorems~\ref{thm:genLearn},\ref{thm:genPred} relies; indeed, we can apply directly our covering number bound, right after the following statement is proved (Theorem 12 thereof):
{\begin{align}
{\rm Pr}_{\{(x_{t},b_{t})\}\sim (Q\cdot P)^{\times T}}&\left[\left|\frac1T\sum_{t=1}^{T}\cL(x_{t},b_{t})-\mathbb{E}\cL(x,b)\right|\geq\epsilon'\right]\nonumber\\
&\leq4\,\cN_{\tilde 1}\left(\frac{\epsilon'}{2}-\alpha,\cF,2T\right)e^{-\alpha^{2} \frac T2},\label{eq:bartLearn}
\end{align}}
where $\alpha=\lceil1/(\epsilon' \kappa)\rceil^{-1}$, $0<\kappa\leq1/4$. Note that we can take $\cL(x,b)=(f(x)-b)^{2}$ for all $f\in\cF_{\rm gg}$ to recover our learning problem (as shown in~\cite[Theorem 8.1]{Aaronson2007}).

Importantly, the covering number $\cN_{\tilde 1}$ is computed with respect to a rescaled $1$-norm distance, $ d_{\tilde 1}(\bx,\by):=\frac1T\sum_{t=1}^{T}|x_{t}-y_{t}|$, for $\bx,\by\in \cX^{\times T}$, rather then the Euclidean distance $d_{2}(\bx,\by)$ used in Eq.~\eqref{eq:covGG}.
We can bound this covering number as follows:
\begin{align}
\cN_{\tilde 1}\Big(\frac{\epsilon'}{2}-\alpha,&\cF,2T\Big) \leq \cN_{\tilde 1}\left(\frac{\epsilon'}{4},\cF,2T\right) \leq\cN_{2}\left(\frac{\epsilon'}{4},\cF,2T\right) \nonumber\\
&\leq \cN_{2}\left(\frac{\epsilon'}{4},\cF_{gg},2T\right)^{2}\\
&\leq \exp\left(2\left\lceil\left(\frac{8B}{\epsilon'}\right)^{2}\right\rceil d\log\frac{16eT \tilde B}{d\epsilon'}\right).
\end{align}
The first inequality above follows from the fact that the covering number is a decreasing function of the covering's size and that $\epsilon'/2-\alpha\geq\epsilon'/4$;  the second one instead follows from the fact that covering numbers increase when computed with respect to a larger distance and clearly $d_{\tilde 1}\leq d_{2}$; the third inequality follows from $\cF=(\cF_{gg})^{2}$ and Lemma~\ref{lemma:covProd}; finally, the last one follows from Theorem~\ref{theorem:covNumBound} above, with $d=O(n^{2}\log n)$, $B=\frac{B_{2}}{B_{3}}$ and $\tilde B=O(\log(B_{1}B))$.


Using this bound and~\cite[Lemma 18]{Bartlett1998}, with $y_{1}=4$, $y_{2}=2d\left\lceil\left(\frac{8B_{2}}{\epsilon' B_{3}}\right)^{2}\right\rceil$, $y_{3}=\frac{16eB}{d\epsilon'}$ and $y_{4}=\frac{\alpha^{2}}{2}$, we can now obtain a sufficient $T$ for learnability to hold, i.e., such that the right-hand side of Eq.~\eqref{eq:bartLearn} is bounded by $\delta$:
\begin{equation}\label{eq:final_onslaught}
T_{\rm gg}=O\left(\frac{d B^{2}}{\epsilon'^{4}}\log\frac{B \tilde B}{\epsilon' d }+\frac{1}{\epsilon'^{2}}\log\frac1\delta\right),
\end{equation}
which implies the Theorem's statement for $\epsilon'=\nu^{2}\epsilon$.
\end{proof}
Also in this case, the proof can be easily extended to the problems of channel and measurement learning. Importantly, note that a crucial condition for applying Theorem~\ref{theorem:covNumBound} is that the GG combination coefficients are fixed for the sample class. Hence, for example, we can guarantee efficient learnability of arbitrary GG states only with respect to GG channels and measurements with fixed coefficients. This covers interesting cases, e.g.: (i) Gaussian channels and measurements, (ii) channels generated by a Gaussian interaction with a fixed GG ancillary state and (iii) measurements generated by projecting on a fixed GG state after applying a Gaussian unitary.

{\color{blue} The resulting SC is $O(n^{2}\epsilon^{-4})$, obtained by inserting in \eqref{eq:final_onslaught} $\epsilon'=O(\epsilon)$ and $d=O(n^2)$ and hiding logarithmic terms. Thus, the SC of learning this non-Gaussian class of states }is only $\epsilon^{-2}$-larger than the SC of learning Gaussian states or performing Gaussian state tomography. Furthermore, it is clear that Gaussian state tomography does not work at all in this case, since it allows a correct reconstruction of just the mean and covariance matrix of the unknown GG state, which are not sufficient to reconstruct its measurement statistics.

\subsection{SC bounds for CV circuit compilation}\label{sec:predProofs}
The techniques employed in Sec.~\ref{sec:learnProofs} can also be applied to bound the SC of learning the optimal family of quantum circuits for a discrimination or synthesis task, as described in Sec.~\ref{sec:circuitTrain}. However, there are two major differences: (i) we have to consider a prediction version of the problem, i.e., applying Theorem~\ref{thm:genPred}; (ii) we have to consider function classes of the form $\cF\circ\cE$, obtained via composition of an encoding class $\cE:=\{x\mapsto\bgamma(x)\}$, that maps input variables into circuit parameters, and a probability-valued function class, that maps circuits into outcome probabilities as before, i.e., $\cF_{\rm g, gp, gg}$.
{\color{blue} Indeed, in the case of task learnability, differently form the learning setting of Sec.~\ref{sec:stateLearn}, one can allow the freedom to optimize also a classical function that maps the input random variable $x$ to a vector  $\bgamma(x)$, which determines the Gaussian and non-Gaussian parameters of the hypotheses. For example, suppose we want to learn an optimal Gaussian measurement $\{M_x\}$ for a discrimination problem. Then, each $M_x$ will be parametrized in terms of a mean-value vector and covariance matrix $(\bfm_x, V_x)$, but the specific dependence of these on the classical input $x$ still requires to be specified. One can do so via the classical of functions $\bgamma(x)$, whose components give the components of $\bfm_x$ and matrix elements of $V_x$. In this way, one can effectively allow a compilation not only of the quantum circuit performing the measurement, but also of the classical post-processing of the emasurement outcomes.

Given the function classes $\cF$ and $\cE$, }if the resulting function class $\cF\circ\cE$ also has bounded pseudo-dimension or covering number, then we obtain learnability also in this case. Unfortunately, the properties of such complexity measures under composition of classes are not simple to apply in our case, and hence we cannot provide a formula for general $\cE$. Still, we can prove learnability for the specific class of polynomial encoding functions: 
\begin{proof}(Theorem~\ref{thm:circuitTrainTot})
The SC of learning Gaussian or GP circuits for discrimination and synthesis can be computed by applying Theorem~\ref{thm:genPred} to the function classes $\cF_{\rm g, gp}\circ\cE_{\ell}$, where
\be
\cE_{\ell}:=\left\{x\mapsto\bgamma(x)|\gamma_{i}(x)={\rm poly}_{\ell}(x)\forall i\right\}
\end{equation}
is a class of encoding functions that are polynomials of order $\ell$ in the input $x$. 

For their pseudo-dimension we can proceed as in Theorems \ref{thm:pdimMeas},\ref{thm:photocount}, up until the point where we need to evaluate the pseudo-dimension of $\cF_{\rm e, d, p}\circ \cE_{\ell}$, which are still classes of polynomials in their defining parameters. In particular, note that the encoding $\cE_{\ell}$ contributes a total of $\ell$ additional parameters for each input of $\cF$, which has always $O(n^{2})$ inputs, and that each of these parameters appears with a linear dependence in the encoding function. The only thing we have to be careful about is counting the order of the original functions in the inputs. 
For example consider the class $\cF_{\rm d}$, which is a polynomial of order $2n$ in the $2n(2n+1)$ parameters $\{V_{i,j}:i\geq j\}$, as before, but also a polynomial of the same order in the variables $\{V'_{i,j}:i\geq j\}$ and a polynomial of order $4n$ in the variables $\{X,Y\}$. However, when we compose it with the encoding class, $\cF_{\rm d}\circ\cE_{\ell}$, each of these variables becomes a parametrized function from $\cE_{\ell}$, in the form of a polynomial of order $1$ in $\ell$ additional parameters. Thus we conclude that the functions in $\cF_{\rm d}\circ\cE_{\ell}$ are polynomials of order $O(n)$ in $O(\ell n^{2})$ parameters. Similar considerations hold for the other two classes.
Therefore, the overall modification to the pseudo-dimensions is the presence of $O(\ell n^{2})$ parameters and we obtain 
\begin{align}
&T_{\rm g}=O\left(\frac{1}{\nu^{4}\epsilon^{2}}\left(\ell n^{2}\log(n)\log^{2}\frac{1}{\nu\epsilon} + \log\frac1\delta\right)\right),\\
&T_{\rm gp}=O\left(\frac{1}{\nu^{4}\epsilon^{2}}\left(\ell n^{2}\log(nK)\log^{2}\frac{1}{\nu\epsilon} + \log\frac1\delta\right)\right).
\end{align}
In the case of GG circuits, the proof is analogous but we further need to restrict to encodings that map the input $x$ to GG circuit parameters with fixed combination coefficients. 
\end{proof}

Finally, note that the class of polynomial encodings covers several cases of interest, since polynomials of arbitrary degree can be used to approximate any desired function. 
We believe that similar results can be obtained for larger classes of encoding functions, e.g., neural networks, but leave this proof open for future works.

{\color{blue}
\section{Discussion of learnability for GG states}
\subsection{Sufficient conditions for learnability of GG states}\label{app:suffBounds}
Here we manipulate further the $B_{1}$-constraint for GG learnability, clarifying its physical significance. For a generic complex-valued vector $\bx$ and matrix $A$ we have
\begin{align}
|\bx^{T}A\bx|&=\ave{\bx^{*},A\bx}\leq ||\bx^{*}||\cdot||A\bx||\\
&\leq ||\bx||^{2}\cdot||A||_{\infty} =  ||\bx||^{2}\cdot\lambda_{\max}(|A|),
\end{align}
where $\ave{\bs,\br}=\bs^{\dag}\br$ is the standard Euclidean inner product over the complex numbers and $||\bs||=\sqrt{\ave{\bs,\bs}}$ the induced norm. Futhermore, the first inequality is obtained via the Cauchy-Schwartz inequality, while the second one from the definition of the operator norm $||A||_{\infty}:=\sup_{\bx}||A\bx||/||\bx||$. Applying this result with $\bx=\bfm_{\rm out}^{(ijk)}-\bfm_{j}'$ and $A=(V_{\rm out}^{(ik)}+V_{j}')^{-1}$ we obtain
\begin{equation}
I_{ijk}\leq ||\bfm_{\rm out}^{(ijk)}-\bfm_{j}'||^{2}\cdot\lambda_{\min}(|V_{\rm out}^{(ik)}+V_{j}'|)^{-1}.
\end{equation}
If the mean values $||\bfm_{\rm out}^{(ijk)}||, ||\bfm_{j}'||$ are bounded for each $i,j$, then the first term in the product above is also bounded; this condition implies that the SC increases when trying to learn a GG state with larger mean. 
As for the second term, it is upper-bounded whenever the minimum eigenvalue of  $|V_{\rm out}^{(ik)}+V_{j}'|$ is lower-bounded. For example, using Weyl's inequality we obtain the bound
\begin{align}
\lambda_{\min}(&|V_{\rm out}^{(ik)}+V_{j}'|)^{2}\geq \lambda_{\min}(V_{j}'V_{j}'^{\dag})\\
&+\lambda_{\min}(V_{\rm out}^{(ik)}V_{\rm out}^{(ik)\dag})+\lambda_{\min}(V_{j}'V_{\rm out}^{(ik)\dag}+V_{\rm out}^{(ik)}V_{j}'^{\dag}).\nonumber
\end{align}
This condition is related to how much the single components of the GG state violate the uncertainty principle, since we know that the latter bounds the eigenvalues of the covariance matrix of a quantum state away from zero.
Hence, the more unphysical the single GG state components are, the larger the SC required to learn it. Finally, note that a lower bound on $\lambda_{\min}(|A|)$ implies also a lower bound on $|\det{A}|$ and therefore one could absorb the third constraint $B_{3}$ into the first one.

\subsection{Learnability constraints for useful classes of GG states}\label{app:explicitGG}
Here we analyze in more detail the significance of the GG learnability constraints on specific kinds of non-Gaussian states that can be approximated by the GG class. We employ the analysis of~\cite[Section IV]{Bourassa2021} and estimate the explicit dependence of $B_{1,2,3}$ in Theorem~\ref{theorem:covNumBound} on the states' parameters. Specifically, we need to obtain a lower bound on $\lambda_{\min}(|V_{i}|)$ and upper bounds on $||\bfm_{i}||$, $\sum_{i}|c_{i}|$, since then the constants have to scale at least as $B_{1}=O(||\bfm_{i}||\lambda_{\min}(|V_{i}|)^{-1})$, $B_{2}=O(\sum_{i}|c_{i}|)$ and $B_{3}=\Omega(\lambda_{\min}(|V_{i}|)^{\frac12})$ with the parameters. Clearly, there is an extra dependence on the channel and measurement parameters that we don't analyze here, though we assume that these have bounded parameters and non-zero displacements.  
\begin{description}
\item[Finite-energy GKP states] the relevant parameters are the damping factor $\epsilon$ and the lattice size $L$, i.e., the maximum position or momentum allowed when truncating the state; the total number of terms in the sum is $L^{2}$. Then, from \cite[Eq.~43]{Bourassa2021} we obtain $\lambda_{\min}(V_{i})\gtrsim\epsilon$, $\norm{\bfm_{i}}\lesssim(1-2\epsilon)2 L$ at first order in $\epsilon$, i.e., when the states approximate GKP states well, and $\sum_{i}|c_{i}|\leq L^{2}||\bfm_{i}||\lesssim(1-2\epsilon)2L^{3}$. Therefore 
\be
B_{1}=O(L\epsilon^{-1}),\; B_{2}=O(L^{3}),\; B_{3}=\Omega(\epsilon^{\frac12}).
\end{equation}
\item[Cat states] these can be expressed exactly as a GG state with four components \cite[Eq.~55]{Bourassa2021}, with $V_{i}=\1$, $||\bfm_{i}||\lesssim|\alpha|$ and $\sum_{i}|c_{i}|=1$ for the ``+'' cat state, while $\sum_{i}|c_{i}|=(1+e^{-2\abs{\alpha}^{2}})(1-e^{-2\abs{\alpha}^{2}})^{-1}$ for the ``-'' cat state. Here $\alpha$ is the amplitude of the coherent states and a larger $|\alpha|$ identifies a ``more macroscopic'' superposition; hence the region of interest is that of large $|\alpha|$, where $\sum_{i}|c_{i}|\sim 1$ also for the ``-'' cat state. In conclusion, for large-energy cat states, i.e., $\abs{\alpha}^{2}\gg 1+O(e^{-2\abs{\alpha}^{2}})$, we have
\be
B_{1}=O(|\alpha|),\; B_{2}=O(1),\; B_{3}=\Omega(1).
\end{equation}
\item[Fock states] for a cutoff $K$ on the maximum photon number, we can approximate Fock states using squeezed GG states~\cite[Eq.~61]{Bourassa2021}, governed by a parameter $r<\frac{1}{\sqrt{K}}$, which approaches exact Fock states in the limit $r\rightarrow0$. Then we have $\lambda_{\min}(|V_{i}|)\lesssim 1+O(K r^{2})$, $||\bfm_{i}||=0$ and $\sum_{i}|c_{i}|\lesssim K e^{O(r^{-2})+O(K\log r^{-2})}$. Therefore $B_{1}=B_{3}=O(1)$ while
\be
B_{2}=e^{O(r^{-2})+O(K\log r^{-2})+O(\log K)},
\end{equation}
which means that photon-number states can be learned with exponential difficulty in the photon-number $K$ and the approximation precision $r^{-2}$, consistently with the predictions of Eq.\eqref{eq:FockLearn}. 
\end{description}
}

\section{Conclusions}\label{sec:conclusions}
The interpretation of our results relies on a good understanding of two key terms: (i) \emph{learnability} means that we can find a quantum circuit that approximates an unknown process using only samples of random measurements performed on the latter; (ii) \emph{efficiency} means that the number of samples needed to reach a good approximation is polynomial in the circuit's characteristics, e.g. its size.

In this article, we established efficient learnability for CV quantum circuits comprising Gaussian and certain non-Gaussian components. These circuits describe, for example, photonic processors at the state-of-the-art, as well as more futuristic ones based on approximate cat or GKP states. Moreover, our results are platform-independent and hence also apply to other infinite-dimensional quantum circuits based, for example, on mechanical or cavity-coupled resonators. 

Our learnability result for CV circuits applies to a variety of information-processing tasks of general interest: (i) approximating the measurement statistics of an unknown state or process and (ii) approximating the optimal circuit for state or process discrimination and for pure-state synthesis. 

Furthermore, we believe that our methods can be extended to prove efficient learnability for CV circuits comprising CV states \emph{with bounded $P$-function} and Gaussian or GG channels and measurements. Indeed, every CV state can be written as an affine combination of Gaussian coherent states via the $P$-representation~\cite{serafiniBOOK}; hence the Gaussian or GG measurement statistics on this state is itself an affine combination of probabilities of the form Eqs.~(\ref{eq:g},\ref{eq:gg}) and we can bound the covering number of these classes using the same methods of Theorem~\ref{theorem:covNumBound}.

Finally, our results are a starting point for a systematic study of statistical learning theory in CV quantum systems, leaving several open problems of interest, including: learnability of general CV circuits (see comment above); sample-complexity lower bounds; explicit learning algorithms for CV circuits and their application to specific tasks (e.g., the discovery of optimal decoders for optical communication~\cite{Notzel2022a
,Rosati2016,Rosati2017a,Fanizza2020b} and state discrimination~\cite{Bilkis2020,Rosati17c} and the approximation of quantum processes with restricted resources~\cite{Diaz2020}); analysis of more challenging learning settings, where one is allowed to choose the quantum measurements to be applied on the samples and has to guarantee the approximation of multiple measurement operators at once~\cite{Aaronson2018,Huang2020b,Fanizza2022}.

\section{Acknowledgements}
M.R. acknowledges support from the Einstein Foundation. This project has received funding from the European Union's Horizon 2020 research and innovation programme under the Marie Sk\l odowska-Curie grant agreement No 845255 and the project PNRR - Finanziato dall'Unione europea - MISSIONE 4 COMPONENTE 2 INVESTIMENTO 1.2 - ``Finanziamento di progetti presentati da giovani ricercatori'' - Id MSCA\_0000011-SQUID - CUP F83C22002390007 (Young Researchers) - Finanziato dall'Unione europea - NextGenerationEU.

\appendix
{\color{blue}
\section{Agnostic learning of probabilistic concepts with point-wise guarantees} \label{app:proof_pfunction_learnability}
Here we prove a learning theorem for probabilistic concepts analogous to ~\cite[Supp. Mat. Theorem 1]{Aaronson2007}, but dropping the assumption that the unknown concept $f$ is contained in the hypothesis class $\cF$. We use a quadratic loss to measure the distance of a hypothesis $h\in\cF$ from the true concept on input $x$.
In this agnostic setting $f\notin\cF$, hence the aim is to ensure that, if one finds a hypothesis $h\in\cF$ with small loss on the training set, then $h$ generalizes well, i.e., its loss is small also on unseen inputs. For the proof we need the following theorem on function learning from interpolation by Anthony and Bartlett~\cite{Anthony2000}, which applies to the agnostic setting:
\begin{theorem}
    (\cite[Corollary 3.3]{Anthony2000})\label{thm:learning_from_interpolation}  Let $\cX$ be a sample space, $Q$ a distribution on $\cX$ and $\cF\subseteq\R^\cX$ a hypothesis class of real-valued functions. Consider an unknown concept $f:\cX\mapsto\R$ (not necessarily in $\cF$) and a training set $\{(x_i,f(x_i)\}_{i=1}^T$, where $x_i\sim Q$. Suppose that we choose a hypothesis $h\in\cF$ with small loss on each element of the training set, i.e., $|h(x_i)-f(x_i)|<\eta$ for all $i=1,\cdots,T$. Then with probability at least $1-\delta$ over the sample it holds
    \begin{equation}
{\rm Pr}_{x\sim Q}( |h(x)-f(x)|< \eta + \gamma)\geq 1-\epsilon,        
\end{equation}
i.e., the hypothesis has small loss also on unseen instances of the random variable $x$, provided that
\begin{equation}\label{eq:sample_complexity_learning_from_interpolation}
    T\geq T_0(\epsilon, \gamma, \delta) = O\left(\frac1\epsilon\left(d\log^2\frac{d}{\gamma \epsilon} + \log\frac{1}{\delta}\right)\right),
\end{equation}
where $d = {\rm fat}_{\cF}\left(\frac{\gamma}{8}\right)$, for all $\eta,\epsilon,\gamma,\delta>0$.
\end{theorem}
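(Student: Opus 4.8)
The plan is to obtain this interpolation bound from a one-sided, two-scale uniform-convergence statement, following the scale-sensitive machinery of Alon et al.~\cite{Alon1997} and Bartlett et al.~\cite{Bartlett1998,Anthony2000}. First I would pass from $\cF$ to the auxiliary error class
\begin{equation}
\cG := \{x\mapsto \abs{h(x)-f(x)}\; : \; h\in\cF\}.
\end{equation}
Since $f$ is a fixed (if unknown) function, translation by $-f$ leaves the fat-shattering dimension unchanged, and post-composition with the $1$-Lipschitz map $\abs{\cdot}$ affects covering numbers only mildly (Lemma~\ref{lemma:lipschitz}); hence $\fdim{\cG}{\gamma/8}$ is controlled by the fat-shattering dimension of $\cF$ at a comparable scale, up to constants. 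Under this translation the interpolation hypothesis $\abs{h(x_i)-f(x_i)}<\eta$ says exactly that the chosen $g\in\cG$ has a vanishing empirical count of the event $\{g\geq\eta\}$, while the desired conclusion becomes an upper bound on the true probability mass of $\{g\geq\eta+\gamma\}$.

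The core is then the purely classical statement that, uniformly over $g\in\cG$ with probability at least $1-\delta$ over the sample,
\begin{equation}
\abs{\{i:\, g(x_i)\geq\eta\}}=0 \quad\Longrightarrow\quad \Pr_{x\sim Q}(g(x)\geq\eta+\gamma)\leq\epsilon .
\end{equation}
I would establish this by the standard symmetrization argument adapted to the zero-error, two-scale regime: bound the probability that some $g$ has vanishing $\{\geq\eta\}$-count on the first half of a double sample of size $2T$ yet true mass of $\{\geq\eta+\gamma\}$ exceeding $\epsilon$, by twice the probability that some $g$ has vanishing $\{\geq\eta\}$-count on the first half but a $\{\geq\eta+\gamma\}$-count of order $\epsilon T$ on the second half, and then condition on the $2T$ points and randomize over swaps of the two halves. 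Because the two thresholds are separated by the margin $\gamma$, replacing $\cG$ restricted to the double sample by a $\gamma/2$-cover cannot confuse the events $\{g\geq\eta\}$ and $\{g\geq\eta+\gamma\}$, so a union bound over the cover followed by a Hoeffding-type bound on each permutation probability closes the argument.

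The main obstacle, and the origin of the $d\log^2(d/(\gamma\epsilon))$ dependence, is the covering-number estimate for $\cG$ on the $2T$ sample points. Here I would invoke the fat-shattering-to-covering-number bound of Alon et al.~\cite{Alon1997} (in the form used in~\cite{Bartlett1998,Anthony1999}), which yields $\log\cN_\infty(\gamma/2,\cG,2T)=O\!\left(d\,\log^2(T/\gamma)\right)$ with $d=\fdim{\cG}{\gamma/8}$; the factor of four between the covering scale $\gamma/2$ and the shattering scale $\gamma/8$ is precisely the one supplied by that lemma, which is why the theorem is stated with $d=\fdim{\cF}{\gamma/8}$. Balancing the exponential permutation bound $e^{-\Omega(\epsilon T)}$ against this cover size and solving for the smallest $T$ making the total failure probability at most $\delta$ gives
\begin{equation}
T = O\!\left(\frac1\epsilon\left(d\log^2\frac{d}{\gamma\epsilon}+\log\frac1\delta\right)\right),
\end{equation}
as claimed. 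The one point to watch is that $\eta$ enters only through the fixed thresholds of the two events and never into the complexity term, which is exactly why the bound holds for all $\eta>0$ with a sample complexity independent of $\eta$.
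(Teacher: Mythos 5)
The paper never proves this statement: it is imported verbatim as Corollary~3.3 of Anthony and Bartlett~\cite{Anthony2000}, whose original proof is exactly the two-scale argument you outline (double sample, swap-group symmetrization, a $\gamma/2$-cover in $\ell_\infty$ on the double sample that cannot confuse the thresholds $\eta$ and $\eta+\gamma$, the Alon-et-al.\ covering bound $\log\cN_\infty(\gamma/2,\cdot,2T)=O\left(d\log^2(T/\gamma)\right)$ with $d=\fdim{\cF}{\gamma/8}$, and inversion of $\epsilon T\gtrsim d\log^2(T/\gamma)+\log(1/\delta)$). Your proposal is therefore correct and follows essentially the same route as the paper's source, with one cosmetic wrinkle worth noting: you need not (and cannot cleanly) control $\fdim{\cG}{\gamma/8}$ itself, since post-composition with $\abs{\cdot}$ is not monotone and so does not directly preserve fat-shattering, but this is harmless because your argument only ever uses covering numbers of $\cG$ restricted to the double sample, and these are bounded by those of $\cF$ via translation invariance and the Lipschitz property before Alon et al.\ is applied to $\cF$ directly.
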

The key difference of this Theorem with respect to~\cite{Alon1997,Bartlett1998} is that it provides convergence guarantees for the point-wise loss, i.e., for all $x$, rather than for its expectation. This requirement is necessary for quantum state (measurement, and channel) learning à la Aaronson, where one wants to identify an operator that reproduces the observed statistics with small error on each input. Indeed, this problem corresponds to that of \emph{learning a model of probability} introduced by Kearns and Schapire~\cite{Kearns1994}. In contrast, the circuit learning tasks introduced in Sec.~\ref{sec:circuitTrain} only request a convergence guarantee for the expected loss, thus corresponding to the problem of \emph{learning a decision rule} also introduced by Kearns and Schapire~\cite{Kearns1994}.

We are now ready to state and prove our theorem, relying on a useful trick by~\cite{Aaronson2007} to learn from measurement outcomes rather than probability values.
\begin{theorem}(Theorem~\ref{thm:agnostic_pfunction_learning} repeated) 
Let $\cX$ be a sample space, $Q$ a distribution on $\cX$ and $\cF\subseteq[0,1]^\cX$ a hypothesis class. Consider an unknown p-concept $f:\cX\mapsto[0,1]$ (not necessarily in $\cF$) and a training set $\{(x_i,b_i)\}_{i=1}^T$, where $x_i\sim Q$ and $b_i=1$ with probability $f(x_i)$ or $b_i=0$ otherwise. Suppose that we choose a hypothesis $h\in\cF$ with small quadratic loss on each element of the training set, i.e., $(h(x_i)-b_i)^2<\eta$ for all $i=1,\cdots,T$. Then with probability at least $1-\delta$ over the sample it holds
\begin{equation}
    \Pr_{x\sim Q}((h(x)-f(x))^2<\eta+\gamma)\geq 1-\epsilon
\end{equation}
for all $\eta,\gamma,\epsilon,\delta>0$, provided that 
\begin{equation}
    T\geq T_0(\epsilon,\gamma,\delta,\cF) = O\left(\frac1\epsilon\left(d\log^2\frac{d}{\gamma \epsilon} + \log\frac{1}{\delta}\right)\right),
\end{equation}
where $d = {\rm fat}_{\cF}\left(\frac{\gamma}{8}\right)$.
\end{theorem}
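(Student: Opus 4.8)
The plan is to reduce the claim to the agnostic interpolation theorem for real-valued functions of Anthony and Bartlett (Theorem~\ref{thm:learning_from_interpolation}), which already delivers point-wise generalization, and to overcome the fact that we observe random bits $b_i$ rather than the probabilities $f(x_i)$ by Aaronson's device of promoting the measurement outcome to part of the input.

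First I would linearize the empirical hypothesis: since $b_i\in\{0,1\}$ and $h(x_i)\in[0,1]$, the quadratic bound $(h(x_i)-b_i)^2<\eta$ is equivalent to $|h(x_i)-b_i|<\sqrt\eta$. Next I would pass to the augmented sample space $\cX':=\cX\times\{0,1\}$ with the joint law $Q'(x,b)=Q(x)\,\mathrm{Bern}(f(x))(b)$, on which the target becomes the \emph{deterministic} concept $\phi(x,b):=b$, and lift $\cF$ to $\tilde\cF:=\{(x,b)\mapsto h(x):h\in\cF\}$, whose elements ignore the extra coordinate. Because a dummy coordinate cannot aid shattering, $\fdim{\tilde\cF}{\gamma'}=\fdim{\cF}{\gamma'}$ for every $\gamma'$. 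The training pairs $\{(x_i,b_i)\}$ form an i.i.d.\ sample from $Q'$, and the chosen $h$ interpolates $\phi$ on them with margin $\sqrt\eta$, so Theorem~\ref{thm:learning_from_interpolation} applies and gives, with probability at least $1-\delta$, that $\Pr_{(x,b)\sim Q'}(|h(x)-b|<\sqrt\eta+\gamma')\geq 1-\epsilon'$, at a sample cost controlled by $\fdim{\cF}{\gamma'/8}$.

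It then remains to turn this joint statement about the bit-prediction error into the desired point-wise statement about $(h(x)-f(x))^2$. Fixing $x$ and writing $p=f(x)$, $q=h(x)$, I would argue by cases: if $(q-p)^2\geq\eta+\gamma$ then either $q\geq\sqrt\eta+\gamma'$ with $1-p\geq\sqrt{\eta+\gamma}$, or $1-q\geq\sqrt\eta+\gamma'$ with $p\geq\sqrt{\eta+\gamma}$ (using $q\in[0,1]$ and choosing $\gamma'\leq\sqrt{\eta+\gamma}-\sqrt\eta$). In either case the complementary bit value, which has probability at least $\sqrt{\eta+\gamma}$, produces an outcome at distance at least $\sqrt\eta+\gamma'$ from $q$, so $\Pr_b(|q-b|\geq\sqrt\eta+\gamma')\geq\sqrt{\eta+\gamma}$. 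Averaging over $x$ against the interpolation bound yields $\Pr_{x\sim Q}((h(x)-f(x))^2\geq\eta+\gamma)\leq\epsilon'/\sqrt{\eta+\gamma}$, and a suitable choice of $\epsilon'$ together with $\gamma'$ of order $\gamma$ (legitimate in the regime of small thresholds, where $\sqrt{\eta+\gamma}+\sqrt\eta\leq 1$) recovers the failure probability $\epsilon$ and the dimension $d=\fdim{\cF}{\gamma/8}$.

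The hard part is precisely this final conversion. A single Bernoulli sample carries no information about $f(x)$ beyond its mean, so the control that Theorem~\ref{thm:learning_from_interpolation} provides on the bit-prediction error only translates into control on the probability-estimation error after paying a factor that scales with the achievable accuracy; the reparametrization of $(\epsilon',\gamma')$ needed to absorb this factor, while keeping $d=\fdim{\cF}{\gamma/8}$ and the $1/\epsilon$ prefactor up to the logarithmic and constant terms hidden in the $O(\cdot)$, is the delicate bookkeeping the proof must carry out.
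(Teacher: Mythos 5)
Your reduction to Theorem~\ref{thm:learning_from_interpolation} via the augmented sample space $\cX\times\{0,1\}$ is the right skeleton and matches the paper's strategy, but the step you yourself flag as ``delicate bookkeeping'' is a genuine loss that your route cannot recover. Working with the linear loss $|h(x)-b|$ forces you to convert the joint guarantee $\Pr_{(x,b)}(|h(x)-b|\geq\sqrt{\eta}+\gamma')\leq\epsilon'$ into a point-wise statement about $f$ by the Markov-type case analysis you describe, and that conversion is tight: a bad $x$ with $|h(x)-f(x)|\geq\sqrt{\eta+\gamma}$ is only guaranteed to produce a large bit-prediction error with probability $\sqrt{\eta+\gamma}$ over $b$ (take $f(x)=1-\sqrt{\eta+\gamma}$ and $h(x)=1$ to see this is attained), so you must choose $\epsilon'\leq\epsilon\sqrt{\eta+\gamma}$. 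Since $\eta$ may be arbitrarily small, this inflates the sample complexity to $\Omega\bigl(\tfrac{1}{\epsilon\sqrt{\gamma}}\,d\log^2(\cdot)\bigr)$, polynomially worse in $1/\gamma$ than the claimed $O\bigl(\tfrac1\epsilon(d\log^2\tfrac{d}{\gamma\epsilon}+\log\tfrac1\delta)\bigr)$, whose dependence on $\gamma$ is only logarithmic (apart from the scale at which the fat-shattering dimension is evaluated). The obstruction is structural rather than a matter of reparametrization: the linear loss is not a proper scoring rule, so small expected linear loss against the bit does not force $h(x)$ close to $f(x)$, and the per-$x$ probability argument is the best available.

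The paper's proof sidesteps this by keeping the quadratic loss and folding it into the hypothesis class: it applies Theorem~\ref{thm:learning_from_interpolation} to $\cF^*=\{(x,b)\mapsto(h(x)-b)^2\}$ with the identically-zero target $f^*\equiv 0$, so the interpolation guarantee directly controls $\Pr_{x,b}((h(x)-b)^2<\eta+\gamma)$. The passage to the point-wise statement in $x$ is then lossless thanks to the exact identity $\E{b\sim f(x)}{(h(x)-b)^2}=(h(x)-f(x))^2+f(x)(1-f(x))\geq(h(x)-f(x))^2$, and the complexity of the lifted class is controlled by ${\rm fat}_{\cF^*}(\alpha)\leq 2\,\fdim{\cF}{\alpha}$. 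If you want to salvage your argument, replace your lifted class $\tilde\cF$ and target $\phi(x,b)=b$ by this loss-composed class and zero target; the rest of your setup goes through unchanged and no factor of $1/\sqrt{\eta+\gamma}$ ever appears.
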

\begin{proof}
    Following~\cite[Suppl. Mat. Theorem 1]{Aaronson2007}, starting from $\cF$ we construct a suitable class of functions that take as input both random variables $x\in \cX$ and $b\in[0,1]$ as follows:
    \begin{equation}
        \cF^* = \{h^*:(x,b)\mapsto (h(x)-b)^2\}.
    \end{equation}
    We can now apply Theorem~\ref{thm:learning_from_interpolation} to the sample space $\cX\times\{0,1\}$, with probability distribution $Q\cdot f(x)$ and true concept $f^*(x,b)=0$.  Suppose that a minimization algorithm finds a concept $h^*\in\cF$ with $\eta$-small loss on each element of the training set, i.e., for all $i=1,\cdots,T$ it holds
    \begin{equation}
        \eta>|h^*(x_i,b_i)-f^*(x_i,b_i)| = (h(x_i)-b_i)^2.
    \end{equation}
Then, if 
\begin{equation}
    T\geq T_0(\epsilon,\gamma,\delta,\cF^*) = O\left(\frac1\epsilon\left(d\log^2\frac{d}{\gamma \epsilon} + \log\frac{1}{\delta}\right)\right),
\end{equation}
with $d = {\rm fat}_{\cF^*}\left(\frac{\gamma}{8}\right)$, with probability at least $1-\delta$ over the sample it holds
\begin{align}
   1-\epsilon &\leq \Pr_{x\sim Q, b\sim f(x)}( |h^*(x,b)-f^*(x,b)|<\eta+\gamma)\\
   &= \Pr_{x\sim Q}((h(x)-B)^2<\eta+\gamma,B=b)\\
   &= \Pr_{x\sim Q}(\cup_{b\in\{0,1\}}(h(x)-b)^2<\eta+\gamma)\\
   &\leq \Pr_{x\sim Q}(\E{b\sim f(x)}{(h(x)-b)^2}<\eta+\gamma)\\
   &\leq \Pr_{x\sim Q}((h(x)-f(x))^2<\eta+\gamma),
\end{align}
 where the first inequality follows from Theorem \ref{thm:learning_from_interpolation}, the first equality from conditioning on the binary random variable $B$ with distribution $f(x)$, and the second equality from the fact that the events $B=0$ and $B=1$ are disjoint. Furthermore, the second and third inequalities follow from the fact that $(E_1\Rightarrow E_2)$ implies $\Pr(E_1)\leq\Pr(E_2)$ for any two events $E_1$, $E_2$. In particular, for the second inequality we observed that
 \begin{equation}
     (h(x)-b)^2<\eta+\gamma\; \forall b\in\{0,1\} \Rightarrow \E{b\sim f(x)}{(h(x)-b)^2}<\eta+\gamma,
 \end{equation}
while for the third one we have
 \begin{align}
     \eta+\gamma > \E{b\sim f(x)}{(h(x)-b)^2} &= f(x) (1-h(x))^2 + (1-f(x)) h(x)^2 \\
     &= (h(x)-f(x))^2 +f(x) - f(x)^2 \geq (h(x)-f(x))^2,
 \end{align}
 since $f(x)\in[0,1]$. Finally, we observe that $d = {\rm fat}_{\cF^*}\left(\alpha\right) \leq 2 {\rm fat}_{\cF}\left(\alpha\right)$, obtaining the desired sample complexity.
 \end{proof}

}



\bibliographystyle{quantum}
   \bibliography{library.bib}

\end{document}